\newtheorem{theorem}{Theorem}[section]
\newtheorem{lemma}[theorem]{Lemma}
\newtheorem{corollary}[theorem]{Corollary}
\newtheorem{proposition}[theorem]{Proposition}
\newtheorem{remark}[theorem]{Remark}
\newtheorem{definition}[theorem]{Definition}
\numberwithin{equation}{section}
\begin{document}

\title{\Large\bf Robust Sparse Signal Recovery with Outliers: A Hard Thresholding Pursuit Approach Based on LAD
\footnotetext{\hspace{-0.35cm}
\endgraf $^\ast$\, Corresponding author.
\endgraf{$^1$ School of Mathematics and Statistics, Lanzhou University, Lanzhou, 730000, People's Republic of China \\
\indent $^2$ School of Mathematics and Statistics, Hubei Normal University, Huangshi, 435002, People's Republic of China}
\endgraf{ E-mail addresses: xujiao21@lzu.edu.cn, xujiao@hbnu.edu.cn (J. Xu), lp@lzu.edu.cn (P. Li), bzheng@lzu.edu.cn (B. Zheng)}}}
\author{Jiao Xu$^{1,2}$, Peng Li$^{1}$, Bing Zheng$^{1*}$}
\date{}

\maketitle
\textbf{Abstract.} Recovering a sparse signal from outlier-contaminated measurements is a fundamental challenge in many applications. While existing algorithms predominantly address scenarios with bounded noise or assume known signal sparsity, few methods tackle the more practical problem of sparse recovery from gross outliers without prior knowledge of sparsity. To bridge this gap, we study the sparsity-constrained Least Absolute Deviations (LAD) minimization problem. This paper proposes the Graded Fast Hard Thresholding Pursuit (GFHTP$_1$) algorithm with a quantile-truncated step size for $\ell_1$-loss minimization. In contrast to most state-of-the-art methods, our GFHTP$_1$ requires no prior knowledge of the signal's sparsity level. We establish a theoretical convergence analysis under mild conditions and further prove that an $s$-sparse signal can be recovered exactly within at most $s$ iterations. To our knowledge, these results provide the first efficient recovery guarantees for sparse signal reconstruction from outlier-corrupted measurements without a sparsity prior. Numerical experiments demonstrate that GFHTP$_1$ consistently outperforms competing algorithms in robustness to varying signal sparsity and outlier support size, while also achieving less computational time. 

\textbf{Keywords and Phrases.} Outlier removal; Sparse signal recovery; Least absolute deviations; Graded Hard thresholding pursuit; Quantile truncation.

\textbf{MSC 2020}. {65F10, 65J20, 15A29, 94A12}

\section{Introduction}\label{s1}
\hskip\parindent

\subsection{Problem Setup}\label{s1.1}
\hskip\parindent

Outliers are ubiquitous \cite{huber2011robust,zoubir2018robust,maronna2019robust}, for example sensor calibration \cite{li2016low}, face recognition \cite{de2003framework}, video surveillance \cite{li2004statistical}, and
their magnitudes can be arbitrarily large \cite{clason2012semismooth}.  In this paper, we investigate the problem of recovering sparse signals from linear measurements that are corrupted by a constant fraction of outliers with arbitrary magnitudes. Specifically, given a fixed measurement matrix $\mathbf{A}\in \mathbb{R}^{m\times n}$ (with $m\ll n$), our task is to identify an $s$-sparse $\mathbf{x}_0$ that satisfies:
\begin{equation}\label{problem1}
\mathbf{b}= \mathbf{Ax}_0+\boldsymbol{\eta},
\end{equation}
where $\mathbf{b}\in \mathbb{R}^m$ is the known measured response vector, and $\boldsymbol{\eta}$ is the unknown outliers with support $T$ and cardinality $|T|= pm\ll m$. The nonzero values of $\boldsymbol{\eta}$ are significantly larger than the nonzero components of the signal $\mathbf{x}_0$.

Our goal is to recover ${\bf x}_0$ from the observations \eqref{problem1}, noting that the outliers (or the residual) $\boldsymbol{\eta}=\mathbf{b}-\mathbf{Ax}_0$ is sparse, i.e., $\|\mathbf{b}-\mathbf{Ax}_0\|_0<m$. Here the notation $\|{\bf x}\|_0=|\{j\in[[n]]: |x_j|\neq 0\}|$ denotes the $\ell_0$-norm (count of nonzero entries). This leads to the following sparsity-constrained problem:
\begin{equation}\label{LADproblem.L0}
\min_{\mathbf{x}\in \mathbb{R}^n} \|\mathbf{b-Ax}\|_0, \ \mathrm{s.t.} \ \|\mathbf{x}\|_0\leq s
\end{equation}
with $s$ being an integer estimate of the sparsity level of $\mathbf{x}$. Since optimizing the $\ell_0$-norm is NP-hard, we relax the residual term to the $\ell_1$-norm to obtain the computationally feasible sparse-constrained LAD (Sparsity-LAD) model:
\begin{equation}\label{LADproblem}
\min_{\mathbf{x}\in \mathbb{R}^n} \|\mathbf{b-Ax}\|_1, \ \mathrm{s.t.} \ \|\mathbf{x}\|_0\leq s.
\end{equation}

It is well known that the traditional LS method $\min_{\mathbf{x}\in \mathbb{R}^n} \|\mathbf{b-Ax}\|_2$ is recognized for its optimality in scenarios where the measurement noise follows a Gaussian distribution. However, in practical applications, noise often exhibits non-Gaussian characteristics. The LS method's effectiveness is contingent upon assumptions regarding the noise's structure or standard deviation \cite{Bickel2009Lasso}, rendering it less reliable in the presence of impulsive noise, outliers, and other anomalies. In such cases, the LAD \cite{Bassett1978LAD} emerges as a valuable alternative. LAD is a statistical optimality criterion and a robust optimization technique that seeks to approximate data by minimizing the sum of the absolute values of the residuals \cite{Bloom1980LAD}. Dielman \cite{Dielman2005LAV} has shown that LAD outperforms the LS method in scenarios involving impulsive noise, sparse noise, and outliers, primarily due to its robust nature.  Unlike the LS method, which overweights large residuals, LAD treats all observations equally, making it suitable for data with arbitrary outliers.

Our primary objective in this paper is to develop an efficient, fast algorithm for solving this nonsmooth optimization problem by leveraging the hard thresholding pursuit (HTP) technique.

\subsection{Related Work}\label{s1.2}
\hskip\parindent

LAD has been applied in the development of robust methods across various domains, including statistical communities \cite{Dielman2005LAV}, compressive sensing signal reconstruction \cite{Paredes2011CS}, sparse representation-based face recognition \cite{Wagner2012FR}, channel estimation \cite{Jiang2013CE}, and image denoising \cite{Nikolova2004IN}. However, LAD solvers that are sparsely constrained still face critical limitations. 
Existing methods for sparse signal recovery under outliers via LAD can be categorized based on their ability to address three core challenges: outliers robustness, unknown sparsity, and computational efficiency. Below, we review relevant work and highlight their limitations relative to our research goals.

Based on the underlying regularization strategy, existing methods fall into two categories: (i) regularized-minimization methods (the relaxed methods), which correspond to relaxed convex or nonconvex regularization minimization; (ii) hard threshold-based methods, which correspond to the $\ell_0$ regularization (or sparse constraint).

 We first review the regularized-minimization methods. Several scholars have also proposed convex or nonconvex relaxation methods for solving the LAD problem. For instance, Yang and Zhang \cite{YangZ2011LAD} first introduced the regularized (or penalized) LAD (RLAD/PLAD) model, which incorporates both the $\ell_1$ regularized function $\|\mathbf{x}\|_1$ and the $\ell_1$ loss function $\|\mathbf{b-Ax}\|_1$ into the objective function. They designed a solving algorithm via the alternating direction method of multipliers (ADMM).
In $2013$, Wang \cite{Wang2013LAD} conducted a theoretical analysis of this convex model, which is capable of handling cases where the error distribution is unknown or exhibits heavy tails, even for Cauchy distributions. The corresponding estimator demonstrates near-optimal performance with high probability. Notably, Li et al. \cite{Li2020LAD} have proposed a nonconvex minimization method with LAD constraint (Nonconvex LAD), which aims to solve an $\ell_1-\alpha\ell_2$ minimization problem. They also
contributed to the theoretical analysis of the $\ell_1-\alpha\ell_2$ minimization model. 
However, the theoretical analysis provided for the proposed model is not applicable to cases with outliers. Recently, Xu et al. \cite{Xu2024RLAD} have presented a theoretical analysis for the unconstrained version of the nonconvex $\ell_1-\alpha\ell_2$ minimization model (Nonconvex RLAD), which is valid when observations are corrupted by the outliers. Nevertheless, numerical experiments reveal that the model's performance degrades significantly when the outlier rate is high (see \cite[Table 5]{Xu2024RLAD}). 

To better address the drawback, another line focuses on a hard thresholding-based algorithm, which primarily focuses on sparse recovery via dense noise environments. To solve Sparsity LAD problem \eqref{LADproblem}, 
Li et al. \cite{Li2023AIHT} firstly introduced an adaptive iterative hard thresholding (AIHT) algorithm, which performs subgradient descent followed by a hard thresholding operator $\mathcal{H}_s(\mathbf{x})$. This operator retains the largest $s$ elements of $\mathbf{x}$ in magnitude and sets the remainder to zero.
Their AIHT algorithm is globally convergent under bounded noise but fails in the presence of outliers. Additionally, it requires prior knowledge of the sparsity $s$—a major drawback in practical applications. 
Recently, Xu et al. \cite{Xu2024PSGD} proposed the Projected Subgradient Descent (PSGD) algorithm to solve the Sparsity LAD problem \eqref{LADproblem} in the presence of both outliers and bounded noise, providing a full convergence analysis. Nevertheless, the selection of the step size in this algorithm may not be optimal, as it relies on the actual signal--an issue that can lead to inaccuracies in its application. These sparsity-dependent methods, however, fail to fully exploit the constraint $\|\mathbf{x}\|_0 \leq s$. Consequently, their performance degrades when the sparsity level $s$ is high, often preventing exact recovery of the true signal's support.

\subsection{Motivation and Contributions}\label{s1.3}
\hskip\parindent

While existing hard thresholding-based algorithms can manage outliers, they possess significant limitations. First, these methods require knowledge of the true sparsity level—a prior that is typically unavailable in practice. Second, their robustness deteriorates when the sparsity level is high. Third, focusing on the specific PSGD algorithm \cite{Xu2024PSGD}, its step size strategy depends on the actual signal, making it difficult to adapt to real-world applications. Furthermore, the absence of a concrete stopping criterion in \cite{Xu2024PSGD} presents a practical hurdle. Consequently, there is a clear need for a more adaptable step size rule and a practical, well-defined stopping criterion.

To address these challenges, we propose a novel algorithm solving Sparsity LAD \eqref{LADproblem}. Our approach is built on three key strategies: (i) Enhanced Robustness to Sparsity: We employ a two-phase procedure: a solving step that identifies a candidate support set, followed by a pursuit step that refines the signal estimate within this identified support.(ii) Signal-Independent Step Size: To eliminate the reliance on prior knowledge of the true signal, we introduce a novel truncated step size rule based on the quantile. (iii) Elimination of the Sparsity Prior: We further incorporate a graded hard-thresholding mechanism within the pursuit step, removing the need for the true sparsity level as an input parameter.

Motivated by the aforementioned discussion, we propose both  {\bf fast hard thresholding pursuit (FHTP$_1$)} and   {\bf graded fast hard thresholding pursuit (GFHTP$_1$)} algorithms for the sparsity-constrained LAD \eqref{LADproblem} with the observations \eqref{problem1}. This paper's contributions are as follows:
\begin{itemize}
  \item[(a)] {\bf Parameter-Free Algorithm with Unknown Sparsity}: GFHTP$_1$ integrates FHTP's inner-iteration acceleration with GHTP's graded support growth (support size $= k$ at outer iteration $k$), eliminating the need for prior knowledge of 
$s$. It uses a {\bf truncated adaptive step size} (dependent only on small residual components, not the true signal) to suppress outliers, filling the gap of HTP-based methods for LAD.
  \item[(b)] {\bf Rigorous Convergence Analysis}: We analyze convergence for two signal types:
\begin{itemize}
  \item[$\bullet$] For general $s$-sparse signals, we establish a linear error bound under the restricted isometry property (RIP$_1$).
  \item[$\bullet$] For `flat' signals satisfying $x_1^*\leq \lambda x_s^*$ ($\lambda\geq 1, x_j^*$ is the non-increasing rearrangement of $|\mathbf{x}_0|$), we prove {\bf exact recovery at the 
$s$-th outer iteration} ($\mathbf{x}^s= \mathbf{x}_0$) with high probability.
\end{itemize}
  \item[(c)] {\bf Practical Step Size, Stopping Criterion, and Superior Performance}: We handle outliers by incorporating a truncated adaptive step size (based on\\ $\|(\mathbf{b-Ax}^{k})\odot (\mathbb{I} _{\{|b_i-(\mathbf{Ax}^{k})_i|\leq \theta_\tau(|\mathbf{b-Ax}^{k}|)\}})_{i=1}^m\|_1\leq \epsilon_{\mathrm{outer}}$). Following this idea, we also design a highly efficient stopping criterion that promotes fast and precise convergence. Numerical experiments demonstrate GFHTP$_1$ outperforms PSGD and AIHT in terms of success rates.
  \item[(d)]{\bf Key Skill and Proposition for Our Theoretical Analysis.} In order to remove the outliers in the theoretical analysis, we establish a key sandwich inequality, which provides the lower and upper bounds of $\|(\mathbf{b-Ax}^{k})\odot (\mathbb{I} _{\{|b_i-(\mathbf{Ax}^{k})_i|\leq \theta_\tau(|\mathbf{b-Ax}^{k}|)\}})_{i=1}^m\|_1$.
 Additionally, we also introduce a key proposition, which shows that the support $S^k$ in the $k$-iteration is a subset of the true support $S$ of the signal $\mathbf{x}_0$. This key proposition enables our theoretical results for `flat' signals. 
\end{itemize}

Detailed comparisons with existing methods are presented in Table \ref{ComparisonDifferentMethods}. Our algorithms and main theoretical results---Theorem \ref{thm3.1}, Corollary \ref{FHTP_theoretical}, and Theorem \ref{propGFHTP} in Section \ref{s7}---offer three key advantages: (i) Our convergence guarantees hold even under outlier contamination, whereas Nonconvex LAD and AIHT are effective only for bounded noise, and RLAD/Nonconvex RLAD only handle symmetric outliers. (ii) Our GFHTP$_1$ eliminates the sparsity-dependent requirements present in both AIHT and PSGD.

\begin{table}[!ht]
\centering
\caption{Comparison between our results and existing methods in LAD. Here, ``$-$'' indicates no available result, and $\mathbf{e}$ denotes bounded noise. Symmetric noise satisfies $\mathbb{P}(e_i>0)=\mathbb{P}(e_i<0)=0.5$ (see \cite[Remark 2.1]{Xu2024RLAD}).} 
\label{ComparisonDifferentMethods}
\resizebox{\textwidth}{!}{%
\begin{tabular}{|c|c|c|c|c|}
\hline
\makecell[c]{Objective} & 
\makecell[c]{Methods} & 
\makecell[c]{Noise} & 
\makecell[c]{Sparsity} &
\makecell[c]{No. of Iters.} \\
\hline
\multirow{2}{*}{Regularization} 
& RLAD \cite{Wang2013LAD}
  & Symmetric noise 
  & No  
  & $-$  \\
  & Noncovex LAD \cite{Li2020LAD}
  & Bounded noise 
  & No  
  & $-$ \\
& Nonconvex RLAD \cite{Xu2024RLAD}
 & Symmetric noise 
  & No  
  & $-$  \\
\hline
\multirow{4}{*}{Sparsity Constraint} 
& AIHT \cite{Li2023AIHT} 
  & Bounded Noise 
  & Need      
  & $\mathcal{O}(\log(1/\varepsilon))$ \\[0.3em]
& PSGD \cite{Xu2024PSGD} 
  & \makecell[c]{Bounded Noise\\Outliers} 
  & Need       
  & \makecell[c]{$\mathcal{O}(\log(1/\varepsilon))$\\$\mathcal{O}(\log(1/\varepsilon))$} \\
  & FHTP$_1$ (Ours) 
  & Outliers 
  & Need       
  & $\mathcal{O}(\log(1/\varepsilon))$ \\
& GFHTP$_1$ (Ours) 
  & Outliers 
  & No       
  & $\mathcal{O}(s)$ \\
\hline
\end{tabular}%
}
\end{table}

\subsection{Organization and Notations}\label{s1.4}
\hskip\parindent

The remainder of this paper is structured as follows. Section \ref{s3} displays the solving algorithms. Section \ref{s7} delves into the theoretical analysis of the GFHTP$_1$ algorithm. 
Section \ref{sec3roadmap} gives the roadmap and keystone of our proofs. 
Section \ref{s4} examines the numerical performance of the proposed algorithms through experiments. Section \ref{s5} concludes the paper with a summary of the findings. Appendix \ref{secappendixa} presents the proof of the general sparse signal recovery case. Appendix \ref{secappendixb} provides the theoretical proof for the general sparse signal recovery case.

Throughout this paper, we use the following notations. Matrices are denoted by boldface capital letters, such as $\mathbf{A}$, while vectors are represented by boldface lowercase letters, for example, $\mathbf{a}$. Scalars are indicated by regular lowercase letters, such as $a$. The sign function $\mathrm{sign}(\cdot)$ is defined as $\mathrm{sign}(a)= a/|a|$ for $a\neq 0$, with $\mathrm{sign}(0)= 0$. For any positive integer $n$, let $[[n]]$ represent the set $\{1,\cdots, n\}$. The notation $\mathbf{x}_\Omega\in \mathbb{R}^n$ refers to a vector where its elements are equal to $\mathbf{x}$ for indices within the set $\Omega$, and zero otherwise. Let $\Omega^c$ denote the complement of the index set $\Omega$, which is defined as $\Omega^c= [[n]]\setminus\Omega$. Let $\mathbf{A}_W$ denote the submatrix of $\mathbf{A}$ obtained by keeping the rows of $\mathbf{A}$ with indices in the set $W$. The indicator function $\mathbb{I}_{B}= 1$ if the event $A$ holds, and $\mathbb{I}_{B}= 0$ otherwise. The notation $\odot$ refers to the Hadamard product, and $\Phi$ is the cumulative distribution function of the standard Gaussian distribution. We use $\theta_{\tau}$ to denote the $\tau$-th quantile. We use $\mathbf{x}^*\in \mathbb{R}_+^n$ to represent the non-increasing rearrangement of the original signal $\mathbf{x}_0\in \mathbb{R}^n$, meaning $x_1^*\geq x_2^*\geq \cdots\geq x_n^*\geq 0$. There exists a permutation $\nu$ of $[[n]]$ such that $x_j^*= |x_{\nu(j)}|$ for all $j\in [[n]]$. For two index sets $\Omega$ and $\Omega'$, $\Omega\bigtriangleup \Omega'$ denotes the symmetric difference of these sets, which is the union of their respective differences, i.e., $\Omega\bigtriangleup \Omega'= (\Omega\setminus \Omega')\cup(\Omega'\setminus \Omega)$.

\section{Solving Algorithms}\label{s3}
\hskip\parindent

In this subsection, we design two efficient solving algorithms for the model \eqref{LADproblem}. The first one needs the sparsity prior, while the second one addresses the limitation of requiring prior sparsity information. 

\subsection{Parameter Description}
\hskip\parindent

Before introducing the algorithms, we clarify the key parameters to ensure reproducibility and interpretability:
\begin{itemize}
  \item[(i)] $\tau$ (quantile parameter): The quantile used to truncate outliers, satisfying $p<\tau<1-p$ ($p$ is the outliers rate). Its role is to filter out large residual components (outliers) when calculating the step size, avoiding interference with iterative updates.
  \item[(ii)] $\theta_{\tau}(|\mathbf{b-Ax}^k|)$: The $\tau$-quantile of the absolute residual vector $|\mathbf{b-Ax}^k|$, calculated based on the empirical distribution of the residual.
  \item[(iii)] $\mu_{k,l}$ (adaptive step size coefficient): A positive constant determining the step size scale, recommended to be initialized to 6 (verified by numerical experiments in Section \ref{s4.2}) and adjusted within the range derived from theoretical conditions (see Theorem \ref{thm3.1} and Remark \ref{remark2.3}).
  \item[(iv)] $\mathrm{MaxIt}$: Maximum number of outer iterations, recommended to be set to $\mathrm{ceil}(m/2)$ ($m$ is the measurement dimension) to ensure the support set covers the true sparsity.
  \item[(v)] $\epsilon_{\mathrm{outer}}, \epsilon_{\mathrm{inner}}$: Termination thresholds for outer/inner iterations, controlling the convergence accuracy.
\end{itemize}

\subsection{Fast Hard Thresholding Pursuit (FHTP$_1$) Algorithm}
\hskip\parindent

In this subsection, we design a fast algorithm for solving the nonsmooth sparsity-constrained LAD \eqref{LADproblem}.  To solve this optimization problem, we adopt the following alternating minimization scheme:
\begin{equation}\label{AlternatingMinimization}
\begin{cases}
S^{k+1}=\arg\min_{|S|\leq s}\|\mathbf{A}\mathbf{x}_{S}-\mathbf{b}\|_1,~\text{(Find~the~Candidate~Support)}&\\
\mathbf{x}^{k+1}=\arg\min_{\mathbf{x}:\mathrm{supp}(\mathbf{x})\subset S^{k+1}}\|\mathbf{Ax}-\mathbf{b}\|_1,\text{(Update~the~Sparse~Signal)}
\end{cases}
\end{equation}
The subproblems in the alternating minimization scheme \eqref{AlternatingMinimization} are solved either exactly or approximately, as detailed below:
\begin{itemize}
\item[(i)] Finding the Candidate Support: We update the candidate support via the subgradient descent followed by a hard thresholding operator \begin{equation}\label{FHTP1.CandidateSupport}
\mathbf{u}^{k+1,1}:= \mathcal{H}_{s}(\mathbf{x}^k+t_{k+1,0}\mathbf{A}^\top \mathrm{sign}(\mathbf{b}-\mathbf{Ax}^k)), \quad S^{k+1}= \mathrm{supp}(\mathbf{u}^{k+1,1}),
\end{equation}
where $t_{k+1,0}>0$ is a step size, and $-\mathbf{A}^\top \mathrm{sign}(\mathbf{b}-\mathbf{Ax}^k)$ is the subgradient of the objective function $\|\mathbf{b}-\mathbf{A}\mathbf{x}^k\|_1$ at the current point $\mathbf{x}^{k}$. 
\item[(ii)] Updating the Sparse Signal: The subproblem has no closed-form solution, so we solve it via subgradient descent with restriction to the given support
\begin{equation}\label{FHTP1.PursuitStep}
\mathbf{u}^{k+1,l+1}:= (\mathbf{u}^{k+1,l}+t_{k+1,l}\mathbf{A}^\top \mathrm{sign}(\mathbf{b}-\mathbf{Au}^{k+1,l}))_{S^{k+1}}, \quad \mathbf{x}^{k+1}= \mathbf{u}^{k+1,L+1}.
\end{equation}
\end{itemize}

In summary, our solving algorithm contains the following two iterative steps:
\begin{equation}\label{FHTP1}
\begin{cases}
\mathbf{u}^{k+1,1}:= \mathcal{H}_{s}(\mathbf{x}^k+t_{k+1,0}\mathbf{A}^\top \mathrm{sign}(\mathbf{b}-\mathbf{Ax}^k)), \quad S^{k+1}= \mathrm{supp}(\mathbf{u}^{k+1,1}), &\\
\mathbf{u}^{k+1,l+1}:= (\mathbf{u}^{k+1,l}+t_{k+1,l}\mathbf{A}^\top \mathrm{sign}(\mathbf{b}-\mathbf{Au}^{k+1,l}))_{S^{k+1}}, \quad \mathbf{x}^{k+1}= \mathbf{u}^{k+1,L+1}.
\end{cases}
\end{equation}
Note that the idea of the second step is similar to that of FHTP \cite{Foucart2011HTP}, which solves an LS with a constraint on the given support set. We adjust FHTP to suit the nonsmooth model \eqref{LADproblem}. Therefore we call this algorithm the Fast Hard Thresholding Pursuit (FHTP$_1$) for the $\ell_1$ loss function.
The algorithm is presented in Algorithm \ref{algorithm1.1}.

\begin{algorithm}
\caption{Fast hard thresholding pursuit (FHTP$_1$) for solving \eqref{problem1}}
\label{algorithm1.1}
\begin{algorithmic}[1]
\Require $\mathbf{A}, \mathbf{b}, \mathbf{x}^0, S^0= \mathrm{supp}(\mathbf{x}^0), s, \mathrm{MaxIt}, L, \epsilon_{\mathrm{outer}}, \epsilon_{\mathrm{inner}}$
\Ensure $\mathbf{x}$
        \State {\bf Outer loop:}  \While {$0\leq k\leq \mathrm{MaxIt}$ and $S^{k}\neq S^{k-1}$ and $\|(\mathbf{b-Ax}^{k})\odot (\mathbb{I} _{\{|b_i-(\mathbf{Ax}^{k})_i|\leq \theta_\tau(|\mathbf{b-Ax}^{k}|)\}})_{i=1}^m\|_1> \epsilon_{\mathrm{outer}}$}
        \State Compute $t_{k+1,0}= \mu_{k+1,0}\sqrt{\frac{\pi}{2}}\|(\mathbf{b-Ax}^{k})\odot (\mathbb{I}_{\{|b_i-(\mathbf{Ax}^{k})_i|\leq \theta_\tau(|\mathbf{b-Ax}^{k}|)\}})_{i=1}^m\|_1$.
        \State Compute $\mathbf{u}^{k+1,1}:= \mathcal{H}_s(\mathbf{x}^{k}+t_{k+1,0}\mathbf{A}^\top \mathrm{sign}(\mathbf{b}-\mathbf{Ax}^{k}))$ and $S^{k+1}= \mathrm{supp}(\mathbf{u}^{k+1,1})$. Set $\mathbf{u}^{k+1,0}=\mathbf{x}^{k}$.
        \State {\bf Inner loop:} \While {$1\leq l\leq L$ and $\|\mathbf{u}^{k+1,l}-\mathbf{u}^{k+1,l-1}\|_2/\|\mathbf{u}^{k+1,l-1}\|_2> \epsilon_{\mathrm{inner}}$}
\State Compute $t_{k+1,l}= \mu_{k+1,l}\sqrt{\frac{\pi}{2}}\|(\mathbf{b-Au}^{k+1,l})\odot (\mathbb{I}_{\{|b_i-(\mathbf{Au}^{k+1,l})_i|\leq \theta_\tau(|\mathbf{b-Au}^{k+1,l}|)\}})_{i=1}^m\|_1$.		
\State  Compute $\mathbf{u}^{k+1,l+1}:= (\mathbf{u}^{k+1,l}+t_{k+1,l}\mathbf{A}^\top \mathrm{sign}(\mathbf{b}-\mathbf{Au}^{k+1,l}))_{S^{k+1}}$.
        \State Set $l:= l+1$.
		\EndWhile \ and output $\mathbf{x}^{k+1}:= \mathbf{u}^{k+1,L+1}$.
        \State Set $k:= k+1$.
        \EndWhile
\end{algorithmic}
\end{algorithm}

Next, we design a stopping criterion. We set $\mathrm{MaxIt}= \mathrm{ceil}(m/2)$ and incorporate an appropriate stopping criterion, which is defined as follows:
\begin{itemize}
  \item[(i)] For the inner iteration: $\|\mathbf{u}^{k+1,l+1}-\mathbf{u}^{k+1,l}\|_2/\|\mathbf{u}^{k+1,l}\|_2\leq \epsilon_{\mathrm{inner}}= 10^{-8}$;
  \item[(ii)] For the outer iteration: $S^{k+1}= S^k$ or $\|(\mathbf{b-Ax}^{k+1})\odot (\mathbb{I}_{\{|b_i-(\mathbf{Ax}^{k+1})_i|\leq \theta_\tau(|\mathbf{b-Ax}^{k+1}|)\}})_{i=1}^m\|_1\leq \epsilon_{\mathrm{outer}}= 10^{-4}$.
\end{itemize} 

To conclude this subsection, we analyze the time complexity of Algorithm \ref{algorithm1.1}. It is mainly dominated by computing the hard thresholding operator $\mathcal{H}_s(\cdot)$ and $t_{k+1,l}$, and updating $\mathbf{u}^{k+1,l+1}$. We obtain that the time complexity for computing $\mathcal{H}_s(\cdot)$ is $\mathcal{O}(n\log s)$. Based on matrix multiplication and $\tau$-quantile, the time complexity of calculating $t_{k+1,l}$ and $\mathbf{u}^{k+1,l+1}$ is $\mathcal{O}(sm + m \log m)$ and $\mathcal{O}(mn)$, respectively. Therefore the time complexity is $\mathcal{O}(k_{\mathrm{outer}}(L(mn+m\log m)+n\log s))$, where $k_{\mathrm{outer}}$ denotes the number of outer iterations in Algorithm \ref{algorithm1.1}.

\subsection{Graded Fast Hard Thresholding Pursuit (GFHTP$_{1}$) Algorithm}
\hskip\parindent

The FHTP$_1$ algorithm relies on prior knowledge of sparsity $s$, which limits its practical application. To address this, we propose a graded algorithm, which constructs a sequence of $(k+1)$-sparse vectors $(\mathbf{x}^{k+1})$ with an index set that grows with each iteration:
\begin{equation}\label{GFHTP}
\begin{cases}
 \mathbf{u}^{k+1,1}:= \mathcal{H}_{k+1}(\mathbf{x}^k+t_{k+1,0}\mathbf{A}^\top \mathrm{sign}(\mathbf{b}-\mathbf{Ax}^k)), \quad S^{k+1}= \mathrm{supp}(\mathbf{u}^{k+1,1}),& \\
 \mathbf{u}^{k+1,l+1}:= (\mathbf{u}^{k+1,l}+t_{k+1,l}\mathbf{A}^\top \mathrm{sign}(\mathbf{b}-\mathbf{Au}^{k+1,l}))_{S^{k+1}}, \quad \mathbf{x}^{k+1}= \mathbf{u}^{k+1,L+1}.
\end{cases}
\end{equation}
The algorithm is described in Algorithm \ref{algorithm1.2}.  Note that our graded strategy is similar to that of GHTP \cite{Bouchot2016GHTP}, therefore we call our algorithm the Graded FHTP (GFHTP$_1$) for the $\ell_1$ loss function.

\begin{algorithm}
\caption{Graded fast hard thresholding pursuit (GFHTP$_1$) for solving \eqref{problem1}}
\label{algorithm1.2}
\begin{algorithmic}[1]
\Require $\mathbf{A}, \mathbf{b}, \mathbf{x}^0, S^0= \mathrm{supp}(\mathbf{x}^0), \mathrm{MaxIt}, L, \epsilon_{\mathrm{outer}}, \epsilon_{\mathrm{inner}}$
\Ensure $\mathbf{x}$
        \State {\bf Outer loop:}  \While {$k\leq \mathrm{MaxIt}$ and $\|(\mathbf{b-Ax}^{k})\odot (\mathbb{I}_{\{|b_i-(\mathbf{Ax}^{k})_i|\leq \theta_\tau(|\mathbf{b-Ax}^{k}|)\}})_{i=1}^m\|_1> \epsilon_{\mathrm{outer}}$}
        \State Compute $t_{k+1,0}= \mu_{k+1,0}\sqrt{\frac{\pi}{2}}\|(\mathbf{b-Ax}^{k})\odot (\mathbb{I}_{\{|b_i-(\mathbf{Ax}^{k})_i|\leq \theta_\tau(|\mathbf{b-Ax}^{k}|)\}})_{i=1}^m\|_1$.
        \State Compute $\mathbf{u}^{k+1,1}:= \mathcal{H}_{k+1}(\mathbf{x}^k+t_{k+1,0}\mathbf{A}^\top \mathrm{sign}(\mathbf{b}-\mathbf{Ax}^k))$ and $S^{k+1}= \mathrm{supp}(\mathbf{u}^{k+1,1})$. Set $\mathbf{u}^{k+1,0}=\mathbf{x}^{k}$.
        \State {\bf Inner loop:} \While {$1\leq l\leq L$ and $\|\mathbf{u}^{k+1,l}-\mathbf{u}^{k+1,l-1}\|_2/\|\mathbf{u}^{k+1,l-1}\|_2> \epsilon_{\mathrm{inner}}$}
\State Compute $t_{k+1,l}= \mu_{k+1,l}\sqrt{\frac{\pi}{2}}\|(\mathbf{b-Au}^{k+1,l})\odot (\mathbb{I}_{\{|b_i-(\mathbf{Au}^{k+1,l})_i|\leq \theta_\tau(|\mathbf{b-Au}^{k+1,l}|)\}})_{i=1}^m\|_1$.		
\State  Compute $\mathbf{u}^{k+1,l+1}:= (\mathbf{u}^{k+1,l}+t_{k+1,l}\mathbf{A}^\top \mathrm{sign}(\mathbf{b}-\mathbf{Au}^{k+1,l}))_{S^{k+1}}$.
        \State Set $l:= l+1$.
		\EndWhile \ and output $\mathbf{x}^{k+1}:= \mathbf{u}^{k+1,L+1}$.
        \State Set $k:= k+1$.
        \EndWhile
\end{algorithmic}
\end{algorithm}

For the GFHTP$_1$ algorithm, we adopt its stopping criterion by removing the condition $S^{k+1}= S^k$ from the stopping criterion of the FHTP$_1$ algorithm.

\section{Our Theoretical Results}\label{s7}
\hskip\parindent

In this section, we present the convergence properties of the proposed algorithms, providing clear theoretical support for their effectiveness in sparse signal recovery with outliers.
We first derive error bounds for general sparse signals, then prove exact recovery result for specific sparse signals. 

\subsection{Foundational Assumption: Restricted 1-Isometry Property (RIP$_1$)}
\hskip\parindent

\begin{definition}\label{RIP_def}(\cite[Inequation (1.8)]{Chartrand2008RIP})
A matrix $\mathbf{A}\in \mathbb{R}^{m\times n}$ is said to satisfy the RIP$_1$ of order $s$ if there exists a small constant $\delta_s\in (0,1)$ such that the inequality
\begin{equation}\label{RIP.eq1}
(1-\delta_s)\|\mathbf{x}\|_2\leq \sqrt{\frac{\pi}{2}}\|\mathbf{Ax}\|_1\leq (1+\delta_s)\|\mathbf{x}\|_2
\end{equation}
holds for all vectors $\mathbf{x}\in \mathbb{R}^n$ with sparsity not exceeding $s$. The smallest constant $\delta_s$ that satisfies this inequality is known as the restricted $1$-isometry constant (RIC$_1$). 
\end{definition}
What we should point out is that Gaussian random matrices satisfy RIP$_1$ with high probability (Lemma \ref{RIP}), which is foundational for subsequent theorems.

\subsection{Error Estimation for General Sparse Signals}
 \hskip\parindent 

Firstly, we establish the convergence analysis of the proposed algorithms for a general sparse signal. 
\begin{theorem}\label{thm3.1}
Given an $s$-sparse vector $\mathbf{x}_0\in \mathbb{R}^n$ and the measurements $\mathbf{b}= \mathbf{Ax}_0+\boldsymbol{\eta}$, where $\boldsymbol{\eta}$ is a vector of outliers with support $T$ and cardinality $|T|= pm$. Assume that the fraction of outliers $p$ satisfies $p<\frac{1}{2}-\frac{\delta_{2k+s-1}}{1-\delta_{2k+s-1}}$. Let $\{\mathbf{x}^{k}\}_{k=1}^{\infty}$ be the sequence produced by the GFHTP$_1$ algorithm, with adaptive step size 
$$t_{k,l}= \mu_{k,l}\sqrt{\frac{\pi}{2}}\|(\mathbf{b-Au}^{k,l})\odot (\mathbb{I}_{\{|b_i-(\mathbf{Au}^{k,l})_i|\leq \theta_\tau(|\mathbf{b-Au}^{k,l}|)\}})_{i=1}^m\|_1,$$
where $\mu_{k,l}$ satisfies the following inequality
\begin{align}\label{GFHTP_condition}
0<\rho_{k,l}:=&1+\tau^2\left(\Phi^{-1}\left(\frac{1+\tau+p}{2}\right)+\epsilon\right)^2(1+\delta_{2k+s-1})^2\mu_{k,l}^2 \nonumber\\
&- 2c_k\sqrt{\frac{2}{\pi}}\left(\tau-\frac{|T_1^{k,l}|}{m}\right)(1-\delta_{2k+s-1})\mu_{k,l}<\frac{1}{3}.
\end{align}
Here $c_k= (2-2p)(1-\delta_{2k+s-1})-(1+\delta_{2k+s-1})$, $T_1^{k,l}= T\cap \Gamma^{k,l}, \Gamma^{k,l}=\{i:|b_i-(\mathbf{Au}^{k,l})_i|\leq \theta_\tau(|\mathbf{b-Au}^{k,l}|)\}$, and $\epsilon$ is a small constant. Then the $k$-th iterate $\mathbf{x}^k$ satisfies:
\begin{equation}\label{errorestimation}
\|\mathbf{x}_0-\mathbf{x}^{k}\|_2^2\leq \left(\frac{(\rho_{k})^{L+1}(1-3\rho_{k})+2\rho_{k}}{1-\rho_{k}}\right)\|\mathbf{x}_0-\mathbf{x}^{k-1}\|_2^2,
\end{equation}
where $\mathbf{u}^{k,L+1}:= \mathbf{x}^k, \mathbf{u}^{k,0}:= \mathbf{x}^{k-1}$, $\rho_k=\max_{l=0}^{L}\rho_{k,l}$, and $k\geq s$. 
\end{theorem}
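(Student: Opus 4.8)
The plan is to establish the per-iterate contraction bound \eqref{errorestimation} by tracking how the error $\|\mathbf{x}_0-\mathbf{u}^{k,l}\|_2$ evolves across the inner iterations $l=0,1,\dots,L$ of a fixed outer iteration $k$. The backbone of the argument is a single-inner-step inequality of the form
\begin{equation*}
\|\mathbf{x}_0-\mathbf{u}^{k,l+1}\|_2^2\leq \rho_{k,l}\,\|\mathbf{x}_0-\mathbf{u}^{k,l}\|_2^2+ (\text{correction terms for } l=0),
\end{equation*}
obtained by expanding the squared norm of the subgradient-descent update restricted to $S^{k}$, substituting $\mathbf{b}-\mathbf{Au}^{k,l}=\mathbf{A}(\mathbf{x}_0-\mathbf{u}^{k,l})+\boldsymbol{\eta}$, and controlling the three resulting pieces (the quadratic term in $t_{k,l}$, the cross term, and the residual-error term) via RIP$_1$. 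Here the truncation by $\Gamma^{k,l}$ is essential: it discards the rows where outliers dominate, so that on the retained rows $\mathrm{sign}(\mathbf{b}-\mathbf{Au}^{k,l})$ aligns, up to the controlled mismatch set $T_1^{k,l}$, with $\mathrm{sign}(\mathbf{A}(\mathbf{x}_0-\mathbf{u}^{k,l}))$. The step size $t_{k,l}=\mu_{k,l}\sqrt{\pi/2}\|(\mathbf{b}-\mathbf{Au}^{k,l})\odot\mathbb{I}_{\Gamma^{k,l}}\|_1$ is precisely calibrated so that the truncated $\ell_1$-residual, via the RIP$_1$ equivalence $\sqrt{\pi/2}\|\mathbf{Av}\|_1\asymp\|v\|_2$ and a quantile/Gaussian estimate for the outlier-free rows (hence the $\Phi^{-1}((1+\tau+p)/2)$ and $\sqrt{2/\pi}$ factors), is comparable to $\|\mathbf{x}_0-\mathbf{u}^{k,l}\|_2$; plugging this in yields the contraction factor $\rho_{k,l}$ exactly as defined in \eqref{GFHTP_condition}.

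Next I would handle the transition from the inner recursion to the outer bound. Iterating the single-step inequality from $l=0$ to $l=L$ with $\rho_k=\max_l\rho_{k,l}<1/3$ gives a geometric sum; the telescoping produces the coefficient $\frac{(\rho_k)^{L+1}(1-3\rho_k)+2\rho_k}{1-\rho_k}$. The role of the strict bound $\rho_k<1/3$ (rather than merely $<1$) is to absorb the $l=0$ correction term — which arises because $\mathbf{u}^{k,1}$ comes from a hard-thresholding step of size $k+1$ rather than a plain restricted subgradient step, so one pays an extra factor (bounded by $2$, via the standard $\mathcal{H}_{k+1}$ near-optimality estimate $\|\mathbf{x}_0-\mathcal{H}_{k+1}(\mathbf{z})\|_2^2\le$ comparison on the support symmetric difference) — into the same geometric series without breaking the contraction. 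The condition $k\ge s$ ensures $k+1>s$ so that the thresholded support is large enough to contain (a good approximation of) $\mathrm{supp}(\mathbf{x}_0)$ and the relevant RIP$_1$ order is $2k+s-1$, matching the hypotheses $\delta_{2k+s-1}<0.25$ and $p<\tfrac12-\tfrac{\delta_{2k+s-1}}{1-\delta_{2k+s-1}}$ (the latter guaranteeing $c_k>0$ and that the linear coefficient of $\mu_{k,l}$ in $\rho_{k,l}$ is genuinely negative, so that a valid range of $\mu_{k,l}$ exists).

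The main obstacle I anticipate is the careful bookkeeping of the sign-mismatch set $T_1^{k,l}=T\cap\Gamma^{k,l}$ and the quantile estimate on the clean rows. One must argue that, with high probability over the Gaussian matrix, the empirical $\tau$-quantile $\theta_\tau(|\mathbf{b}-\mathbf{Au}^{k,l}|)$ behaves like a population quantile, so that the truncated residual on the non-outlier rows concentrates around a deterministic multiple of $\|\mathbf{A}(\mathbf{x}_0-\mathbf{u}^{k,l})\|$, while simultaneously $|T_1^{k,l}|/m$ stays below $\tau$ (this is where $p<\tau$ enters) so that the factor $\tau-|T_1^{k,l}|/m$ in \eqref{GFHTP_condition} is positive and bounded away from zero. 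Keeping all the RIP$_1$ constants at the single order $2k+s-1$ throughout — rather than accumulating orders across the $L$ inner steps — requires that each inner iterate $\mathbf{u}^{k,l}$ together with $\mathbf{x}_0$ and $S^k$ be supported on a set of size at most $2k+s-1$; verifying this invariant and then pushing the probabilistic quantile estimate uniformly over $l\in\{0,\dots,L\}$ and $k\le\mathrm{MaxIt}$ is the technical heart of the proof, and I expect it to consume the bulk of Appendix~\ref{secappendixa}.
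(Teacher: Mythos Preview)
Your overall plan---expand the squared error, control the quadratic and cross terms via RIP$_1$, plug in the step-size formula to read off $\rho_{k,l}$, then iterate the inner recursion to a geometric sum---matches the paper. Two pieces of your mechanism are off, though, and would derail the calculation if carried out as described.

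First, the truncation set $\Gamma^{k,l}$ is \emph{not} used to align signs. The algorithm's subgradient direction is the full $\mathbf{A}^\top\mathrm{sign}(\mathbf{b}-\mathbf{Au}^{k,l})$, with no row discarding. The cross term $F_2=\langle\mathbf{x}_0-\mathbf{u}^{k,l},\mathbf{A}^\top\mathrm{sign}(\mathbf{b}-\mathbf{Au}^{k,l})\rangle$ is lower-bounded by splitting rows into $T$ and $T^c$: on $T^c$ the sign is automatically $\mathrm{sign}(\mathbf{A}_{T^c}(\mathbf{x}_0-\mathbf{u}^{k,l}))$ and contributes $\|\mathbf{A}_{T^c}(\mathbf{x}_0-\mathbf{u}^{k,l})\|_1$; on $T$ one simply pays $-\|\mathbf{A}_T(\mathbf{x}_0-\mathbf{u}^{k,l})\|_1$ by the triangle inequality. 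Combining via RIP$_1$ and Lemma~\ref{RIP1} gives $F_2\ge\sqrt{2/\pi}\,c_k\|\mathbf{x}_0-\mathbf{u}^{k,l}\|_2$, with no $T_1^{k,l}$ in sight. Truncation enters \emph{only} through the scalar $t_{k,l}$: Proposition~\ref{prop1}(ii) supplies matching upper and lower bounds on the truncated $\ell_1$-residual, and the factor $(\tau-|T_1^{k,l}|/m)$ you see in $\rho_{k,l}$ comes from the \emph{lower} bound on $t_{k,l}$ (needed because $-2t_{k,l}F_2$ is negative), not from any sign-mismatch accounting.

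Second, the additive correction is not a one-time cost paid at the hard-thresholding transition $l=0\to l=1$. Writing
\[
\|\mathbf{x}_0-\mathbf{u}^{k,l+1}\|_2^2=\|(\mathbf{x}_0-\mathbf{u}^{k,l+1})_{S^k}\|_2^2+\|(\mathbf{x}_0)_{S\setminus S^k}\|_2^2,
\]
the off-support piece $E_2:=\|(\mathbf{x}_0)_{S\setminus S^k}\|_2^2$ reappears at \emph{every} inner step $l$, because all inner iterates live on the fixed set $S^k$. One bounds $E_2\le 2\rho_{k,0}\|\mathbf{x}_0-\mathbf{x}^{k-1}\|_2^2$ once (via the standard $\|\mathbf{z}_{S^k}\|_2\ge\|\mathbf{z}_S\|_2$ comparison with $\mathbf{z}=\mathbf{x}^{k-1}+t_{k,0}\mathbf{A}^\top\mathrm{sign}(\mathbf{b}-\mathbf{Ax}^{k-1})$, which is exactly where $k\ge s$ is used), and it is the persistent accumulation of this constant term over $L+1$ inner steps that produces the $\frac{2\rho_k(1-\rho_k^{L+1})}{1-\rho_k}$ piece of the final coefficient and forces $\rho_k<1/3$.
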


Then, we show a corollary which displays the convergence of FHTP$_1$ algorithm.
\begin{corollary}\label{FHTP_theoretical}
Let $\mathbf{x}_0\in \mathbb{R}^n$ be $s$-sparse and $\mathbf{b}= \mathbf{Ax}_0+\boldsymbol{\eta}$ for some $\boldsymbol{\eta}$ with $T=\mathrm{supp}(\boldsymbol{\eta})$ and $|T|= pm$. Assume that the fraction of outliers $p$ satisfies $p<\frac{1}{2}-\frac{\delta_{3s}}{1-\delta_{3s}}$. Let $(\mathbf{x}^{k})$ be the sequence generated by FHTP$_1$ and adaptive step size satisfying the same form as Theorem \ref{thm3.1} (replacing $\delta_{2k+s-1}$ with $\delta_{3s}$). Then
\begin{equation}\label{errorestimation1}
\|\mathbf{x}_0-\mathbf{x}^{k}\|_2^2\leq \left(\frac{(\rho_{k})^{L+1}(1-3\rho_{k})+2\rho_{k}}{1-\rho_{k}}\right)\|\mathbf{x}_0-\mathbf{x}^{k-1}\|_2^2,
\end{equation}
where $\rho_k=\max_{l=0}^{L}\rho_{k,l}$, and $c_k= (2-2p)(1-\delta_{3s})-(1+\delta_{3s})$. Moreover, we obtain
\begin{equation}\label{errorestimation}
\|\mathbf{x}_0-\mathbf{x}^{k}\|_2\leq \rho^{k}\|\mathbf{x}_0-\mathbf{x}^{0}\|_2
\end{equation}
with $\rho:=\max_k \sqrt{\frac{(\rho_{k})^{L+1}(1-3\rho_{k})+2\rho_{k}}{1-\rho_{k}}}$. With the initialization $\mathbf{x}^0=\mathbf{0}$, we can reconstruct an $s$-sparse signal $\mathbf{x}_0$ with an accuracy of  $\|\mathbf{x}^{k^*} - \mathbf{x}_0\|_2\leq \varepsilon$ after 
$$
k^{*}=\left\lceil \log_{\frac{1}{\rho}}\frac{\|\mathbf{x}_0\|_2}{\varepsilon}\right\rceil$$
iterations.

\noindent {\bf Note}: FHTP$_1$ uses fixed sparsity $s$, so the RIP$_1$ order is $3s$. 
\end{corollary}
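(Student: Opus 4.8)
The plan is to derive Corollary~\ref{FHTP_theoretical} from Theorem~\ref{thm3.1} by a specialization argument, followed by an elementary iteration of the per-step estimate. The only structural difference between the FHTP$_1$ recursion \eqref{FHTP1} and the GFHTP$_1$ recursion \eqref{GFHTP} is that the former applies the hard-thresholding operator $\mathcal{H}_s$ with the \emph{fixed} level $s$, while the latter uses $\mathcal{H}_{k+1}$ with the \emph{graded} level $k+1$. Hence in FHTP$_1$ one has $|S^k|\le s$ for every iteration. Inspecting the proof of Theorem~\ref{thm3.1} (Appendix~\ref{secappendixa}), the restricted $1$-isometry inequality \eqref{RIP.eq1} and the constants $\delta_{2k+s-1}$ arise \emph{only} when estimating quantities involving vectors supported on $S^{k}\cup S^{k-1}\cup\mathrm{supp}(\mathbf{x}_0)$, a set of cardinality at most $2k+s-1$ in the graded case. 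For FHTP$_1$ this cardinality is at most $3s$, so the identical argument goes through with $\delta_{2k+s-1}$ replaced throughout by $\delta_{3s}$; the side condition $k\ge s$ in Theorem~\ref{thm3.1}, which was dictated by the growing support size, is no longer needed, and \eqref{errorestimation1} then holds for all $k\ge1$. The hypotheses $0<\delta_{3s}<0.25$ and $p<\tfrac12-\tfrac{\delta_{3s}}{1-\delta_{3s}}$ play exactly the role that the corresponding conditions play in Theorem~\ref{thm3.1}: they make $c_k=(2-2p)(1-\delta_{3s})-(1+\delta_{3s})>0$ and guarantee (via the same quadratic-in-$\mu_{k,l}$ reasoning behind \eqref{GFHTP_condition} and Remark~\ref{remark2.3}) that the set of $\mu_{k,l}$ with $0<\rho_{k,l}<\tfrac13$ is nonempty, so $\rho_k:=\max_{0\le l\le L}\rho_{k,l}\in(0,\tfrac13)$ is well defined.

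Next I would iterate \eqref{errorestimation1}. Put $g(\rho):=\bigl(\rho^{L+1}(1-3\rho)+2\rho\bigr)/(1-\rho)$ for the one-step factor. For $\rho_k\in(0,\tfrac13)$ we have $1-3\rho_k>0$ and $\rho_k^{L+1}\le\rho_k$ (as $\rho_k<1$), hence
\[
g(\rho_k)\le\frac{\rho_k(1-3\rho_k)+2\rho_k}{1-\rho_k}=\frac{3\rho_k(1-\rho_k)}{1-\rho_k}=3\rho_k<1.
\]
Therefore $\rho:=\max_k\sqrt{g(\rho_k)}\in(0,1)$, and applying \eqref{errorestimation1} successively for the iterations $1,\dots,k$ yields $\|\mathbf{x}_0-\mathbf{x}^k\|_2\le\rho\,\|\mathbf{x}_0-\mathbf{x}^{k-1}\|_2\le\cdots\le\rho^k\|\mathbf{x}_0-\mathbf{x}^0\|_2$, which is the geometric bound claimed in the corollary.

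For the iteration count, the initialization $\mathbf{x}^0=\mathbf{0}$ gives $\|\mathbf{x}_0-\mathbf{x}^0\|_2=\|\mathbf{x}_0\|_2$, so $\|\mathbf{x}_0-\mathbf{x}^k\|_2\le\rho^k\|\mathbf{x}_0\|_2$. The right-hand side is at most $\varepsilon$ as soon as $\rho^k\le\varepsilon/\|\mathbf{x}_0\|_2$, i.e. $k\ge\log_{1/\rho}\!\bigl(\|\mathbf{x}_0\|_2/\varepsilon\bigr)$; taking $k^*=\bigl\lceil\log_{1/\rho}(\|\mathbf{x}_0\|_2/\varepsilon)\bigr\rceil$ finishes the proof.

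The step that needs genuine care is the first one: one must verify, by going through Appendix~\ref{secappendixa} line by line, that the index $2k+s-1$ enters the bounds \emph{solely} through cardinalities of unions of support sets and never through an essential use of the graded increment $|S^{k+1}|=|S^k|+1$ (that monotone growth drives the exact-recovery result but is irrelevant for the contraction estimate). Once this bookkeeping is confirmed, the substitution $2k+s-1\mapsto 3s$ is legitimate and everything downstream — the feasibility of $\mu_{k,l}$, the inequality $g(\rho_k)\le 3\rho_k<1$, the geometric iteration, and the logarithmic iteration count — is routine.
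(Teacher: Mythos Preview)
Your proposal is correct and matches the paper's approach: the paper does not give a separate proof of the corollary, only the note that FHTP$_1$ uses fixed sparsity $s$ so the RIP$_1$ order becomes $3s$, which is precisely your specialization argument. Your additional observations---that the condition $k\ge s$ in Theorem~\ref{thm3.1} stems from needing $|S^k|\ge|S|$ for the key inequality \eqref{errorestimation.eq2} and is automatic for FHTP$_1$, and the verification that $g(\rho_k)\le 3\rho_k<1$---correctly fill in the routine details the paper leaves implicit.
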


Next, we give a remark which checks the sufficient condition \eqref{GFHTP_condition} in Theorem \ref{thm3.1}.
\begin{remark}\label{remark2.3}
It seems that the sufficient condition \eqref{GFHTP_condition} is complex and strict. In fact, this condition can be met. Notice that the sufficient condition \eqref{GFHTP_condition} provides a quadratic inequality $ax^2-bx+c<0$ for $x=\mu_{k,l}$ with $a=\tau^2\left(\Phi^{-1}\left(\frac{1+\tau+p}{2}\right)+\epsilon\right)^2(1+\delta_{2k+s-1})^2$, $b=2c_k\sqrt{\frac{2}{\pi}}\left(\tau-\frac{|T_1^{k,l}|}{m}\right)(1-\delta_{2k+s-1})$ and $c=2/3$. For this quadratic inequality to have real solutions, it is necessary to ensure that $b^2-4ac>0$; when this condition holds, the inequality has solutions given by $(b-\sqrt{b^2-4ac})/2a<\mu_{k,l}< (b+\sqrt{b^2-4ac})/2a$. We set $\epsilon= 0.001, \delta_{2k+s-1}= 0.01, |T_1^{k,l}|/m= 0.001$, $\tau= 0.1:0.001:0.7$, $p= 0.001:0.0001:0.5$. We can take $|T_1^ {k,l}|/m\leq0.001$. The selectable range for the outlier proportion $p$ is shown in Figure \ref{max_p} (a). Especially, 
\begin{itemize}
\item [\rm{(i)}] when $\tau= 0.5, p= 0.05$, we can take $1.3695<\mu_{k,l}< 3.3362$ to make this condition \eqref{GFHTP_condition} true;
\item [\rm{(ii)}] we can set $\tau= 0.1, p= 0.2$, then we can find that when $8.7136<\mu_{k,l}< 50.2541$, the condition \eqref{GFHTP_condition} is satisfied.
\end{itemize}

\begin{figure}
\centering
\subfigure[General Sparse Signal]
{
\begin{minipage}{7cm}
\centering
\includegraphics[scale = 0.48]{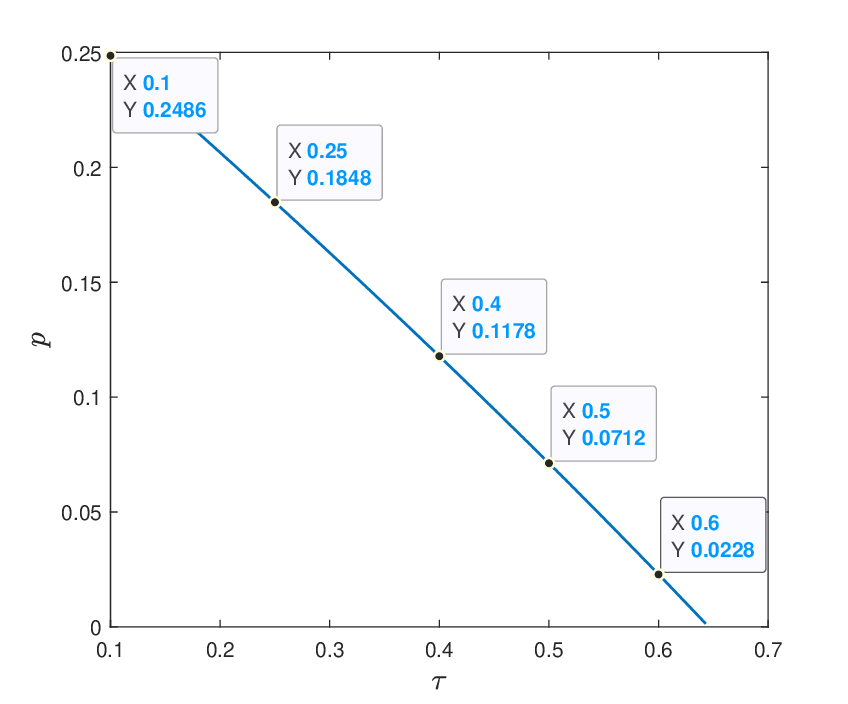}%
\end{minipage}
}
\subfigure[`Flat' Sparse Signal]
{
\begin{minipage}{7cm}
\centering
\includegraphics[scale = 0.48]{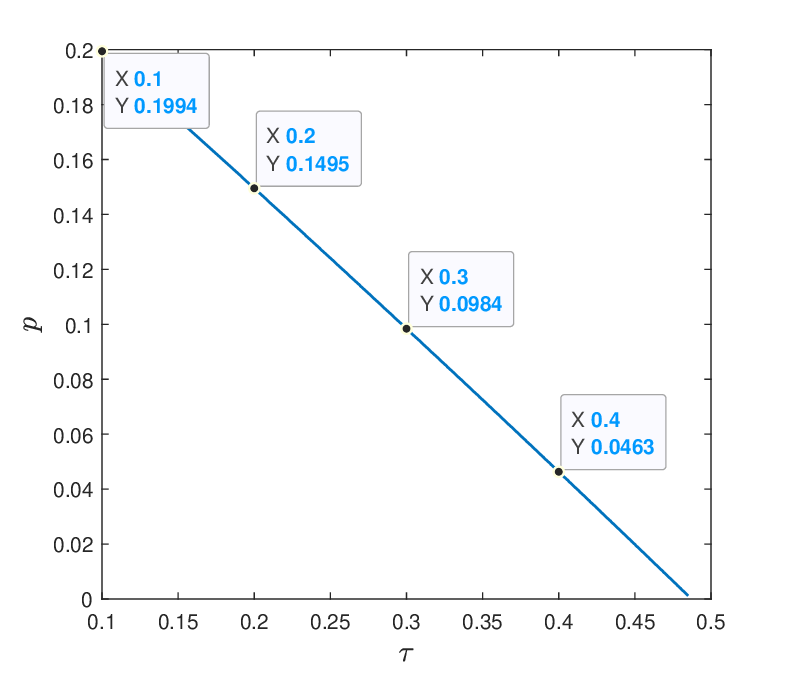}%
\end{minipage}
}
\caption{The maximum value of the outliers proportion 
$p$.}
\label{max_p}
\end{figure}
\end{remark}

\subsection{Exact Recovery for Structured Sparse Signals}
\hskip\parindent 
Next, we establish the convergence analysis of the proposed GFHTP$_1$ algorithm for special sparse signals, namely 'flat' signals.
\begin{theorem}\label{propGFHTP}
Consider an $s$-sparse vector $\mathbf{x}_0 \in \mathbb{R}^n$ satisfying $x_1^*\leq \lambda x_s^*$ ($\lambda\geq 1$, e.g., `flat' signals with $\lambda= 1$). Assume that the sensing matrix $\mathbf{A}$ is an $m\times n$ matrix whose entries are independently and identically distributed (i.i.d.) Gaussian random variables, i.e., ${a}_{ij}\sim \mathcal{N}(0, \frac{1}{m^2})$ with $m\geq c'_1s\log n$. Let the measurements $\mathbf{b}= \mathbf{Ax}_0+\boldsymbol{\eta}$, where $\boldsymbol{\eta}$ is a vector of outliers with support $T$ and cardinality $|T|= pm$. Assume that the fraction of outliers $p$ satisfies $p<\frac{1}{2}-\frac{\delta_{s}}{1-\delta_{s}}$. The adaptive step size $t_{k,0}$ for the GFHTP$_1$ algorithm is defined as  $\mu_{k,0}\sqrt{\frac{\pi}{2}}\|(\mathbf{b-Ax}^{k-1})\odot (\mathbb{I}_{\{|b_i-(\mathbf{Ax}^{k-1})_i|\leq \theta_\tau(|\mathbf{b-Ax}^{k-1}|)\}})_{i=1}^m\|_1$, where $\mu_{k,0}$ satisfies
\begin{align}\label{GFHTP_condition1}
0<\beta'_{k}:=&1+2\tau^2\left(\Phi^{-1}\left(\frac{1+\tau+p}{2}\right)+\epsilon\right)^2(1+\delta_{s})^2\mu_{k,0}^2 \nonumber \\
&- 2c\sqrt{\frac{2}{\pi}}\left(\tau-\frac{|T_1^{k,0}|}{m}\right)(1-\delta_{s})\mu_{k,0}<\frac{1}{2+\lambda^2},
\end{align}
and $T_1^{k,0}= T\cap \Gamma^{k,0}, \Gamma^{k,0}=\{i:|b_i-(\mathbf{Ax}^{k-1})_i|\leq \theta_\tau(|\mathbf{b-Ax}^{k-1}|)\}$, $c$ is defined as $(2-2p)(1-\delta_s)-(1+\delta_s)$ and $k\leq s+1$. Under these conditions, with a probability of at least $1-n^{-c''}$, the sequence of supports $(S^k)$ and the sequence of estimates $(\mathbf{x}^k)$ generated by GFHTP$_1$ will satisfy the following results at iteration $s$:
\begin{equation*}
S^s=\mathrm{supp}(\mathbf{x}_0), \  \mathbf{x}^s= \mathbf{x}_0.
\end{equation*}
\end{theorem}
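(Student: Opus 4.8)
The plan is to prove, by induction on the outer iteration index $k$, that with probability at least $1-n^{-c''}$ the support generated by GFHTP$_1$ satisfies $S^k\subseteq S:=\mathrm{supp}(\mathbf{x}_0)$ for every $k\in\{0,1,\dots,s\}$ (with $|S^k|=k$, by construction of the graded operator $\mathcal{H}_{k}$). Granting this, the conclusion is immediate at $k=s$: since $|S^s|=s=|S|$ and $S^s\subseteq S$, we get $S^s=S=\mathrm{supp}(\mathbf{x}_0)$, and the inner (pursuit) loop then minimizes $\|\mathbf{Az}-\mathbf{b}\|_1$ over vectors supported on the correct set $S$. Because $\mathbf{b}-\mathbf{Ax}_0=\boldsymbol{\eta}$ has only $|T|=pm<m/2$ nonzero components (guaranteed by $p<\tfrac12-\tfrac{\delta_s}{1-\delta_s}$) and a Gaussian $\mathbf{A}_{\cdot S}$ is almost surely in sufficiently general position, $\mathbf{x}_0$ is the unique minimizer of this corrupted restricted LAD problem, so the pursuit step (solved exactly, or run to convergence) returns $\mathbf{x}^s=\mathbf{x}_0$.

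For the inductive step I fix $k<s$ with $S^k\subseteq S$ and write $\mathbf{r}^k=\mathbf{b}-\mathbf{Ax}^k=\mathbf{A}(\mathbf{x}_0-\mathbf{x}^k)+\boldsymbol{\eta}$ and $\mathbf{w}^{k+1}=\mathbf{x}^k+t_{k+1,0}\mathbf{A}^\top\mathrm{sign}(\mathbf{r}^k)$, so $S^{k+1}$ collects the indices of the $k+1$ largest entries of $|\mathbf{w}^{k+1}|$. The goal is the separation inequality $\min_{j\in\mathcal{J}}|w^{k+1}_j|>\max_{j\notin S}|w^{k+1}_j|$ for some $\mathcal{J}\subseteq S$ with $|\mathcal{J}|\ge k+1$, which forces $S^{k+1}\subseteq S$. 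For $j\in S$ one has $|w^{k+1}_j|\ge|(\mathbf{x}_0)_j|-|(\mathbf{x}_0-\mathbf{x}^k)_j|-t_{k+1,0}\,|(\mathbf{A}^\top\mathrm{sign}(\mathbf{r}^k))_j|$; flatness gives $|(\mathbf{x}_0)_j|\ge x_s^*\ge x_1^*/\lambda$ on the support, while the two perturbation terms are controlled by the contraction estimate for $\|\mathbf{x}_0-\mathbf{x}^k\|_2$ obtained, as in the proof of Theorem \ref{thm3.1} (now with RIP$_1$ order $s$, since $\mathbf{x}_0-\mathbf{x}^k$ is supported on $S$), from condition \eqref{GFHTP_condition1}: here $\beta'_k$ is the running one-step contraction factor, and $\|\mathbf{x}_0\|_2\le\sqrt{s}\,x_1^*\le\sqrt{s}\,\lambda x_s^*$ turns that error bound into a controlled multiple of $x_s^*$. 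For $j\notin S$ the first two terms vanish and $|w^{k+1}_j|=t_{k+1,0}\,|(\mathbf{A}^\top\mathrm{sign}(\mathbf{r}^k))_j|$, which I would bound by combining RIP$_1$ (Lemma \ref{RIP}) — to dominate $t_{k+1,0}$, up to constants and the quantile truncation, by $\mu_{k+1,0}(1+\delta_s)\|\mathbf{x}_0-\mathbf{x}^k\|_2$ — with a concentration bound giving $\|(\mathbf{A}^\top\mathrm{sign}(\mathbf{r}^k))_{S^c}\|_\infty=O(\sqrt{(\log n)/m})$. Tracking constants, the threshold $\tfrac{1}{2+\lambda^2}$ in \eqref{GFHTP_condition1} is exactly what makes the lower bound on $S$ strictly exceed the upper bound on $S^c$: the $\lambda^2$ absorbs the worst-case dynamic range of $\mathbf{x}_0$, and $p<\tfrac12-\tfrac{\delta_s}{1-\delta_s}$ keeps $c>0$, hence the admissible window for $\mu_{k,0}$ in \eqref{GFHTP_condition1} nonempty.

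The base case $k=0\to1$ is the same computation with $\mathbf{x}^0=\mathbf{0}$, so $\mathbf{w}^1=t_{1,0}\mathbf{A}^\top\mathrm{sign}(\mathbf{b})$ with $\mathbf{b}=\mathbf{Ax}_0+\boldsymbol{\eta}$, and one checks that the largest entry of $|\mathbf{w}^1|$ lies in $S$ using flatness of $\mathbf{x}_0$ and $|T|<m/2$. All probabilistic ingredients — the RIP$_1$ of $\mathbf{A}$ and the uniform concentration of $\mathbf{A}^\top\mathrm{sign}(\mathbf{r}^k)$ — are intersected into a single good event of probability at least $1-n^{-c''}$, which is what dictates $m\ge c'_1 s\log n$; the condition \eqref{GFHTP_condition1} need only be imposed for the iterations $k\le s+1$ actually used in the induction.

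I expect the main obstacle to be the concentration step inside the inductive argument. Since $\mathrm{sign}(\mathbf{r}^k)$ depends on $\mathbf{A}$ through the iterate $\mathbf{x}^k$, the vector $\mathbf{A}^\top\mathrm{sign}(\mathbf{r}^k)$ is not Gaussian, so the estimate $\|(\mathbf{A}^\top\mathrm{sign}(\mathbf{r}^k))_{S^c}\|_\infty=O(\sqrt{(\log n)/m})$ must hold \emph{uniformly}; obtaining this requires a union bound over the admissible supports $S^k$ and corruption sets $T$ together with a net argument over the possible sign patterns of the residual, and it is precisely this uniformity that forces $m\ge c'_1 s\log n$ and the failure probability $n^{-c''}$. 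The remaining work is bookkeeping: reconciling the constants from RIP$_1$, the quantile truncation defining $t_{k+1,0}$, and the flatness ratio $\lambda$ so that they align with the exact form of $\beta'_k$ and the threshold $\tfrac{1}{2+\lambda^2}$ in \eqref{GFHTP_condition1}.
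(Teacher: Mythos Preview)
Your high-level strategy matches the paper's: induct on $S^k\subseteq S$ by comparing the $k$-th largest on-support value $\zeta_k$ of $\mathbf{w}^{k}=\mathbf{x}^{k-1}+t_{k,0}\mathbf{A}^\top\mathrm{sign}(\mathbf{r}^{k-1})$ against the largest off-support value $\xi_k$, then conclude $S^s=S$ by cardinality. Two aspects of your execution, however, diverge from the paper's and are worth correcting.

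First, the step you flag as the main obstacle is in fact a non-issue, and your proposed fix (a net over residual sign patterns) is unnecessary. Under the induction hypothesis $S^{k-1}\subseteq S$, the iterate $\mathbf{x}^{k-1}$ is supported in $S$, so $\mathrm{sign}(\mathbf{r}^{k-1})=\mathrm{sign}(\mathbf{A}(\mathbf{x}_0-\mathbf{x}^{k-1})+\boldsymbol{\eta})$ is a function only of the columns $\{\tilde{\mathbf{a}}_j:j\in S\}$. For $\ell\in S^c$ the column $\tilde{\mathbf{a}}_\ell$ is therefore \emph{independent} of the sign vector, and $\langle\tilde{\mathbf{a}}_\ell,\mathrm{sign}(\mathbf{r}^{k-1})\rangle$ concentrates directly by Hoeffding (conditionally on the sign vector, which has $\ell_2$-norm at most $\sqrt{m}$). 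The paper's Proposition~\ref{prop.IndificationSupportSet} does exactly this: one tail bound per $\ell\in S^c$, a union bound over the $n-s$ off-support columns, and a sum over $k\in[[s]]$; this is all that the budget $m\ge c_1's\log n$ has to absorb.

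Second, the paper does not deduce $\mathbf{x}^s=\mathbf{x}_0$ from uniqueness of the restricted LAD minimizer. Your argument would require the inner loop to solve the pursuit subproblem exactly, whereas the algorithm runs only $L$ subgradient steps. Instead, once $S^s=S$ the paper applies the one-step self-bounding inequality $\|\mathbf{x}^{s}-\mathbf{x}_0\|_2^2\le 2\beta'_{s+1}\|\mathbf{x}^{s}-\mathbf{x}_0\|_2^2+\|(\mathbf{x}_0)_{(S^s)^c}\|_2^2$ (this is the same estimate used throughout, specialized to $k=s+1$); since $(\mathbf{x}_0)_{S^c}=\mathbf{0}$ and $\beta'_{s+1}<\tfrac12$, this forces $\mathbf{x}^s=\mathbf{x}_0$ regardless of $L$. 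Relatedly, the paper organizes the lower bound on $\zeta_k$ in $\ell_2$ over the block $T^{s-k+1}\subseteq S$ of the $s-k+1$ smallest on-support entries of $|\mathbf{w}^k|$, rather than coordinate-wise: combining $\zeta_k\ge\tfrac{1}{\sqrt{s-k+1}}\|(\mathbf{w}^k)_{T^{s-k+1}}\|_2$ with the rearranged self-bound $(1-2\beta'_k)\|\mathbf{x}^{k-1}-\mathbf{x}_0\|_2^2\le\|(\mathbf{x}_0)_{(S^{k-1})^c}\|_2^2$ and the flatness $x_1^*\le\lambda x_s^*$ is precisely how the threshold $\tfrac{1}{2+\lambda^2}$ in \eqref{GFHTP_condition1} falls out.
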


\begin{remark}
In fact, the condition \eqref{GFHTP_condition1} of Theorem \ref{propGFHTP} can be met. We consider the following special case: assume that $\lambda= 1$ and $\epsilon= 0.001, \delta_{s}= 0.01, |T_1^{k,l}|/m= 0.001$, $\tau= 0.1:0.001:0.5$, $p= 0.001:0.0001:0.5$. The selectable range for the outliers proportion $p$ is shown in Figure \ref{max_p} (b). Especially, 
\begin{itemize}
\item [\rm{(i)}] when $\tau= 0.4, p= 0.01$, we can take $1.4444<\mu_{k,0}<4.8518$ to make this condition \eqref{GFHTP_condition1} true;
\item [\rm{(ii)}] we can set $\tau= 0.1, p= 0.15$, then we can find that when $7.4256<\mu_{k,0}<43.0710$, the condition \eqref{GFHTP_condition1} is satisfied.
\end{itemize}
\end{remark}

Lastly, we give a remark which provides the comparison with the state-of-the-art methods.
\begin{remark}\label{comparewithAIHT.PSGD.GHTP}
\begin{itemize}
\item [\rm{(i)}] In \cite{Li2023AIHT}, the theoretical findings presented are not directly suitable for scenarios with a significant presence of outliers, given that $\|\boldsymbol{\eta}\|_1$ is substantial. Furthermore, the sparsity level $s$ is frequently unknown in real-world applications. Our work provides a theoretical analysis of the GFHTP$_1$ algorithm, addressing the limitation identified in \cite{Li2023AIHT}. We also notice that Li et al. \cite{Li2023AIHT} choose an adaptive step size $t_k=\mu\|\mathbf{b}-\mathbf{A}\mathbf{x}^k\|_1$ for a fixed constant $\mu$, which is effective primarily for bounded noise. While our truncated adaptive step size $t_{k,l}= \mu_{k,l}\sqrt{\frac{\pi}{2}}\|(\mathbf{b-Au}^{k,l})\odot (\mathbb{I}_{\{|b_i-(\mathbf{Au}^{k,l})_i|\leq \theta_\tau(|\mathbf{b-Au}^{k,l}|)\}})_{i=1}^m\|_1$ is designed to handle outliers more effectively.
\item [\rm{(ii)}] In contrast to the step size selection in the PSGD algorithm proposed by Xu et al. \cite{Xu2024PSGD}, which hinges on the actual signal characteristics, this dependency is impractical. Therefore, our approach determines the step size independently of the signal itself. Furthermore, we also design a high-efficiency stopping criterion  $\|(\mathbf{b-Ax}^{k+1})\odot (\mathbb{I}_{\{|b_i-(\mathbf{Ax}^{k+1})_i|\leq \theta_\tau(|\mathbf{b-Ax}^{k+1}|)\}})_{i=1}^m\|_1\leq \epsilon_{\mathrm{outer}}$, which ensures fast and high-precision convergence, thereby addressing the lack of a suitable stopping criterion in \cite{Xu2024PSGD}.
\end{itemize}
\end{remark}

\section{The Roadmap and Keystone of Our Proofs}\label{sec3roadmap}
\hskip\parindent

In this section, we present the proof strategies for Theorems \ref{thm3.1} and \ref{propGFHTP}. This section serves as the foundation for the subsequent convergence analysis in the Appendices. 

\subsection{The Roadmap and Keystone of the Proof of Theorem \ref{thm3.1}}
\hskip\parindent

The proof strategy of Theorem \ref{thm3.1}  can be summarized in the following two key steps:
\begin{enumerate}
\item[(a)]Contraction of the inner iteration: The central step is to establish the following inequality:
\begin{equation*}
\|\mathbf{x}_0-\mathbf{u}^{k,l+1}\|_2^2
\leq \rho_{k,l}\|\mathbf{x}_0-\mathbf{u}^{k,l}\|_2^2+2\rho_{k,0}\|\mathbf{x}_0-\mathbf{x}^{k-1}\|_2^2,
\end{equation*}
where $\rho_{k,l}= 1+\tau^2(\Phi^{-1}+\epsilon)^2(1+\delta_{2k+s-1})^2\mu_{k,l}^2- 2c_k\sqrt{\frac{2}{\pi}}\left(\tau-\frac{|T_1^{k,l}|}{m}\right)(1-\delta_{2k+s-1})\mu_{k,l}$ with 
 $\Phi^{-1}:=\Phi^{-1}(\frac{1+\tau+p}{2})$, and $T_1^{k,l}= T\cap \Gamma^{k,l}$ with $\Gamma^{k,l}=\{i:|b_i-(\mathbf{Au}^{k,l})_i|\leq \theta_\tau(|\mathbf{b-Au}^{k,l}|)\}$.
\item[(b)]Convergence of the iteration: By induction, we prove 
\begin{align}\label{errorestimation.eq4}
\|\mathbf{x}_0-\mathbf{x}^{k}\|_2^2 = \|\mathbf{x}_0-\mathbf{u}^{k,L+1}\|_2^2
\leq\left(\frac{\rho_{k}^{L+1}(1-3\rho_{k})+2\rho_{k}}{1-\rho_{k}}\right)\|\mathbf{x}_0-\mathbf{x}^{k-1}\|_2^2,
\end{align}
where $\rho_k=\max_{l}\rho_{k,l}$ and $k\geq s$.
\end{enumerate}

Notice that in Step (a), we need to show the contraction property of the inner iteration. To complete the key step, we need two key propositions. The first one gives the upper and lower bounds for the quantity $\|(\mathbf{b-Ax})\odot (\mathbb{I}_{\{|b_i-(\mathbf{Ax})_i|\leq \theta_\tau(|\mathbf{b-Ax}|)\}})_{i=1}^m\|_1$.
\begin{proposition}\label{prop1}(The Sandiwich Inequality for the Quantile Truncation)
For fixed $\epsilon\in (0,1)$, a matrix $\mathbf{A}=[\mathbf{a}_1,\cdots, \mathbf{a}_m]^{\top}\in \mathbb{R}^{m\times n}\ (m\ll n)$ with i.i.d. Gaussian entries, ${a}_{ij}\sim \mathcal{N}(0, \frac{1}{m^2})$. 
\begin{enumerate}
\item[{\rm(i)}] If $m\geq d_0(\epsilon^{-2}\log (\epsilon^{-1}))(s+k)\log n$ for some large enough constant $d_0$, then with probability at least $1-d_1\exp(-d_2m\epsilon^2)-\frac{1}{n}$, where $d_1$ and $d_2$ are some constants, we have for all signals $\mathbf{x}\in \mathbb{R}^n$ with $\|\mathbf{x}\|_0= s+k \ (k\in \mathbb{Z}^{+})$:
\begin{equation}
m\theta_{\tau}(|\mathbf{Ax}|)\in \left[\Phi^{-1}\left(\frac{1+\tau}{2}\right)-\epsilon, \Phi^{-1}\left(\frac{1+\tau}{2}\right)+\epsilon\right]\|\mathbf{x}\|_2.
\end{equation} 
\item[{\rm(ii)}] For the model \eqref{problem1} contaminated by outliers $\boldsymbol{\eta}$, with $T=\mathrm{supp}(\boldsymbol{\eta})$ and $p=\frac{|T|}{m}$, we derive the following lower and upper bounds with high probability:
\begin{align*}\label{upperbound}
&\left(\tau -\frac{|T_1|}{m}\right)\sqrt{\frac{2}{\pi}}(1-\delta_{s+l})\|\mathbf{x-x}_0\|_2\nonumber\\
&\leq \|(\mathbf{b-Ax})\odot (\mathbb{I}_{\{|b_i-(\mathbf{Ax})_i|\leq \theta_\tau(|\mathbf{b-Ax}|)\}})_{i=1}^m\|_1\leq \tau 
\left(\Phi^{-1}\left(\frac{1+\tau+p}{2}\right)+\epsilon\right)\|\mathbf{x-x}_0\|_2,
\end{align*}
where $T_1= T\cap \Gamma$ with $\Gamma=\{i:|b_i-(\mathbf{Ax})_i|\leq \theta_\tau(|\mathbf{b-Ax}|)\}$, $\|\mathbf{x}_0\|_0= s$, and $\|\mathbf{x}\|_0= \ell$.
\end{enumerate}
\end{proposition}

The second proposition implies that the subgradient descent update, $\mathbf{u}^{k,l}+t_{k,l}\mathbf{A}^\top \mathrm{sign}(\mathbf{b}-\mathbf{Au}^{k,l})$, remains close to the true signal  $\mathbf{x}_0$, and that the sequence of errors $\{\mathbf{u}^{k,l}-\mathbf{x}_0\}_{l=1}^{L+1}$ contracts. 
\begin{proposition}\label{pro.ContractionOfInnerIteration}
 (Error Contraction in Inner Iterations) Let $S, S^{k-1}$ and $S^{k}$  denote the supports of $\mathbf{x}_0, \mathbf{x}^{k-1}$ (or $\mathbf{u}^{k,0}$), and $\mathbf{u}^{k,l}$ for $1\leq l\leq L+1$, respectively. 
 \begin{itemize}
 \item[{\rm (i)}] Denote the index set
\begin{equation*}\label{indexset}
\Lambda^{k}:= S\cup S^{k-1}\cup S^{k}.
\end{equation*}
Then we have
\begin{equation}
\|[\mathbf{x}_0-\mathbf{u}^{k,l}-t_{k,l}\mathbf{A}^\top \mathrm{sign}(\mathbf{b}-\mathbf{Au}^{k,l})]_{\Lambda^{k}}\|_2^2
\leq \rho_{k,l}\|\mathbf{x}_0 - \mathbf{u}^{k,l}\|_2^2,
\end{equation}
where $\rho_{k,l}=1+\tau^2\left(\Phi^{-1}\left(\frac{1+\tau+p}{2}\right)+\epsilon\right)^2(1+\delta_{2k+s-1})^2\mu_{k,l}^2- 2c_k\sqrt{\frac{2}{\pi}}\left(\tau-\frac{|T_1^{k,l}|}{m}\right)(1-\delta_{2k+s-1})\mu_{k,l}$.
\item[{\rm(ii)}] Moreover, we have
\begin{align*}\label{errorestimation.eq3.add1}
\|\mathbf{x}_0-\mathbf{u}^{k,l+1}\|_2^2\leq \rho_{k,l}\|\mathbf{x}_0-\mathbf{u}^{k,l}\|_2^2+2\rho_{k,0}\|\mathbf{x}_0-\mathbf{u}^{k,0}\|_2^2.
\end{align*}
\end{itemize}
\end{proposition}

\subsection{The Roadmap and Keystone of the Proof of Theorem \ref{propGFHTP}}
\hskip \parindent

We first list the outline of the proof of Theorem  \ref{propGFHTP}.
Define the set $S$ as the support of $\mathbf{x}_0$ and introduce two random variables, $\zeta_k$ and $\xi_k$, for $k\in [[s]]$, as follows:
\begin{align*}
 \zeta_k:=  & [(\mathbf{x}^{k-1}+t_{k,0}\mathbf{A}^\top\mathrm{sign}(\mathbf{b-Ax}^{k-1}))_S]_k^*,  \\
 \xi_k:=  & [(\mathbf{x}^{k-1}+t_{k,0}\mathbf{A}^\top\mathrm{sign}(\mathbf{b-Ax}^{k-1}))_{S^c}]_1^*.
\end{align*}
Here, $\zeta_k$ represents the $k$-th largest value of the elements in the subset $S$ of $|(\mathbf{x}^{k-1}+t_{k,0}\mathbf{A}^\top\mathrm{sign}(\mathbf{b-Ax}^{k-1}))_j|$, while $\xi_k$ denotes the largest value of the elements in the complementary subset $S^c$. Our outline can be summarized in the following three key steps:
\begin{enumerate}
\item[(a')] Establish that with high probability, $S^k\subseteq S$ for each $k\in [[s]]$. This is implied by the condition $\zeta_k>\xi_k$ for all $k\in [[s]]$. 
\item[(b')] Show that the event $\mathcal{E}=\{(\exists k\in [[s]]: \xi_k\geq \zeta_k \ and \ (\zeta_{k-1}>\xi_{k-1}, \cdots, \zeta_1>\xi_1))\}$ occurs with small probability. In particular, we obtain $\zeta_k > \xi_k$ for all $k\in[[s]]$ with high probability, which implies that $S^s = S$.
\item[(c')] Obtain the conclusion $\mathbf{x}^s= \mathbf{x}_0$, due to the contradiction that $\|\mathbf{x}_0-\mathbf{x}^s\|_2^2
  <\|\mathbf{x}_0-\mathbf{x}^s\|_2^2$.
\end{enumerate}

In order to prove the key step (a'), we need the following key proposition.
\begin{proposition}\label{prop.IndificationSupportSet}
(Support Recovery via Threshold Comparison) The event $\mathcal{F}=\{\xi_k\geq \zeta_k, k\in[[s]]\}$ occurs with probability at most 
\begin{align*}
 2(n-s)\exp\left(-\frac{\gamma_{k}^2m}{6s}\right),
\end{align*}
where $\gamma_k= \frac{\sqrt{\frac{2}{\pi}}}{\mu_{k,0}\tau(\Phi^{-1}(\frac{1+\tau+p}{2})+\epsilon)}\left(\frac{\sqrt{1-2\beta'_{k}}}{\lambda}-\sqrt{\beta_{k}}\right)$, $\beta_k= 1+\tau^2(\Phi^{-1}(\frac{1+\tau+p}{2})+\epsilon)^2(1+\delta_{s})^2\mu_{k,0}^2- 2c\sqrt{\frac{2}{\pi}}\left(\tau-\frac{|T_1^{k,0}|}{m}\right)(1-\delta_{s})\mu_{k,0}$, and $\beta_k'$ is denoted by Theorem \ref{propGFHTP}.  Moreover, one has $S^k\subseteq S$ for each $k\in [[s]]$ with high probability. 
\end{proposition}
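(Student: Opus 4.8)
The plan is to fix an index $k\in[[s]]$ and bound $\Pr(\xi_k\ge\zeta_k)$ on the ``good'' event $\mathcal G_{k-1}:=\{\zeta_j>\xi_j\text{ for all }j<k\}$ — the same conditioning that appears in step (b'), since $\mathcal G_{k-1}$ forces $S^j\subseteq S$ for every $j<k$. The first move is an \emph{oracle/independence reduction}: on $\mathcal G_{k-1}$ every support set ever formed is a subset of $S$ and every inner LAD subproblem is restricted to a subset of $S$, so that $\mathbf x^{k-1}$, the residual $\mathbf b-\mathbf{Ax}^{k-1}$, the sign pattern $\mathbf s:=\mathrm{sign}(\mathbf b-\mathbf{Ax}^{k-1})$ and the scalar $t_{k,0}$ are all measurable with respect to the columns of $\mathbf A$ indexed by $S$ together with $\boldsymbol\eta$ only. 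In particular, for each $j\in S^c$ the $j$-th column of $\mathbf A$ is independent of $\mathbf s$, and since $x^{k-1}_j=0$ one gets $v^k_j:=(\mathbf x^{k-1}+t_{k,0}\mathbf A^\top\mathbf s)_j=t_{k,0}(\mathbf A^\top\mathbf s)_j$, so that, conditionally on the $S$-randomness,
$$\xi_k=t_{k,0}\max_{j\in S^c}\big|(\mathbf A^\top\mathbf s)_j\big|$$
is a maximum of $n-s$ independent centered Gaussians, each of variance $\|\mathbf s\|_2^2/m^2\le 1/m$.

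The second step is a \emph{deterministic lower bound} $\zeta_k\ge Z_k:=\frac{\|\mathbf x_0-\mathbf x^{k-1}\|_2}{\sqrt{3s}}\big(\frac{\sqrt{1-2\beta'_k}}{\lambda}-\sqrt{\beta_k}\big)$, valid on $\mathcal G_{k-1}$ intersected with the RIP$_1$ and Proposition~\ref{prop1} events. This rests on three facts: (i) a contraction estimate of the type in Proposition~\ref{pro.ContractionOfInnerIteration}, applied to the candidate-support step with RIP order $|S\cup S^{k-1}|=s$, which gives $\|(\mathbf x_0-\mathbf v^k)_S\|_2\le\sqrt{\beta_k}\,\|\mathbf x_0-\mathbf x^{k-1}\|_2$; (ii) the flatness $x_1^*\le\lambda x_s^*$ together with a bound on $\|\mathbf x^{k-1}\|_2$ coming from optimality of the restricted LAD subproblem and RIP$_1$, which converts $\|\mathbf x_0-\mathbf x^{k-1}\|_2$ into a lower bound on $x_s^*$ of order $\|\mathbf x_0-\mathbf x^{k-1}\|_2/(\lambda\sqrt{s})$, the precise constant (involving $\sqrt{1-2\beta'_k}$) measuring how much of $\|\mathbf x_0-\mathbf x^{k-1}\|_2$ is carried by the still-uncaptured flat tail $(\mathbf x_0)_{S\setminus S^{k-1}}$ rather than by the error on $S^{k-1}$; and (iii) an order-statistic argument: at most $s-k$ of the $s$ coordinates on $S$ can have $|v^k_j|$ below a threshold $\theta$ without $\|(\mathbf x_0-\mathbf v^k)_S\|_2^2$ exceeding $(s-k+1)(x_s^*-\theta)^2$, so that $\zeta_k\ge x_s^*-\|(\mathbf x_0-\mathbf v^k)_S\|_2/\sqrt{s-k+1}$. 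Combining (i)--(iii) yields $\zeta_k\ge Z_k$. Simultaneously, Proposition~\ref{prop1}(ii) gives $t_{k,0}\le\mu_{k,0}\sqrt{\tfrac\pi2}\,\tau\big(\Phi^{-1}(\tfrac{1+\tau+p}{2})+\epsilon\big)\|\mathbf x_0-\mathbf x^{k-1}\|_2$, and dividing the two bounds (the $\sqrt{\pi/2}$ in $t_{k,0}$ producing the $\sqrt{2/\pi}$ prefactor of $\gamma_k$, with $\|\mathbf x_0-\mathbf x^{k-1}\|_2$ cancelling) gives exactly $Z_k/t_{k,0}\ge\gamma_k/\sqrt{3s}$. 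Note that $\beta_k\le\beta'_k<\frac{1}{2+\lambda^2}$ forces $1-2\beta'_k>\lambda^2\beta'_k\ge\lambda^2\beta_k$, i.e. $\gamma_k>0$.

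It then remains to assemble the pieces. On $\mathcal G_{k-1}$, since $\zeta_k\ge Z_k$ deterministically,
\begin{equation*}
\Pr(\xi_k\ge\zeta_k)\le\Pr\!\Big(\max_{j\in S^c}\big|(\mathbf A^\top\mathbf s)_j\big|\ge\tfrac{Z_k}{t_{k,0}}\Big)\le (n-s)\cdot 2\exp\!\Big(-\frac{(Z_k/t_{k,0})^2\,m^2}{2\|\mathbf s\|_2^2}\Big)\le 2(n-s)\exp\!\Big(-\frac{\gamma_k^2 m}{6s}\Big),
\end{equation*}
using the union bound over $j\in S^c$, the standard Gaussian tail, and $\|\mathbf s\|_2^2\le m$; this is the asserted bound on $\mathcal F$ at index $k$, and summing over $k\in[[s]]$ then controls the first-failure event $\mathcal E$ of step (b').

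The step I expect to be the main obstacle is the deterministic lower bound $\zeta_k\ge Z_k$: one must account accurately for how $\|\mathbf x_0-\mathbf x^{k-1}\|_2$ splits between the uncaptured flat tail $(\mathbf x_0)_{S\setminus S^{k-1}}$ and the error on the already-selected support, and for how an $\ell_2$-bounded perturbation of a flat vector degrades its $k$-th order statistic, so that the resulting constant is sharp enough to keep $\gamma_k=\frac{\sqrt{2/\pi}}{\mu_{k,0}\tau(\Phi^{-1}+\epsilon)}\big(\frac{\sqrt{1-2\beta'_k}}{\lambda}-\sqrt{\beta_k}\big)$ strictly positive — which is precisely where the hypothesis $\beta'_k<\frac{1}{2+\lambda^2}$ of Theorem~\ref{propGFHTP} enters. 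A secondary technical point is making the oracle replacement rigorous, i.e. verifying that on $\mathcal G_{k-1}$ the actual run of GFHTP$_1$ coincides with a run that never inspects columns outside $S$, which legitimises the conditional independence used in the Gaussian tail bound.
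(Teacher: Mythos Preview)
Your overall strategy --- lower-bound $\zeta_k$, upper-bound $\xi_k$ through the Proposition~\ref{prop1} bound on $t_{k,0}$, then apply a tail inequality to $\max_{j\in S^c}|(\mathbf A^\top\mathbf s)_j|$ --- is exactly the paper's, and your contraction step (i) and order-statistic step (iii) match the paper's argument for the lower bound on $\zeta_k$.

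Your step (ii), however, is not how the paper extracts the constant $\sqrt{1-2\beta'_k}$, and the route you sketch does not work as stated: $\mathbf x^{k-1}=\mathbf u^{k-1,L+1}$ is \emph{not} the LAD minimizer on $S^{k-1}$, only the output of $L$ inner subgradient steps, so no ``optimality of the restricted LAD subproblem'' is available. The paper instead splits $\|\mathbf x^{k-1}-\mathbf x_0\|_2^2=\|(\mathbf x^{k-1}-\mathbf x_0)_{S^{k-1}}\|_2^2+\|(\mathbf x_0)_{(S^{k-1})^c}\|_2^2$ and, on the $S^{k-1}$ part, adds and subtracts $t_{k,0}\mathbf A^\top\mathbf s$ via $\|a\|^2\le 2\|a+b\|^2+2\|b\|^2$: the term $\|(\mathbf x^{k-1}-\mathbf x_0+t_{k,0}\mathbf A^\top\mathbf s)_{S}\|_2^2$ is controlled by $\beta_k\|\mathbf x^{k-1}-\mathbf x_0\|_2^2$ (your (i)) and $t_{k,0}^2\|(\mathbf A^\top\mathbf s)_S\|_2^2$ by the $F_1$-type estimate, giving $(1-2\beta'_k)\|\mathbf x^{k-1}-\mathbf x_0\|_2^2\le\|(\mathbf x_0)_{(S^{k-1})^c}\|_2^2$. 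This is precisely where $\beta'_k=\beta_k+\tau^2(\Phi^{-1}+\epsilon)^2(1+\delta_s)^2\mu_{k,0}^2$ arises. Combined with $\|(\mathbf x_0)_{(S^{k-1})^c}\|_2\le\sqrt{s-k+1}\,\lambda x_s^*\le\lambda\|(\mathbf x_0)_{T^{s-k+1}}\|_2$ it yields $\zeta_k\ge\frac{1}{\sqrt{s-k+1}}\bigl(\tfrac{\sqrt{1-2\beta'_k}}{\lambda}-\sqrt{\beta_k}\bigr)\|\mathbf x^{k-1}-\mathbf x_0\|_2$, sharper by a factor $\sqrt 3$ than your $Z_k$.

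On the concentration step the paper differs in bookkeeping: it invokes Hoeffding's inequality and then the high-probability bound $\max_\ell\|\tilde{\mathbf a}_\ell\|_2^2\le 3/m$ from Lemma~\ref{lem1.3}; the factor $3$ in the denominator $6s$ enters there, not through slack in the $\zeta_k$ bound. Your conditional Gaussian tail (conditioning on the $S$-columns so that $\mathbf s$ is fixed and $\tilde{\mathbf a}_\ell$, $\ell\in S^c$, remains fresh Gaussian) is cleaner and in fact gives $2s$ instead of $6s$; moreover, your explicit oracle/independence reduction makes rigorous an independence that the paper's Hoeffding step uses without justification.
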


\section{Numerical Experiments}\label{s4}
\hskip\parindent

In this section, we embark on numerical experiments to demonstrate the efficacy of the GFHTP$_1$ algorithm in tackling sparse signal recovery challenges, particularly in the presence of outliers. Our synthetic and real-world data experiments validate the robustness and efficiency of GFHTP$_1$ in identifying and mitigating the impact of outliers, aligning with the theoretical guarantees presented in Theorems \ref{thm3.1} and \ref{propGFHTP}.
\subsection{Experiments Settings}\label{s4.1}
\hskip\parindent

We commence by outlining the foundational parameters for our experiments. Firstly, the sensing matrix $\mathbf{A}\in \mathbb{R}^{m\times n}$ is constructed with entries drawn from the i.i.d. Gaussian distribution $\mathcal{N}(0,\frac{1}{m^2})$.  For the purpose of this section, we generate two types of $s$-sparse underlying signal $\mathbf{x}_0\in \mathbb{R}^n$, referred to as 
\begin{itemize}
  \item[(i)] the Gaussian signal with its non-zero entries being i.i.d. standard Gaussian variables;
  \item[(ii)] the `flat' signal with $(\mathbf{x}_0)_j= 1$ for $j\in S$, where $S= \mathrm{supp}(\mathbf{x}_0)$ and $|S|= s$.
\end{itemize} 
To introduce outliers, we first identify $pm$ random positions, where the ratio $p$ represents the proportion of non-zero elements in $\boldsymbol{\eta}$. We consider two types of corruptions:
\begin{itemize}
  \item[(i)] Gaussian outliers: we populate these positions with i.i.d. Gaussian entries with mean zero and variance $\sigma_{\mathrm{outliers}}^2$, while the rest are set to zero;
  \item[(ii)] Uniform outliers: we populate these positions with uniform distribution $\mathcal{U}(-u_{\mathrm{outliers}}, u_{\mathrm{outliers}})$, while the rest are set to zero.
\end{itemize} 

The measurement vector
$\mathbf{b}$ is constructed in accordance with Equation \eqref{problem1}, with $b_i= \mathbf{a}_i^\top \mathbf{x}_0+\boldsymbol{\eta}_i$ for $i\in T$ and $b_i= \mathbf{a}_i^\top \mathbf{x}_0$ for $i\in T^c$, where $T= \mathrm{supp}(\boldsymbol{\eta})$ denotes the support of the outliers $\boldsymbol{\eta}$. 

Lastly, we mark $\hat{\mathbf{x}}$ as the recovered signal. We evaluate the numerical performance using the following metric for all algorithms:
\begin{itemize}
  \item[(i)] Relative error ($\mathrm{RelErr}$):  $\mathrm{RelErr}(\hat{\mathbf{x}}, \mathbf{x}_0)= \|\hat{\mathbf{x}}-\mathbf{x}_0\|_2/\|\mathbf{x}_0\|_2$;
  \item[(ii)] Success rate (SR): the success rate of recovery in 100 trials, and a successful reconstruction is declared when $\mathrm{RelErr}(\hat{\mathbf{x}}, \mathbf{x}_0)\leq \epsilon $ with $\epsilon=10^{-4}$.
\end{itemize}
All settings, unless otherwise specified, shall be selected as above. Each experiment is replicated $100$ times to ensure statistical significance and all results report the mean value over 100 independent trials. Throughout Subsections \ref{s4.2} and \ref{s4.3}, we set the signal and measurement dimensions to $n= 5000$ and $m= 1000$.

\subsection{The Performance of Our Algorithms}\label{s4.2}
\hskip\parindent
{\bf a) The Chooses of Step Sizes $\mu_{k,l}$}

Firstly, we test the performance of different step sizes $\mu_{k,l}$ in Remark \ref{remark2.3}. Here we take the gaussian outliers with the parameter $\sigma_{\mathrm{outliers}}= 10$, the quantile $\tau= 0.5$, and $s= 5, 10, 15$. The numbers of iterations for the outer and inner iterations are $\mathrm{MaxIt}= 30$ and $L= 10$ respectively. Figure \ref{u_choose} shows the performance of the algorithm in terms of relative error, average CPU time, and the sparsity of the recovered signal when the sparsity is $5, 10,$ and $15$, the outliers ratio ranges from $0.05$ to $0.5$, and different $\mu_{k,l}$ values are selected. Notably, our theoretical result requires that $\mu_{k,l}$  to be fixed within the interval $(1.3695, 3.3362)$ when $\tau= 0.5, p= 0.05$. However from Figure \ref{u_choose}, we can see that our GFHTP$_1$ exhibits superior performance when the step size is $\mu_{k,l}= 4, 6, 8$. Especially, when $\mu_{k,l}= 1$, the relative error is small, however the sparsity of the recovered signal is not exact. This discrepancy between the theoretical result and empirical effectiveness highlights an important direction for future work. Motivated by these experimental findings, we adopt $\mu_{k,l}= 6$ for all subsequent experiments.
\begin{figure}[htbp]
\centering
\subfigure[The performance when $s= 5$]
{
\begin{minipage}{15cm}
\centering
\includegraphics[width=\linewidth]{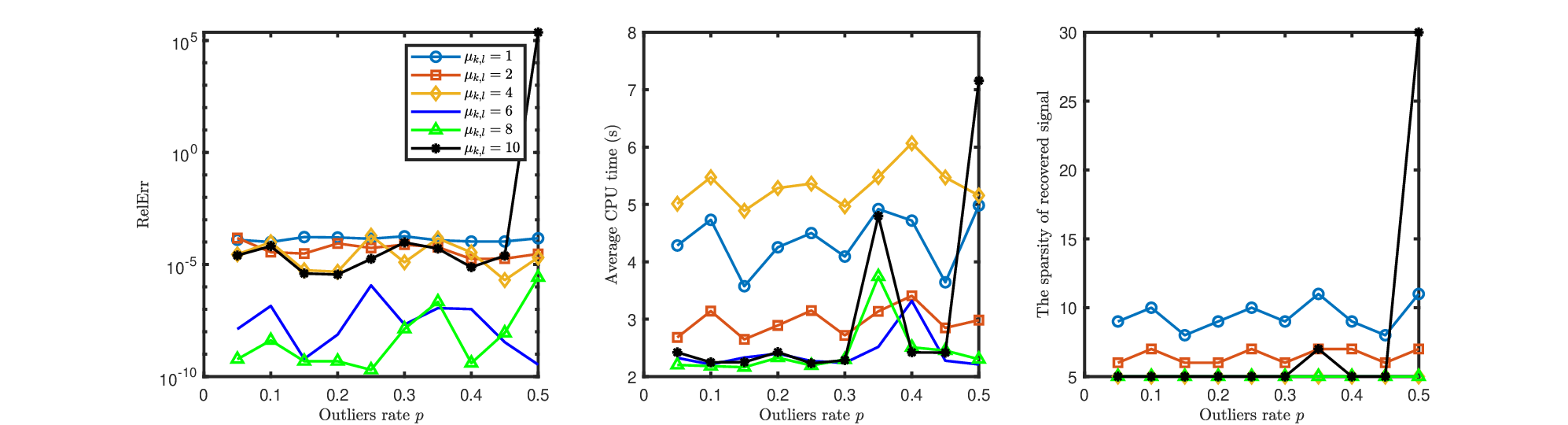}%
\end{minipage}
}
\subfigure[The performance when $s= 10$]
{
\begin{minipage}{15cm}
\centering
\includegraphics[width=\linewidth]{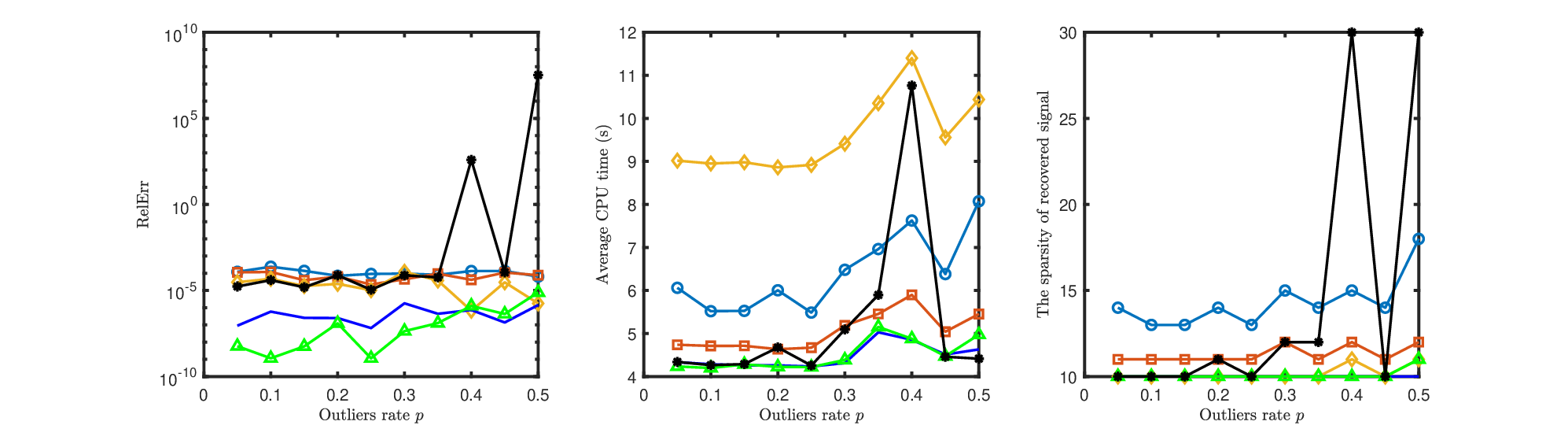}%
\end{minipage}
}
\subfigure[The performance when $s= 15$]
{
\begin{minipage}{15cm}
\centering
\includegraphics[ width=\linewidth]{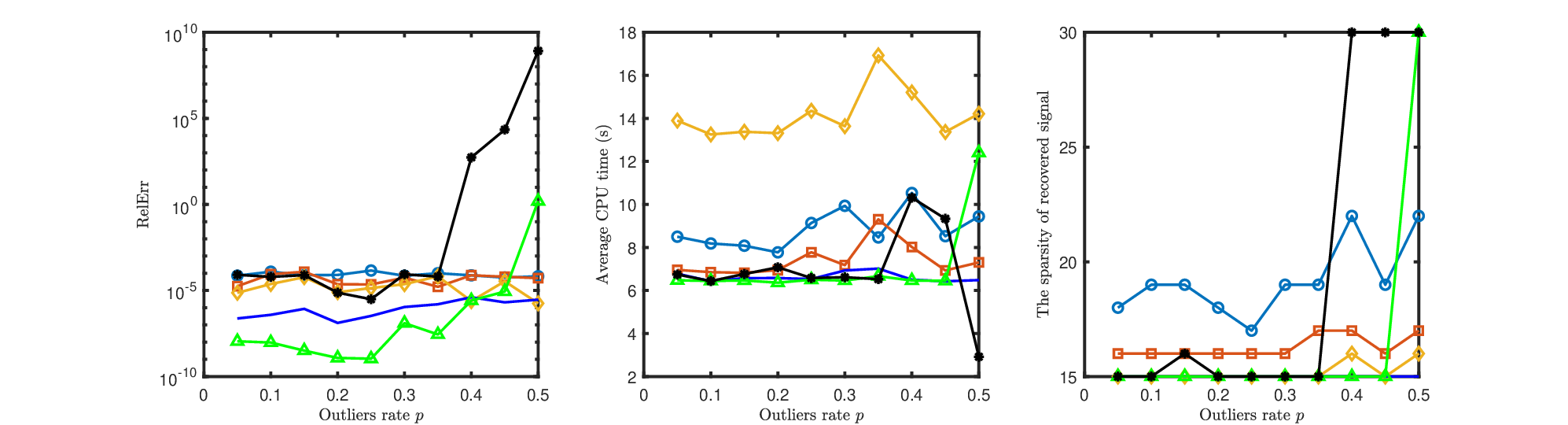}%
\end{minipage}
}
\caption{Relative error, average CPU time, the sparsity of the recovered signal for the GFHTP$_1$ algorithm using Gaussian outliers.}
\label{u_choose}
\end{figure}

{\bf b) The Choices of Inner Iterations $L$}

Next, we explore the performance of different inner iterations $L$. Some fundamental settings are $\sigma_{\mathrm{outliers}}= 10$, the quantile $\tau= 0.5$, $\mu_{k,l}= 6$, and $s= 5, 10, 15$. The number of iterations for the outer iteration is $\mathrm{MaxIt}= 30$. We show this result in Figure \ref{max_iter}. From Figure \ref{max_iter}, we can know that as $L$ increases, the relative error decreases, but the time increases accordingly. Meanwhile, the sparsity of recovered signal is inaccurate when $L= 1$. Therefore, considering both time and accuracy, we choose $L= 10$.
\begin{figure}[htbp]
\centering
\subfigure[The performance when $s= 5$]
{
\begin{minipage}{15cm}
\centering
\includegraphics[width=\linewidth]{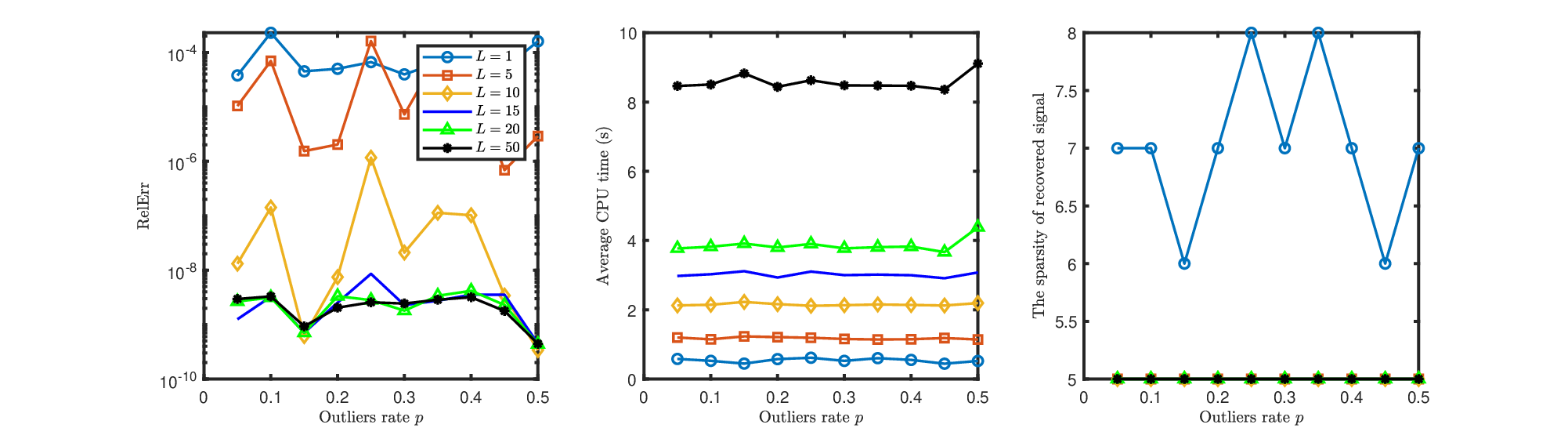}%
\end{minipage}
}
\subfigure[The performance when $s= 10$]
{
\begin{minipage}{15cm}
\centering
\includegraphics[width=\linewidth]{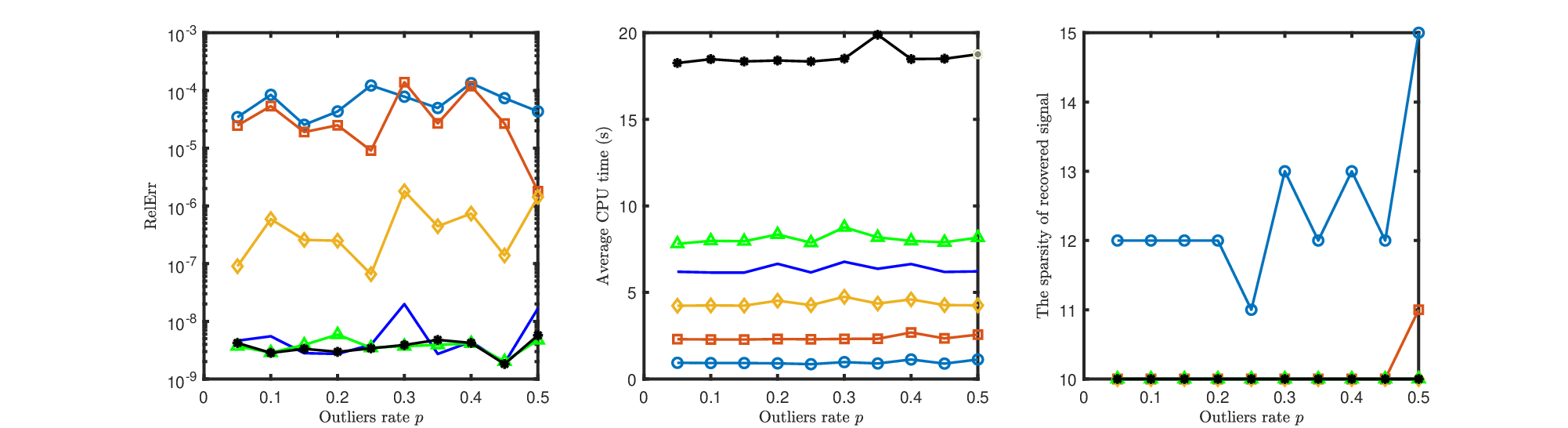}%
\end{minipage}
}
\subfigure[The performance when $s= 15$]
{
\begin{minipage}{15cm}
\centering
\includegraphics[ width=\linewidth]{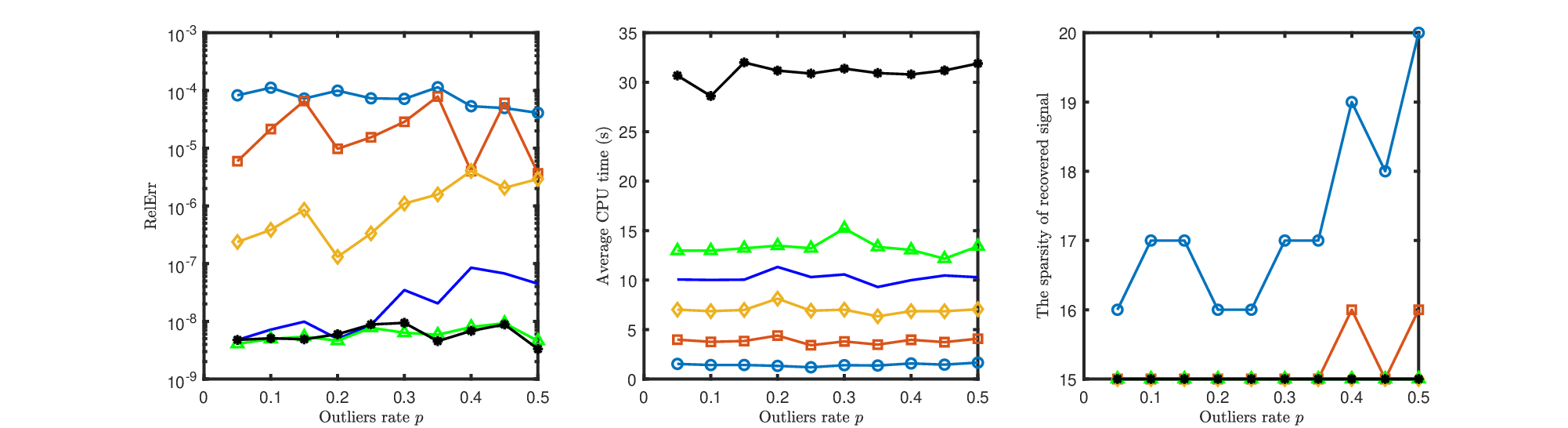}%
\end{minipage}
}
\caption{Numerical performance for the GFHTP$_1$ algorithm using Gaussian outliers when various $L$.}
\label{max_iter}
\end{figure}

{\bf c) Performance of Different Quantiles $\tau$}

Then, we consider the result of different quantiles $\tau$. Some fundamental settings are $\sigma_{\mathrm{outliers}}= 10, u_{\mathrm{outliers}}= 100$, $s= 5$, $\mu_{k,l}= 6$, $L= 10$. The number of iterations for the outer iteration is $\mathrm{MaxIt}= 30$. Figure \ref{tau_choose} shows the performance of the algorithm in terms of relative error when the sparsity is $5$, the outliers ratio ranges from $0.05$ to $0.55$, and different $\tau$ values are selected. From Figure \ref{tau_choose}, when the quantile $\tau$ is $0.35$, it can be seen that a more accurate recovery can be achieved even when the proportion of outliers is $0.55$. When the quantile $\tau$ is $0.5$, the effect is very good except when the proportion of outliers is $0.55$. However, when the quantile $\tau$ is $0.8$, the effect is not satisfactory even when the proportion of outliers is $0.2$, that is, $\tau \leq 1-p$. Therefore, in the subsequent experiments, we choose a more general quantile, namely the median.
\begin{figure}
\centering
\includegraphics[width=\linewidth]{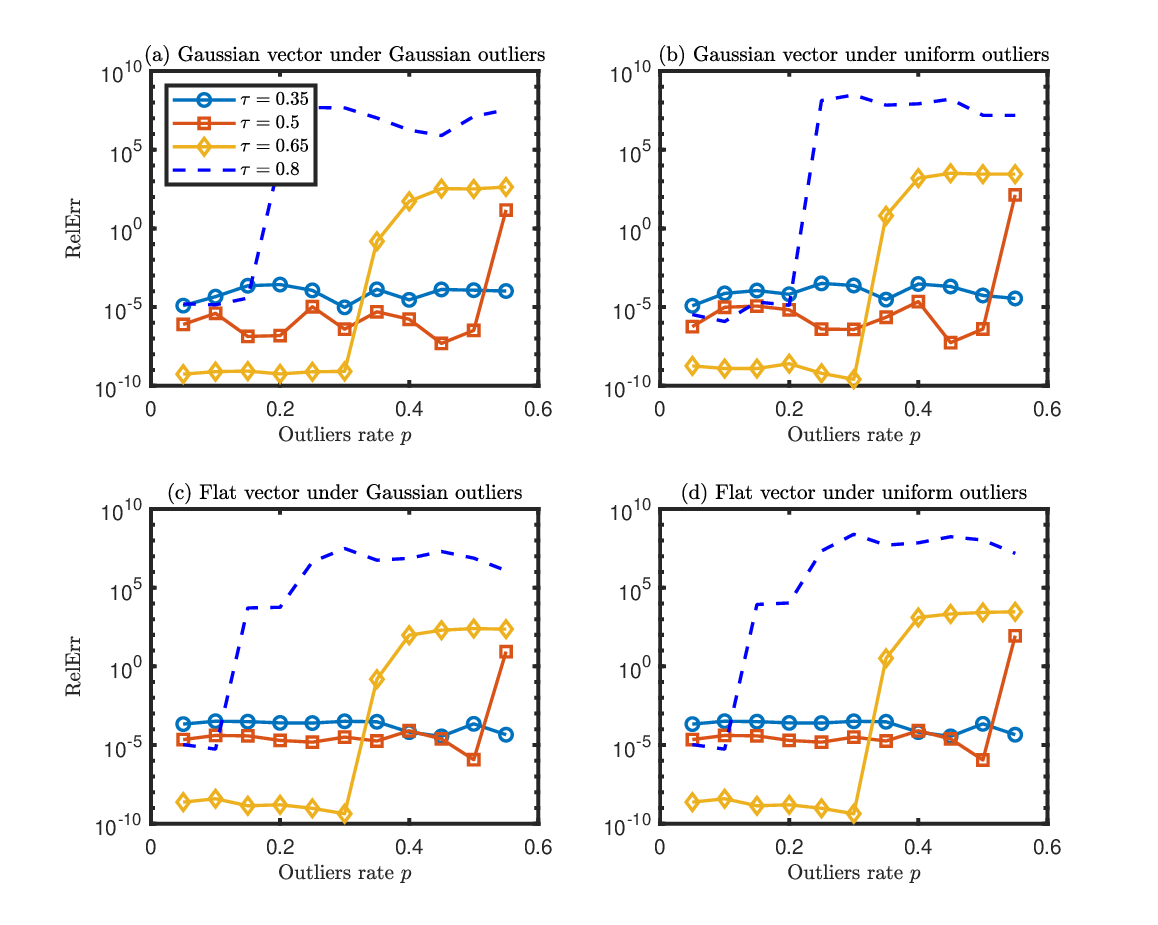}
\caption{Numerical performance for the GFHTP$_1$ algorithm when various $\tau$.}
\label{tau_choose}
\end{figure}

Similarly, we can get same results using the FHTP$_1$ algorithm. Thus, we set $\mu_{k,l}= 6, L= 10$. 

Then, our result is shown in Figure \ref{ourperformance}, thus implying that the GFHTP$_1$ with adaptive step sizes can exactly recover the underlying sparse signal $\mathbf{x}_0$ when the sparsity of the real signal is unknown. 
\begin{figure}[htbp]
\centering
\subfigure[Original and recovered signals]
{
\begin{minipage}{7cm}
\centering
\includegraphics[scale = 0.45]{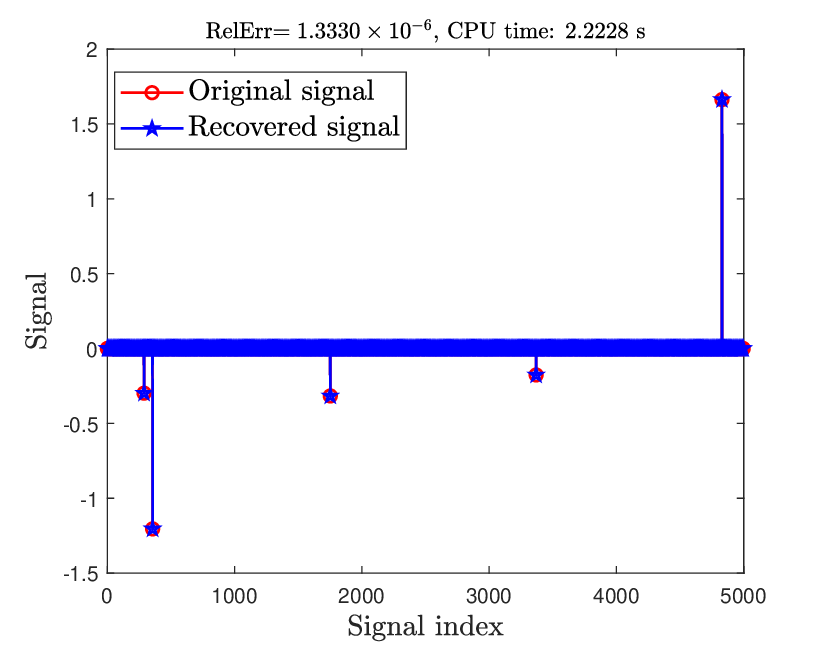}%
\end{minipage}
}
\subfigure[Convergence behaviors of GFHTP$_1$]
{
\begin{minipage}{7cm}
\centering
\includegraphics[scale = 0.45]{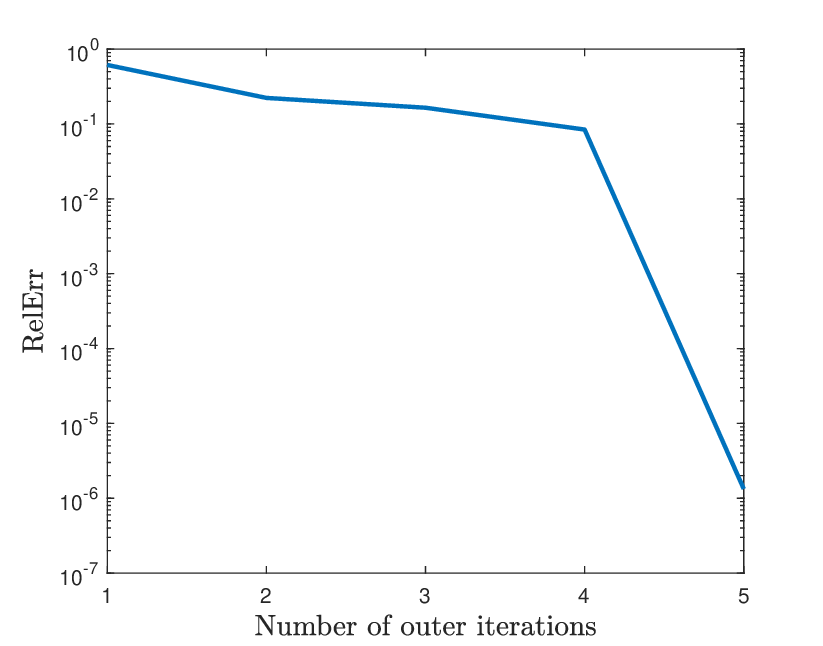}%
\end{minipage}
}
\caption{Performance of our proposed GFHTP$_1$ algorithm for Gaussian vector and Gaussian outliers with $\sigma_{\mathrm{outliers}}= 10$ and $p= 0.2$.}
\label{ourperformance}
\end{figure}
Especially, we assume that the sparsity of the original signal is known, we can show the performance of FHTP$_1$ algorithm in Figure \ref{FHTPperformance}. Compared with Figure \ref{ourperformance}, we can see that FHTP$_1$ is faster  with comparable result than GFHTP$_1$. However, the sparsity is usually unknown and inaccurate sparsity can have a huge impact on the recovery effect. Therefore, studying GFHTP$_1$ makes sense, although it takes more time.
\begin{figure}[htbp]
\centering
\subfigure[Original and recovered signals]
{
\begin{minipage}{7cm}
\centering
\includegraphics[scale = 0.45]{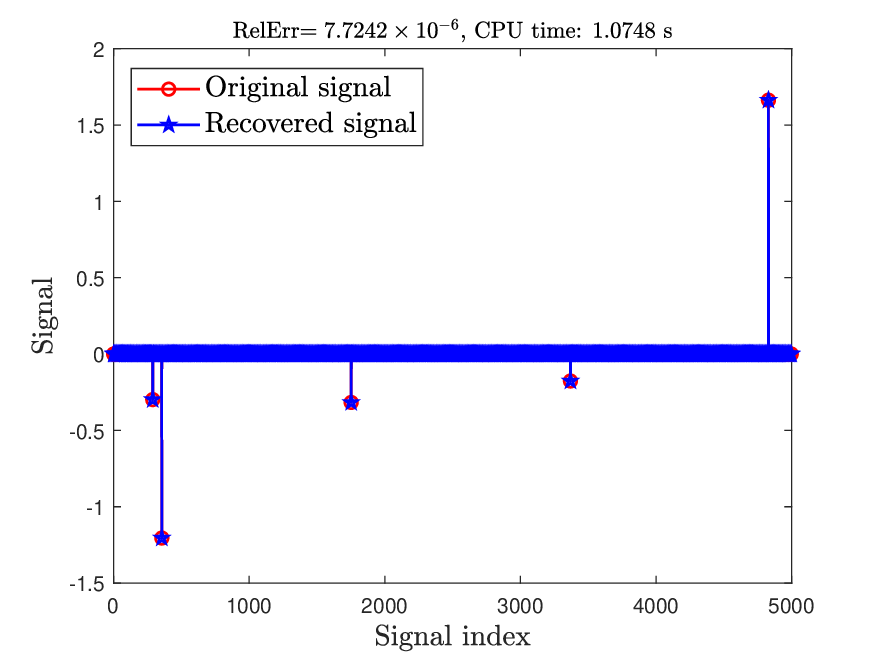}%
\end{minipage}
}
\subfigure[Convergence behaviors of FHTP$_1$]
{
\begin{minipage}{7cm}
\centering
\includegraphics[scale = 0.45]{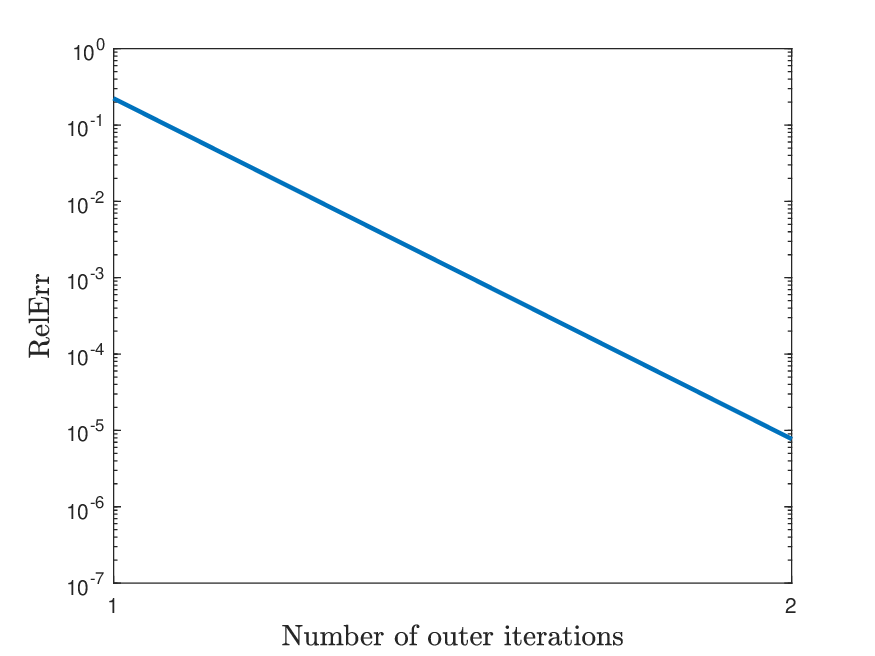}%
\end{minipage}
}
\caption{ Performance of our proposed FHTP$_1$ algorithm for Gaussian vector and Gaussian outliers with $\sigma_{\mathrm{outliers}}= 10$ and $p= 0.2$.}
\label{FHTPperformance}
\end{figure}

Lastly, we verify the effect of Theorem \ref{propGFHTP}. We choose the `flat' vector as original signal. Some fundamental settings are $\sigma_{\mathrm{outliers}}= 10$, $s= 5$, $\mu_{k,l}= 6$, $L= 10$. The result is shown in Figure \ref{GFHTP_flat}. From Figure \ref{GFHTP_flat}, we can see that $S^1, S^2, S^3, S^4, S^5\subseteq S$ and $S^5= \mathrm{supp}(\mathbf{x}_0)$. It can be approximately regarded as $\mathbf{x}^5 = \mathbf{x}_0$. Therefore, Theorem \ref{propGFHTP} has been verified.
\begin{figure}
\centering
\includegraphics[scale = 0.48]{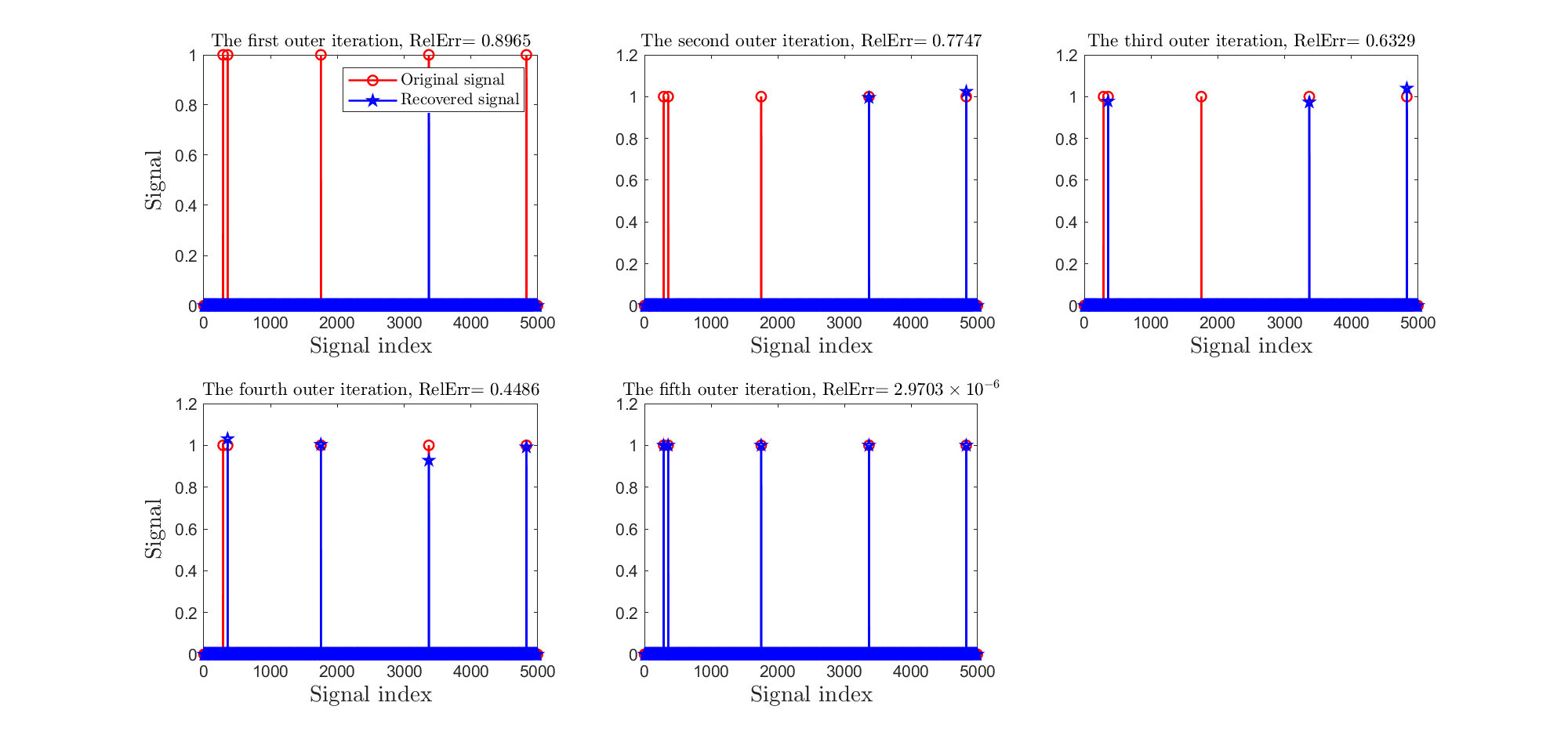}
\caption{Numerical performance for the GFHTP$_1$ algorithm under the `flat' vector case.}
\label{GFHTP_flat}
\end{figure}

\subsection{Comparison with Other Algorithms}\label{s4.3}
\hskip\parindent

In this section, we first compare the several methods with Gaussian vector whose $s$ nonzero entries are independent standard normal random variables in no outliers case. We set $p= 0, L= 10$. We give the fundamental settings of the other existing algorithms in sparse signal recovery: PSGD (with real sparsity $s$), PSGD$^{2}$ (with the sparsity level as $2s$), AIHT (with real sparsity $s$), AIHT$^{2}$ (with the sparsity level as $2s$). The PSGD and PSGD$^{2}$ algorithms utilize the step size $\mu_k=0.8\times 0.95^k$. And we used the criterion $\mathrm{RelErr}(\mathbf{x}^{k+1}, \mathbf{x}^k)\leq 10^{-8}$ or the maximum iteration number $1000$ for PSGD, PSGD$^{2}$, AIHT and AIHT$^{2}$.
We present here a comparison between PSGD, PSGD$^{2}$, AIHT, AIHT$^{2}$ and GFHTP$_1$, FHTP$_1$ in terms of successful recovery. The result is shown in Figure \ref{comparenoiselessnooutliers}. From Figure \ref{comparenoiselessnooutliers}, we can find that our algorithm GFHTP$_1$  displays more robustness on the sparsity than the other algorithms. Our algorithm needs more CPU time than that of AIHT because it takes too much time to search for true sparsity. 
\begin{figure}[htbp]
\centering
\subfigure[Success rates]
{
\begin{minipage}{7cm}
\centering
\includegraphics[scale = 0.45]{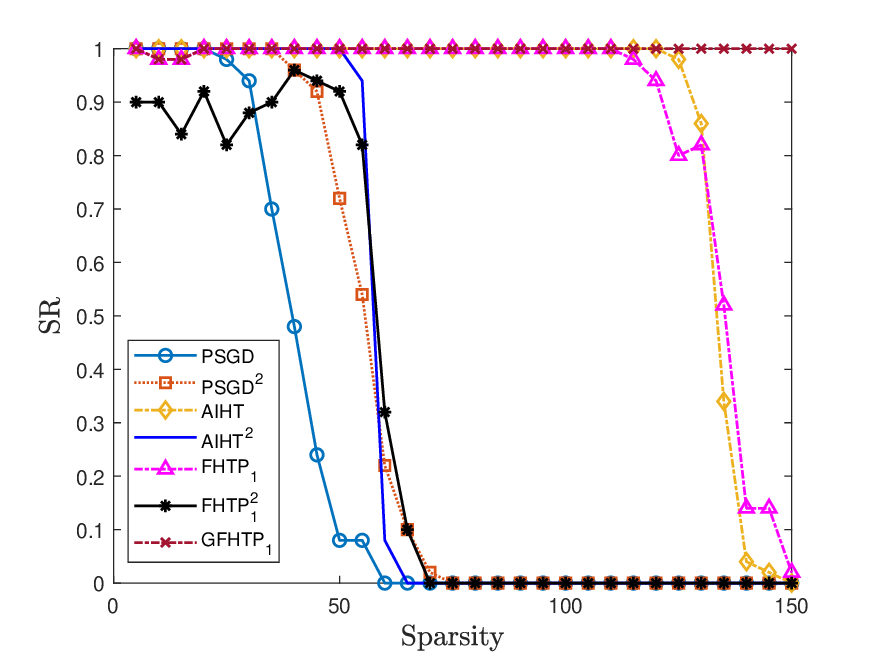}%
\end{minipage}
}
\subfigure[Average CPU time]
{
\begin{minipage}{7cm}
\centering
\includegraphics[scale = 0.45]{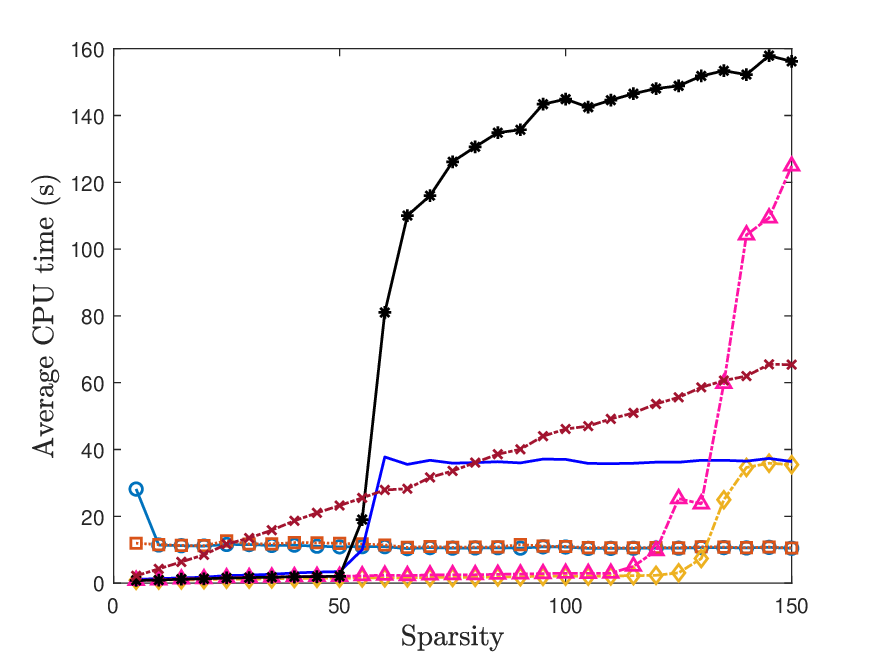}%
\end{minipage}
}
\caption{Rates of successful recoveries and average CPU times for PSGD, PSGD$^2$, AIHT, AIHT$^{2}$ and GFHTP$_1$, FHTP$_1$ using Gaussian measurements.}
\label{comparenoiselessnooutliers}
\end{figure}

Next, we present the results for outliers case. For AIHT algorithm, we set $p= 0.05, 0.1, \sigma_{\mathrm{outliers}}= 10, 100$, the results are shown in Figure \ref{AIHT_two}. It is easy to see that the two methods are invalid for outliers case. So we compare the performance of the remaining algorithms. We set $s= 5, 10, \sigma_{\mathrm{outliers}}= 10$. For different outliers rate $p= 0.05i (i= 1, \cdots, 10)$, the success rates and avrage CPU time are shown in Tables \ref{outliersrate} and \ref{outliersrate_ones}. We can see that our algorithm is also effective when the outliers rate is high. And our algorithm takes less time and works better than PSGD.
\begin{figure}[htbp]
\centering
\subfigure[SNR in different outliers rates and $\sigma_{\mathrm{outliers}}$ under Gaussian vector]
{
\begin{minipage}{7cm}
\centering
\includegraphics[scale = 0.45]{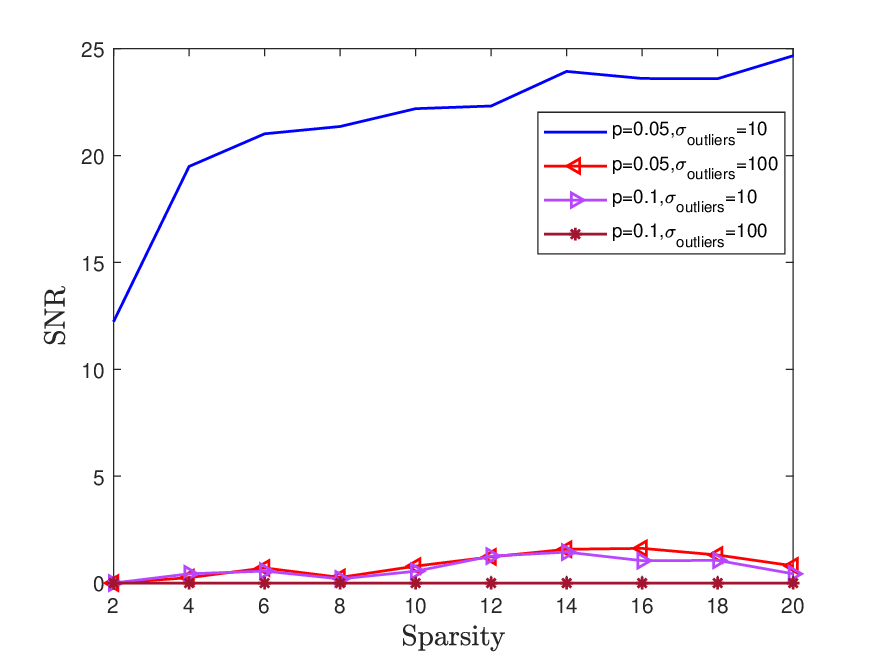}%
\end{minipage}
}
\subfigure[SNR in different outliers rates and $\sigma_{\mathrm{outliers}}$ under `flat' vector]
{
\begin{minipage}{7cm}
\centering
\includegraphics[scale = 0.45]{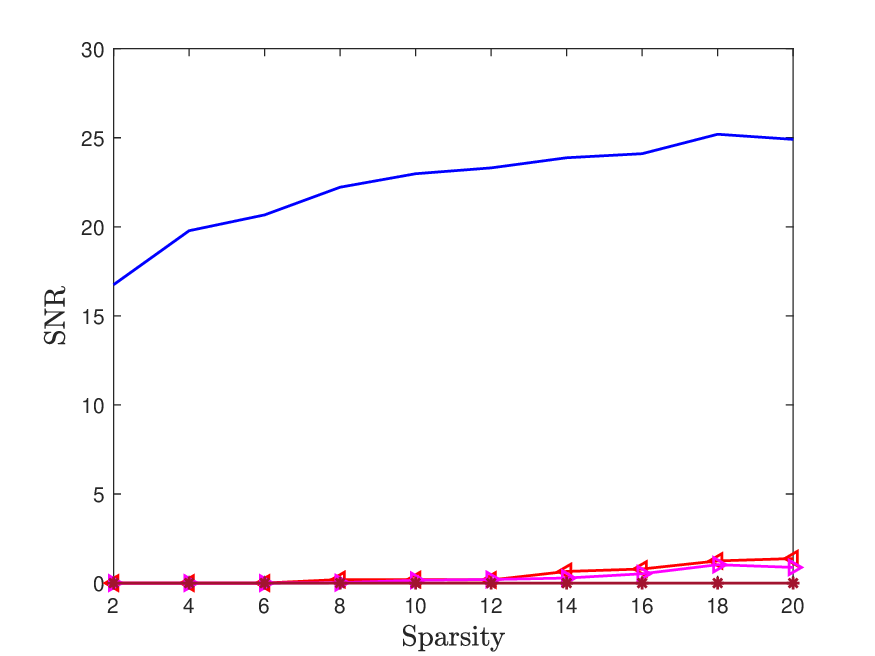}%
\end{minipage}
}
\caption{SNR for AIHT using Gaussian vector and `flat' vector in outliers case.}
\label{AIHT_two}
\end{figure}

\begin{table}[htbp]
    \centering
    \caption{The success rates and average CPU time (s) in different outliers rates and different sparsity by different algorithms under Gaussian vector.}
    \label{outliersrate}
    \setlength{\tabcolsep}{2mm}{
    \begin{tabular}{c|ccc|ccc}
    \hline
         Sparsity& & $s= 5$ & & & $s= 10$ & \\ \hline
        Outliers rates& PSGD & FHTP$_1$ & GFHTP$_1$ & PSGD & FHTP$_1$ & GFHTP$_1$\\ \hline
        $0.05$ & 1,12.2750& 1,0.8126& 1,2.2274& 1,11.9584& 0.99,1.0992& 0.99,4.5080 \\
        $0.10$ &  1,11.9099& 1,0.8284& 1,2.1784& 1,11.9567& 1,1.1192& 1,4.4967 \\
        $0.15$ & 1,11.8507& 1,0.8209& 1,2.1786& 1,11.9303& 0.99,1.1094& 0.99,4.4813 \\
        $0.20$ &  1,11.8399& 1,0.8279& 1,2.1924& 0.98,11.8560& 1,1.1960& 1,4.5566 \\
        $0.25$ & 1,11.8435& 1,0.8766& 1,2.1838& 0.93,11.7049& 0.99,1.1840& 0.99,4.5112 \\
        $0.30$ &  1,11.8230& 1,0.8913& 1,2.1888&  0.86,11.2739& 1,1.2191& 1,4.4501 \\
        $0.35$ & 0.98,11.5808& 1,0.9036& 1,2.1803&  0.76,11.0234& 1,1.3530& 1,4.4405 \\
        $0.40$ &   0.96,11.5065& 1,0.9359& 1,2.1820&  0.47,10.3424&  1,1.3996&  1,4.4441 \\
        $0.45$ &  0.72,10.8374& 1,1.0428&  1,2.1862&   0.25,9.8564&  1,1.4571&  1,4.4539 \\
        $0.50$ &   0.71,10.7115&  1,1.0163&  1,2.1792&   0.05,9.6208&  1,1.5884&  1,4.4620 \\
    \hline
    \end{tabular}}
\end{table}

\begin{table}[htbp]
    \centering
    \caption{The success rates and average CPU time (s) in different outliers rates and different sparsity by different algorithms under `flat' vector.}
    \label{outliersrate_ones}
    \setlength{\tabcolsep}{2mm}{
    \begin{tabular}{c|ccc|ccc}
    \hline
         Sparsity& & $s= 5$ & & & $s= 10$ & \\ \hline
        Outliers rates& PSGD & FHTP$_1$ & GFHTP$_1$ & PSGD & FHTP$_1$ & GFHTP$_1$\\ \hline
        $0.05$ & 1,12.0905& 1,0.4454& 1,2.2258& 1,12.5671& 1,~0.4764& 0.99,4.7804 \\
        $0.10$ &  1,11.4555& 1,0.4230& 1,2.1067& 1,11.9900& 1,~0.4513& 1,4.5219 \\
        $0.15$ & 1,11.4388& 1,0.4247& 1,2.1157& 1,11.4821& 1,~0.4349& 1,4.3494 \\
        $0.20$ &  1,11.3687& 1,0.4214& 1,2.1105& 1,11.4300& 1,~0.4820& 1,4.3460 \\
        $0.25$ & 1,11.3648& 1,0.4233& 1,2.1141& 1,11.3538& 0.99,~0.4893& 1,4.3378 \\
        $0.30$ &  1,11.2867& 1,0.4222& 1,2.1093&  0.91,11.0272& 1,~0.6082& 1,4.3319 \\
        $0.35$ & 1,11.2749& 1,0.4233& 1,2.1129&  0.80,10.7377& 1,~0.6797& 1,4.3307 \\
        $0.40$ &   1,11.2021& 1,0.4270& 1,2.1134&  0.26,9.3547&  1,~0.8074&  1,4.3360 \\
        $0.45$ &  0.94,12.0457& 1,0.5136&  1,2.3220&   0.09,9.2260&  1,~0.8541&  1,4.3486 \\
        $0.50$ &   0.82,10.9066&  1,0.5122&  1,2.1932&   0,9.2853&  1,~0.9455&  1,4.3274 \\
    \hline
    \end{tabular}}
\end{table}

\subsection{Test on Real Data}\label{Test.RealData}
\hskip\parindent

 In this subsection, we validate the effectiveness of our proposed algorithms in the context of image restoration.
 Specially, the MNIST test dataset-comprising 10000 handwritten digit images-is adopted for experimental evaluation. Each image in this dataset has a resolution of $28 \times 28$ pixels, with a black background and white foreground content. Owing to the limited proportion of white areas, the images can be considered sparse. Each image is vectorized into a column vector as the sparse vector $\mathbf{x}_0\in\mathbb{R}^{n}$ with $n=784$. The observation vector $\mathbf{b}$ is obtained through $\mathbf{b}=\mathbf{A}\mathbf{x}_0+\boldsymbol{\eta}$ with the outliers $\boldsymbol{\eta}$. Subsequently, the different algorithms are applied to reconstruct the original sparse 
 signal $\mathbf{x}_0$ with given $\mathbf{b}$ and $\mathbf{A}$. Here $m= 700, p= 0.1, \sigma_{\mathrm{outliers}}= 10$. Figure \ref{Number_image} gives examples of image recovery for different MNIST images. From Figure \ref{Number_image}, we can see that our algorithms outperform the PSGD algorithm. Table \ref{figure_number} presents that our algorithms perform better than the PSGD algorithm, both in terms of SNR and CPU time.
\begin{figure}[htbp]
\centering
\includegraphics[scale = 1]{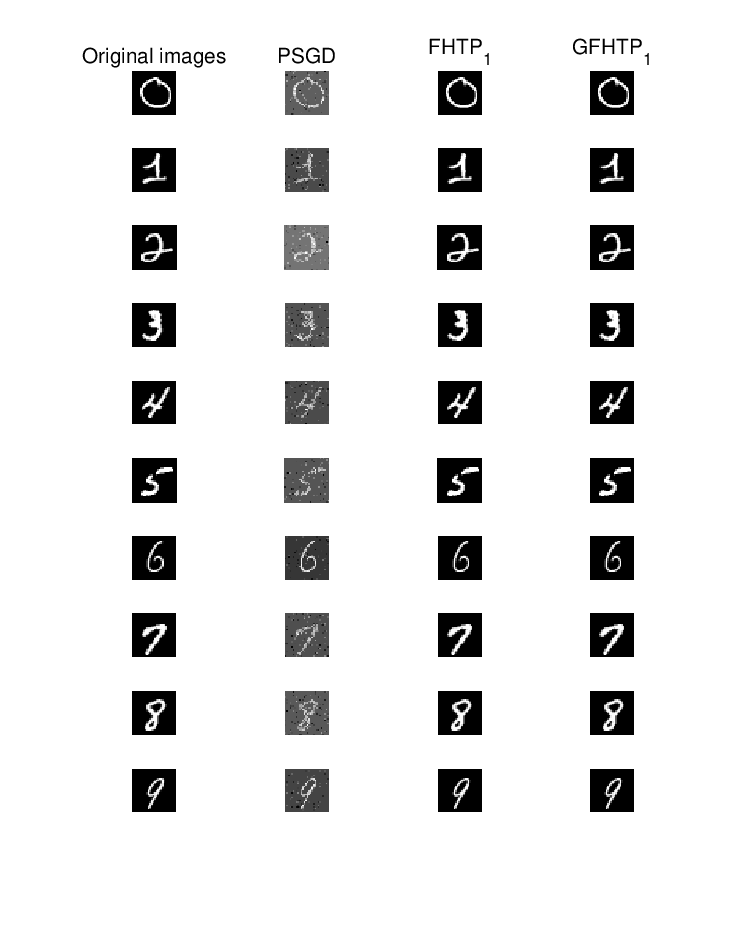}%
\caption{Examples of image recovery for different MNIST dataset images.}
\label{Number_image}
\end{figure}

\begin{table}[htbp]
    \centering
    \caption{The SNR and average CPU time (s) by different algorithms for MNIST dataset images.}
    \label{figure_number}
    \setlength{\tabcolsep}{2mm}{
    \begin{tabular}{c|ccc}
    \hline
        Original Images & PSGD & FHTP$_1$ & GFHTP$_1$ \\ \hline
        Number 0 ($s=138$) & 3.9810,1.3670& \textbf{88.7157},\textbf{0.6711}& 85.4613,9.0405 \\
        Number 1 ($s=139$) & 4.3541,1.4701& 89.8683,\textbf{0.7665}& \textbf{90.2763},9.4709 \\
        Number 2 ($s=150$) & 3.4944,1.3526& \textbf{97.4279},\textbf{0.8991}& 96.2120,10.5938 \\
        Number 3 ($s=155$) & 3.1696,1.3157& \textbf{102.8420},\textbf{1.0409}& 90.5717,9.8892 \\
        Number 4 ($s=130$) & 3.5962,1.3465& \textbf{111.0775},\textbf{0.9729}& 110.1130,8.3666 \\
        Number 5 ($s=111$) & 2.8171,1.2974& 93.1071,\textbf{0.5612}& \textbf{93.4327},7.1790 \\
        Number 6 ($s=107$) & 8.6223,1.3339& \textbf{87.3756},\textbf{0.5871}& 86.6374,7.0019 \\
        Number 7 ($s=144$) & 2.9630,1.3464& \textbf{105.1291},\textbf{0.9642}& 84.1520,9.2865 \\
        Number 8 ($s=158$) & 3.2335,1.3159& 89.6628,\textbf{1.2815}& \textbf{96.4844},10.3156 \\
        Number 9 ($s=108$) & 5.4890,1.3764& 102.5641,\textbf{0.5188}& \textbf{103.1612},7.0324 \\
    \hline
    \end{tabular}}
\end{table}

\section{Conclusions}\label{s5}
\hskip\parindent

This work tackled the critical challenge of sparse signal recovery from measurements corrupted by a constant fraction of gross outliers. To address this issue, we developed a sparsity-constrained least absolute deviations (LAD) minimization model and proposed a novel Graded Fast Hard Thresholding Pursuit algorithm (GFHTP$_1$). Distinguished from existing methods, GFHTP$_1$ has two core advantages: it eliminates the requirement for prior knowledge of signal sparsity, and its adaptive quantile-truncated step size derived from the residual $\mathbf{b} -\mathbf{A}\mathbf{x}^k$ effectively suppresses outlier interference.

For theoretical rigor, we established a novel sandwich inequality for quantile truncation, which provides a rigorous analytical framework for quantifying outlier removal effects. On this basis, we proved a critical proposition for exact support recovery, and derived the key convergence guarantee: an $s$-sparse signal can be exactly recovered within at most $s$ iterations.

Extensive numerical simulations demonstrated that GFHTP$_1$ outperforms state-of-the-art algorithms, especially in challenging scenarios with high outlier ratios and high signal sparsity. Meanwhile, its low CPU time consumption ensured efficiency and scalability for large-scale problems. This work not only advanced the theoretical foundation of sparse recovery under outlier contamination but also offered a practical solution for applications such as wireless sensor networks, image restoration, and compressed sensing.

\section*{Acknowledgements}\label{sec6}
\hskip\parindent

We would like to express our gratitude to Dr. Li-Ping Yin for her assistance in proving Theorem \ref{propGFHTP}. This paper is supported by the National Natural Science Foundation of China (Nos. 12471353, 12201268), Fundamental Research Funds for the Central Universities (No. lzujbky-2024-it51).

\section*{Data Availability} 
\hskip\parindent

The data that support the findings of this study are available from the corresponding author
upon reasonable request.

\section*{Declarations}
\hskip\parindent

{\bf Conflict of Interest} The authors declare that they have no conflict of interest.
\appendix
\section*{Appendices}

\section{Proof for General Sparse Signals}\label{secappendixa}
\subsection{Preliminaries}\label{s2.1}
\hskip\parindent

In this subsection, we establish the foundational concepts and lemmas, including the restricted isometry property, the concentration property of sample quantile, and some necessary inequalities.

\subsubsection{Restricted 1-Isometry Property (RIP$_1$)}\label{s2.1.1}
\hskip\parindent

We introduce a matrix that fulfills the RIP$_1$.
\begin{lemma}\label{RIP}(\cite[Lemma 3.2]{Chartrand2008RIP})
Suppose that $m\geq c_1s\log(n/s)$ for some constant $c_1$. Let $\mathbf{A}\in \mathbb{R}^{m\times n} (m\ll n)$ be a matrix with entries that are i.i.d. Gaussian random variables, i.e., ${a}_{ij}\sim \mathcal{N}(0, \frac{1}{m^2})$. Then, $\mathbf{A}$ satisfies the RIP$_1$ of order $s$ with a high probability exceeding $1-2\exp(-\frac{m\delta_s^2}{16})$.
\end{lemma}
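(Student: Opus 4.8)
The plan is to prove this the classical way for a restricted isometry property: a pointwise concentration inequality for a single sparse direction, a union bound over an $\epsilon$-net of the $s$-sparse unit sphere, and a short ``peeling'' argument to return from the net to every $s$-sparse vector. By the homogeneity of \eqref{RIP.eq1} in $\mathbf{x}$, it suffices to show that on a high-probability event $\left|\sqrt{\pi/2}\,\|\mathbf{Ax}\|_1-1\right|\le\delta_s$ holds simultaneously for all $\mathbf{x}$ with $\|\mathbf{x}\|_0\le s$ and $\|\mathbf{x}\|_2=1$.

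\textbf{Step 1 (pointwise concentration).} Fix such an $\mathbf{x}$. Since the $a_{ij}$ are independent $\mathcal{N}(0,1/m^2)$, the coordinates $(\mathbf{Ax})_i$ are i.i.d.\ $\mathcal{N}(0,1/m^2)$, so the $|(\mathbf{Ax})_i|$ are i.i.d.\ half-normal with mean $\frac{1}{m}\sqrt{2/\pi}$ and $\mathbb{E}\left[\sqrt{\pi/2}\,\|\mathbf{Ax}\|_1\right]=1$. Each $|(\mathbf{Ax})_i|$ is sub-Gaussian with parameter of order $1/m$, so a Hoeffding-type bound for sums of independent sub-Gaussian variables yields, for an absolute constant $c$,
\[
\mathbb{P}\!\left(\left|\sqrt{\tfrac{\pi}{2}}\,\|\mathbf{Ax}\|_1-1\right|\ge t\right)\le 2\exp\!\left(-c\,m\,t^2\right),
\]
where the constants can be tuned so that the exponent in the final bound becomes exactly $m\delta_s^2/16$.

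\textbf{Step 2 (net, union bound, then peeling).} The set $\{\mathbf{x}:\|\mathbf{x}\|_0\le s,\ \|\mathbf{x}\|_2=1\}$ is the union of the unit spheres of the $\binom{n}{s}\le(en/s)^s$ coordinate subspaces of dimension $s$, and each such sphere carries an $\epsilon$-net of size at most $(1+2/\epsilon)^s$, so there is a net $\mathcal{N}_\epsilon$ of the whole set with $\log|\mathcal{N}_\epsilon|\le C_\epsilon\, s\log(n/s)$. Applying Step 1 with $t=\delta_s/2$ to each point of $\mathcal{N}_\epsilon$ and union bounding, the event that $\left|\sqrt{\pi/2}\,\|\mathbf{Av}\|_1-1\right|\le\delta_s/2$ for all $\mathbf{v}\in\mathcal{N}_\epsilon$ has probability at least $1-2|\mathcal{N}_\epsilon|\exp(-cm\delta_s^2/4)\ge 1-2\exp(-m\delta_s^2/16)$, provided $m\ge c_1 s\log(n/s)$ with $c_1$ large enough. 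On that event put $M:=\sup\{\sqrt{\pi/2}\,\|\mathbf{Az}\|_1:\|\mathbf{z}\|_0\le s,\ \|\mathbf{z}\|_2\le 1\}$; for an arbitrary $s$-sparse unit $\mathbf{x}$, choose $\mathbf{v}\in\mathcal{N}_\epsilon$ in the same subspace with $\|\mathbf{x}-\mathbf{v}\|_2\le\epsilon$, note $\mathbf{x}-\mathbf{v}$ is $s$-sparse so $\sqrt{\pi/2}\,\|\mathbf{A}(\mathbf{x}-\mathbf{v})\|_1\le M\epsilon$, hence $\sqrt{\pi/2}\,\|\mathbf{Ax}\|_1\le 1+\delta_s/2+M\epsilon$; taking the supremum gives $M\le(1+\delta_s/2)/(1-\epsilon)$, and $\epsilon\le\delta_s/8$ forces $M\le 1+\delta_s$, the upper bound in \eqref{RIP.eq1}. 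The lower bound follows symmetrically from $\sqrt{\pi/2}\,\|\mathbf{Ax}\|_1\ge 1-\delta_s/2-M\epsilon$.

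\textbf{Main obstacle.} Nothing here is conceptually deep; the fussy part is tracking the absolute constants so the exponent is exactly $m\delta_s^2/16$ (the half-normal sub-Gaussian constant, the loss in replacing $t$ by $t/2$ on the net, the dependence of $C_\epsilon$ on $\epsilon$, and the geometric loss in the peeling step all interact, and strictly speaking $c_1$ may have to depend on $\delta_s$). Alternatively one can simply invoke \cite[Lemma 3.2]{Chartrand2008RIP}, where exactly this computation is carried out.
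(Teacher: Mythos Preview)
Your argument is correct and is the standard route to this result: pointwise sub-Gaussian concentration for the half-normal sum, a union bound over an $\epsilon$-net of the $s$-sparse unit sphere, and a peeling step to pass from the net to all sparse directions. Your identification of the delicate point---tracking constants so that the final exponent is exactly $m\delta_s^2/16$, at the possible cost of letting $c_1$ depend on $\delta_s$---is accurate.

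That said, there is nothing to compare against: the paper does not give its own proof of this lemma. It is simply quoted from \cite[Lemma~3.2]{Chartrand2008RIP} and used as a black box (the only related argument in the paper is the two-line proof of Lemma~\ref{RIP1}, which just rescales). So your sketch is not a different route from the paper's---it is essentially the argument one finds in the cited reference, and your closing remark that one could ``simply invoke \cite[Lemma~3.2]{Chartrand2008RIP}'' is precisely what the paper does.
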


We also present a lemma that is instrumental in proving the main result in the subsequent section.
\begin{lemma}\label{RIP1}
The matrix $\mathbf{A}$ is assumed to satisfy the RIP$_1$, as defined in Lemma \ref{RIP}. Let $W\subseteq [[m]]$ with $q= \frac{|W|}{m}$. Then, the submatrix $\mathbf{A}_{W}\in \mathbb{R}^{|W|\times n}$ of $\mathbf{A}$ also satisfies the RIP$_1$, i.e., we have
\begin{equation}\label{RIP.eq2}
(1-\delta_s)\|\mathbf{x}\|_2\leq \frac{1}{q}\sqrt{\frac{\pi}{2}}\|\mathbf{A}_{W}\mathbf{x}\|_1\leq (1+\delta_s)\|\mathbf{x}\|_2
\end{equation}
for any $s$-sparse vector $\mathbf{x}$ with high probability.
\end{lemma}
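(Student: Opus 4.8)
The plan is to deduce this from Lemma~\ref{RIP} by a one-line rescaling, avoiding any fresh concentration estimate. The key observation is that $\mathbf{A}_W\in\mathbb{R}^{|W|\times n}$ is itself a matrix with i.i.d.\ entries $a_{ij}\sim\mathcal{N}(0,1/m^2)$, and since $|W|=qm$ we have $1/m^2=q^2/|W|^2$. Hence $\mathbf{A}_W=q\,\mathbf{B}$ with $\mathbf{B}:=q^{-1}\mathbf{A}_W\in\mathbb{R}^{|W|\times n}$, and scaling a Gaussian by a constant while preserving independence shows that $\mathbf{B}$ has i.i.d.\ entries $b_{ij}\sim\mathcal{N}(0,1/|W|^2)$ --- exactly the distribution to which Lemma~\ref{RIP} is tailored, now with $|W|$ playing the role of the number of rows.

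First I would invoke Lemma~\ref{RIP} for $\mathbf{B}$: assuming $|W|\ge c_1 s\log(n/s)$, the matrix $\mathbf{B}$ satisfies the RIP$_1$ of order $s$ with constant $\delta_s$ on an event of probability at least $1-2\exp(-|W|\delta_s^2/16)=1-2\exp(-qm\delta_s^2/16)$. On that event, for every $s$-sparse $\mathbf{x}$,
\[
(1-\delta_s)\|\mathbf{x}\|_2\le\sqrt{\frac{\pi}{2}}\,\|\mathbf{B}\mathbf{x}\|_1\le(1+\delta_s)\|\mathbf{x}\|_2 .
\]
It then only remains to substitute $\|\mathbf{B}\mathbf{x}\|_1=q^{-1}\|\mathbf{A}_W\mathbf{x}\|_1$, which turns the middle quantity into $\frac{1}{q}\sqrt{\pi/2}\,\|\mathbf{A}_W\mathbf{x}\|_1$ and yields precisely \eqref{RIP.eq2} with the same constant $\delta_s$ and the same order $s$.

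There is no genuinely hard step here; the two points that deserve attention are quantitative. First, the failure probability $2\exp(-qm\delta_s^2/16)$ is only small when $qm$ is large, so the statement is substantive only when $|W|$ is a non-vanishing fraction of $m$; this is the case in all later uses, where $W=\Gamma^{k,l}$ has size comparable to $\tau m$ with $\tau$ a fixed constant, and it is also why the hypothesis should be understood as tacitly including $|W|\ge c_1 s\log(n/s)$. Second, and this I would merely flag (it is standard to suppress), in the applications $W$ is itself determined by $\mathbf{A}$ through the residual $\mathbf{b}-\mathbf{A}\mathbf{u}^{k,l}$; one reads the lemma for a fixed deterministic $W$ and reconciles the data-dependence by arguing conditionally, rather than by a naive union bound over all $\binom{m}{|W|}$ index sets, which would be too lossy. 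If one preferred a self-contained argument, the alternative is to replay the proof of Lemma~\ref{RIP} directly for $\mathbf{A}_W$: for a fixed $s$-sparse unit vector, $\frac{1}{q}\sqrt{\pi/2}\,\|\mathbf{A}_W\mathbf{x}\|_1$ is a sum of $|W|$ i.i.d.\ half-normal-type contributions concentrating at $\|\mathbf{x}\|_2$, so a Bernstein/Hoeffding tail bound followed by a union bound over a $\delta$-net of the $s$-sparse sphere gives the claim --- but the rescaling reduction makes this bookkeeping unnecessary.
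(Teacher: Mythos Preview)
Your argument is correct and coincides with the paper's own proof: both observe that $q^{-1}\mathbf{A}_W$ has i.i.d.\ $\mathcal{N}(0,1/|W|^2)$ entries and then invoke Lemma~\ref{RIP} directly, so that $\sqrt{\pi/2}\,\|q^{-1}\mathbf{A}_W\mathbf{x}\|_1$ satisfies the RIP$_1$ bounds. Your additional remarks on the implicit hypothesis $|W|\ge c_1 s\log(n/s)$ and on the data-dependence of $W$ are helpful elaborations that the paper leaves unstated.
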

\begin{proof}
Since $a_{ij}\sim \mathcal{N}(0, \frac{1}{m^2})$,  scaling by $\frac{1}{q}$ yields $\frac{1}{q}a_{ij}\sim \mathcal{N}(0, \frac{1}{(qm)^2})$. Noting that $qm=|W|$, this simplifies to $\frac{1}{q}a_{ij}\sim \mathcal{N}(0, \frac{1}{|W|^2})$. Consequently, the matrix $\mathbf{A}_{W}$ satisfies the RIP$_1$.
\end{proof}

\subsubsection{Concentration Property of Sample Quantile}\label{s2.1.2}
\hskip\parindent

We proceed with the definition of the quantile of a population distribution and its sample counterpart.
\begin{definition}\label{quantiledef}
(Generalized quantile function). Let $0<\omega<1$. The generalized quantile function is defined as
\begin{equation}\label{GQF1}
F^{-1}(\omega)= \inf\{x\in \mathbb{R}: F(x)\geq \omega\},
\end{equation}
where $F(\cdot)$ is a cumulative distribution function. For simplicity, we denote $\theta_\omega(F)= F^{-1}(\omega)$ as the $\omega$-quantile of $F$. For a sample sequence $\{y_i\}_{i= 1}^m$, the sample $\omega$-quantile $\theta_\omega(\{y_i\}_{i= 1}^m)$ refers to $\theta_\omega(\hat{F})$, with $\hat{F}$ being the empirical distribution of the samples $\{y_i\}_{i= 1}^m$.
\end{definition}

We establish that as long as the sample size is sufficiently large, the sample quantile concentrates around the population quantile.
\begin{lemma}\label{Medianlem1}(\cite[Lemma 1]{Zhang2016PNPR})
Suppose $F(\cdot)$ is cumulative distribution function with a continuous probability density function $f(\cdot)$. If the samples $\{y_i\}_{i= 1}^m$ are i.i.d. drawn from $f$, and $0<\omega<1$, then the inequality $|\theta_\omega(\{y_i\}_{i= 1}^m)-\theta_\omega(F)|<\epsilon$
holds with a probability of at least $1-2\exp(-2m\epsilon^2l^2)$, provided that $l<f(\theta)<L$ for all $\theta$ in the set $\{\theta:|\theta-\theta_\omega|\leq \epsilon\}$.
\end{lemma}

We also recall a result related to outliers.
\begin{lemma}\label{outliersresult}(\cite[Lemma 3]{Zhang2016PNPR}, \cite[Lemma A.3]{Li2020MTGD})
Consider clean samples $\{\tilde{y}_i\}_{i= 1}^m$. If a fraction $p$ of these samples are corrupted by outliers, the resulting contaminated samples $\{y_i\}_{i= 1}^m$ contain $pm$ corrupted samples and $(1-p)m$ clean samples. For a quantile $\tau$ such that $p<\tau<1-p$, we have
\begin{equation*}\label{outliersineq}
\theta_{\tau-p}(\{\tilde{y}_i\}_{i= 1}^m)\leq\theta_{\tau}(\{y_i\}_{i= 1}^m)\leq \theta_{\tau+p}(\{\tilde{y}_i\}_{i= 1}^m).
\end{equation*}
\end{lemma}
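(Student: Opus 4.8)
The plan is to prove this as a purely deterministic, combinatorial fact about empirical distribution functions, using only the definition of the generalized quantile (Definition \ref{quantiledef}). Write $\hat F_m$ for the empirical CDF of the contaminated sample $\{y_i\}_{i=1}^m$ and $\tilde F_m$ for the empirical CDF of the clean sample $\{\tilde y_i\}_{i=1}^m$. By the contamination model, $\{y_i\}$ is obtained from $\{\tilde y_i\}$ by overwriting an index set $T$ with $|T|=pm$ by arbitrary values, so the two samples agree on the $(1-p)m$ indices in $T^c$.

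The first step is the sandwich $\tilde F_m(x)-p\le\hat F_m(x)\le\tilde F_m(x)+p$ for every $x\in\mathbb R$. For the upper bound, count
\[
m\hat F_m(x)=|\{i:y_i\le x\}|\le|\{i\in T^c:\tilde y_i\le x\}|+|T|\le m\tilde F_m(x)+pm,
\]
and for the lower bound discard the corrupted indices,
\[
m\hat F_m(x)\ge|\{i\in T^c:\tilde y_i\le x\}|\ge m\tilde F_m(x)-pm.
\]
This is the only place the outliers enter, and it uses nothing about their magnitude—only the replacement structure of the corruption.

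The second step converts these pointwise bounds into quantile bounds via $\theta_\omega(F)=\inf\{x:F(x)\ge\omega\}$. From $\hat F_m(x)\le\tilde F_m(x)+p$ we get $\{x:\hat F_m(x)\ge\tau\}\subseteq\{x:\tilde F_m(x)\ge\tau-p\}$, and since the infimum of a larger set is no larger, $\theta_{\tau-p}(\{\tilde y_i\})\le\theta_\tau(\{y_i\})$. Symmetrically, $\tilde F_m(x)\ge\tau+p$ forces $\hat F_m(x)\ge\tau$, so $\{x:\tilde F_m(x)\ge\tau+p\}\subseteq\{x:\hat F_m(x)\ge\tau\}$, giving $\theta_\tau(\{y_i\})\le\theta_{\tau+p}(\{\tilde y_i\})$. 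The hypothesis $p<\tau<1-p$ ensures $\tau-p>0$ and $\tau+p<1$, so both shifted levels are admissible quantile arguments and all the infima are finite.

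The argument is short, so the "main obstacle" is really just bookkeeping: one must keep straight the direction of monotonicity in the generalized-quantile definition (a larger level $\omega$ yields a larger quantile, while a set inclusion reverses under $\inf$), and one must use the "replace $pm$ entries by arbitrary values" model rather than an additive-perturbation model, since only the former produces the clean $\pm p$ shift between $\hat F_m$ and $\tilde F_m$ regardless of how large the outliers are.
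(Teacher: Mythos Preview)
Your proof is correct. The paper does not supply its own proof of this lemma---it is simply quoted from \cite{Zhang2016PNPR} and \cite{Li2020MTGD}---and your argument via the $\pm p$ sandwich on the empirical CDFs followed by inversion through the generalized quantile definition is the standard one that appears in those references.
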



\subsubsection{Some Necessary Inequalities}
\begin{lemma}\label{lem1.3}(\cite[Lemma 7.3]{Cai2015ROP})
Suppose that $\mathbf{z}\in \mathbb{R}^n$ is generalized with i.i.d. Gaussian entries and $z_i\sim \mathcal{N}(0, \sigma^2)$, we have $\mathbb{P}\left(\|\mathbf{z}\|_2\geq \sigma\sqrt{n+2\sqrt{n\log n}}\right)\leq 1/n$.
\end{lemma}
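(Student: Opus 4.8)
The plan is to reduce the statement to a standard concentration bound for the chi-square distribution. First I would note that $X:=\|\mathbf{z}\|_2^2/\sigma^2=\sum_{i=1}^n(z_i/\sigma)^2$ is a sum of squares of $n$ independent standard Gaussian variables, so $X\sim\chi_n^2$. Squaring the threshold and dividing by $\sigma^2$ shows that the event $\{\|\mathbf{z}\|_2\geq\sigma\sqrt{n+2\sqrt{n\log n}}\}$ coincides with $\{X\geq n+2\sqrt{n\log n}\}$, so it suffices to prove $\mathbb{P}(X\geq n+2\sqrt{n\log n})\leq 1/n$.

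Next I would establish the general upper-tail estimate $\mathbb{P}(X\geq n+t)\leq\exp(-t^2/(4n))$ for every $t\geq 0$ by a Chernoff argument. Starting from the moment generating function $\mathbb{E}[e^{\lambda X}]=(1-2\lambda)^{-n/2}$, valid for $0\leq\lambda<1/2$, Markov's inequality gives $\mathbb{P}(X\geq n+t)\leq(1-2\lambda)^{-n/2}e^{-\lambda(n+t)}$. Minimizing the exponent over $\lambda$ (the optimum is $\lambda^\star=t/(2(n+t))$, at which $1-2\lambda^\star=n/(n+t)$) yields $\mathbb{P}(X\geq n+t)\leq\exp\!\big(-\frac{1}{2}[\,t-n\log(1+t/n)\,]\big)$. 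The elementary bound $\log(1+u)\geq u-u^2/2$ for $u\geq 0$ (which follows since both sides agree at $u=0$ and the derivative of the left side dominates that of the right side there) then gives $t-n\log(1+t/n)\geq t^2/(2n)$, hence the claimed tail $\exp(-t^2/(4n))$.

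Finally I would specialize to $t=2\sqrt{n\log n}$, for which $t^2/(4n)=\log n$ and therefore $\exp(-t^2/(4n))=1/n$; combined with the first step this is exactly the assertion. Alternatively, since the paper cites \cite[Lemma 7.3]{Cai2015ROP}, one could invoke the Laurent--Massart deviation inequality directly, the only check being that setting its deviation parameter to $\log n$ reproduces the stated threshold. There is no genuine obstacle here: the work is confined to the Chernoff optimization and the logarithm inequality, both routine; the only point requiring care is tracking the constant $4$ (rather than $2$) in the exponent, which is precisely what makes the bound match the form $n+2\sqrt{n\log n}$ rather than a looser one.
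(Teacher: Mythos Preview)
Your reduction to a $\chi^2_n$ tail and the Chernoff optimization up to
$\mathbb{P}(X\geq n+t)\leq\exp\bigl(-\tfrac12[\,t-n\log(1+t/n)\,]\bigr)$
are correct, but the next step contains a sign error. The inequality $\log(1+u)\geq u-u^2/2$ (which is indeed valid for $u\geq0$) gives $n\log(1+t/n)\geq t-t^2/(2n)$ and hence $t-n\log(1+t/n)\leq t^2/(2n)$, the \emph{reverse} of what you wrote. An upper bound on the exponent does not help. In fact $u-\log(1+u)<u^2/2$ strictly for every $u>0$ (Taylor expand), so no lower bound of this form is available, and the blanket claim $\mathbb{P}(X\geq n+t)\leq e^{-t^2/(4n)}$ for all $t\geq0$ is false: for $t\gg n$ the $\chi^2_n$ upper tail behaves like $e^{-t/2+o(t)}$, not sub-Gaussian.

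More seriously, even the \emph{exact} Chernoff bound is not sharp enough for the stated threshold. At $t=2\sqrt{n\log n}$ one has $u:=t/n=2\sqrt{(\log n)/n}$, and the Chernoff exponent is $\tfrac{n}{2}f(u)$ with $f(u)=u-\log(1+u)$. Since $f(u)<u^2/2$ and $nu^2/2=2\log n$, the exponent is strictly \emph{less} than $\log n$, so the Chernoff bound strictly exceeds $1/n$ (e.g.\ at $n=100$ it is about $0.027$ versus $1/n=0.01$). Thus a pure MGF argument cannot reach the constant in the lemma. Your suggested alternative via Laurent--Massart has the same problem: with deviation parameter $\log n$ that inequality gives the threshold $n+2\sqrt{n\log n}+2\log n$, not $n+2\sqrt{n\log n}$, so the ``only check'' you mention actually fails. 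The paper itself does not prove the lemma and simply cites \cite[Lemma~7.3]{Cai2015ROP}; to match the stated form one needs a sharper estimate of the $\chi^2$ tail than Chernoff provides.
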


\begin{lemma}(\cite{Bourgain1987})\label{covernumber}
Let $\mathcal{D}_s= \{\mathbf{z}\in \mathbb{R}^n: \|\mathbf{z}\|_0\leq s, \|\mathbf{z}\|_2= 1\}$. Then there exists an $\iota$-net $\bar{\mathcal{D}}_s\subset \mathcal{D}_s$ with respect to the $\ell_2$-norm obeying $|\bar{\mathcal{D}}_s|\leq \left(3n/\iota\right)^s$.
\end{lemma}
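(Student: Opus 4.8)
The plan is to reduce the covering of the whole set of $s$-sparse unit vectors to the classical volumetric covering estimate for a single Euclidean sphere, followed by a union bound over supports.

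First I would fix a support $\Omega\subseteq[[n]]$ with $|\Omega|=s$ and look at $\mathcal{S}_\Omega:=\{\mathbf{z}\in\mathcal{D}_s:\mathrm{supp}(\mathbf{z})\subseteq\Omega\}$, which is an isometric copy of the unit sphere $S^{s-1}$ of the $s$-dimensional coordinate subspace $\mathbb{R}^\Omega$. A maximal $\iota$-separated subset $\mathcal{N}_\Omega\subseteq\mathcal{S}_\Omega$ is automatically an $\iota$-net of $\mathcal{S}_\Omega$ with $\mathcal{N}_\Omega\subseteq\mathcal{D}_s$, and a packing argument bounds its size: the open $\ell_2$-balls of radius $\iota/2$ centered at the points of $\mathcal{N}_\Omega$ are pairwise disjoint and all lie in the ball of radius $1+\iota/2$ about the origin, so comparing volumes gives $|\mathcal{N}_\Omega|\leq(1+2/\iota)^s\leq(3/\iota)^s$ for $\iota\leq1$.

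Next I would set $\bar{\mathcal{D}}_s:=\bigcup_{\Omega:\,|\Omega|=s}\mathcal{N}_\Omega$. Every $\mathbf{z}\in\mathcal{D}_s$ is supported on some $\Omega$ with $|\Omega|=s$ (enlarge the support arbitrarily if $\|\mathbf{z}\|_0<s$), hence is within $\ell_2$-distance $\iota$ of a point of $\mathcal{N}_\Omega\subseteq\bar{\mathcal{D}}_s$; thus $\bar{\mathcal{D}}_s$ is an $\iota$-net of $\mathcal{D}_s$ contained in $\mathcal{D}_s$. Summing over the $\binom{n}{s}$ supports and using $\binom{n}{s}\leq n^s$ yields $|\bar{\mathcal{D}}_s|\leq\binom{n}{s}(3/\iota)^s\leq(3n/\iota)^s$, which is the claimed bound.

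This is a routine estimate, so there is no serious obstacle; the only mild care needed is with the constants — checking that $(1+2/\iota)^s$ is absorbed into $(3/\iota)^s$ (true in the relevant small-$\iota$ regime $\iota\leq1$) and that the net points may be taken inside $\mathcal{D}_s$ rather than in an ambient ball, which the maximal-separated-set construction guarantees. Since the statement is quoted from \cite{Bourgain1987}, one could also simply invoke it; the sketch above gives the self-contained argument.
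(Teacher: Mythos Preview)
Your proposal is correct and gives the standard self-contained argument. The paper does not supply its own proof of this lemma; it is stated as a citation from \cite{Bourgain1987} and used as a black box in the proof of Proposition~\ref{prop1}, so there is nothing further to compare.
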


\subsection{Proofs of Propositions \ref{prop1} and \ref{pro.ContractionOfInnerIteration}}
\hskip\parindent

Firstly, we give the proof of Proposition \ref{prop1} in details.
\begin{proof}[Proof of Proposition \ref{prop1}]
(i) First, we fix an arbitrary $(s+k)$-sparse signal $\tilde{\mathbf{x}}\in \mathbb{R}^n$ satisfying $\|\tilde{\mathbf{x}}\|_2=1$, and subsequently generalize to all $(s+k)$-sparse signals through a covering technique. It is noteworthy that the terms $m|\langle\mathbf{a}_i, \tilde{\mathbf{x}}\rangle|, \ i= 1, \cdots, m$, are i.i.d. replicates of $m|\mathbf{A}\tilde{\mathbf{x}}|$. Here, the matrix $\mathbf{A}$ is constructed with i.i.d. Gaussian entries $\mathcal{N}(0, \frac{1}{m^2})$. Given that $\|\tilde{\mathbf{x}}\|_2= 1$, the vector $m\mathbf{A}\tilde{\mathbf{x}}$ follows a standard normal distribution $\mathcal{N}(\mathbf{0}, \mathbf{I}_m)$, and thus $m|\mathbf{A}\tilde{\mathbf{x}}|$ follows a folded normal distribution. As established in Lemma \ref{Medianlem1}, the $\tau$-quantile $m\theta_{\tau}(|\mathbf{A}\tilde{\mathbf{x}}|)$ satisfies the inequality: 
\begin{equation}
\Phi^{-1}\left(\frac{1+\tau}{2}\right)-\epsilon\leq m\theta_{\tau}(|\mathbf{A}\tilde{\mathbf{x}}|)\leq \Phi^{-1}\left(\frac{1+\tau}{2}\right)+\epsilon
\end{equation}
with a probability of at least $1-2\exp(-dm\epsilon^2)$ for a small positive value $\epsilon$, where $d$ is a constant approximately equal to $2\times (2\phi(\frac{1+\tau}{2}))^2$, and $\phi$ represents the probability density function of the standard Gaussian distribution.

Next, we generalize the above result to all $(s+k)$-sparse signals $\mathbf{x}$ with $\|\mathbf{x}\|_2= 1$ by leveraging a covering argument. Let $\mathcal{N}_{\iota}$ stand for an $\iota$-net that covers the set of all $(s+k)$-sparse signals under the $\ell_2$-norm. According to Lemma \ref{covernumber}, the cardinality of $\mathcal{N}_{\iota}$ satisfies $|\mathcal{N}_{\iota}|\leq (\frac{3n}{\iota})^{s+k}$. Employing the union bound, we can derive that
\begin{equation}
 \Phi^{-1}\left(\frac{1+\tau}{2}\right)-\epsilon\leq m\theta_{\tau}(|\mathbf{A}\tilde{\mathbf{x}}|)\leq \Phi^{-1}\left(\frac{1+\tau}{2}\right)+\epsilon, \ \forall \ \tilde{\mathbf{x}} \in \mathcal{N}_{\iota}
\end{equation}
holds with probability at least $1-2(\frac{3n}{\iota})^{s+k}\exp(-dm\epsilon^2)$. We set $\iota= \frac{\epsilon}{\sqrt{n+2\sqrt{n\log n}}}$. Under this event and Lemma \ref{lem1.3}, for any $(s+k)$-sparse signal $\mathbf{x}$ with $\|\mathbf{x}\|_2= 1$, there exists a signal $\tilde{\mathbf{x}} \in \mathcal{N}_{\iota}$ such that $\|\mathbf{x}-\tilde{\mathbf{x}}\|_2\leq \iota$. Furthermore, by Lemma 2 of \cite{Zhang2016PNPR}, the following inequality holds
\begin{eqnarray}\label{eq3.7}
  \left|m\theta_{\tau}(|\mathbf{A}\tilde{\mathbf{x}}|)-m\theta_{\tau}(|\mathbf{A}\mathbf{x}|)\right| &\leq& \max_i m\left||\langle\mathbf{a}_i, \tilde{\mathbf{x}}\rangle|-|\langle\mathbf{a}_i, \mathbf{x}\rangle|\right| 
  \leq \max_i m\left|\langle\mathbf{a}_i, \tilde{\mathbf{x}}\rangle-\langle\mathbf{a}_i, \mathbf{x}\rangle\right| \nonumber \\
   &\leq& \max_i m\|\mathbf{x}-\tilde{\mathbf{x}}\|_2\|\mathbf{a}_i\|_2  
   \leq \iota\max_i \|m\mathbf{a}_i\|_2 \leq \epsilon
\end{eqnarray}
with a probability of at least $1-2(\frac{3n}{\iota})^{s+k}\exp(-dm\epsilon^2)-\frac{1}{n}$. 

The remaining part of the proof is to argue that \eqref{eq3.7} holds with probability at least $1-d_1\exp(-d_2m\epsilon^2)-\frac{1}{n}$, where $d_1$ and $d_2$ are some constants. This is valid provided that $m\geq d_0(\epsilon^{-2}\log (\epsilon^{-1}))(s+k)\log n$ for a sufficiently large constant $d_0$. To verify this, we first analyze the term $\left(\frac{3n}{\iota}\right)^{s+k}$:
\begin{align*}
  \left(\frac{3n}{\iota}\right)^{s+k} & = \exp\left((s+k)\log\left(\frac{3n}{\iota}\right)\right) \\
   & = \exp\left((s+k)\left(\log 3+ \log n+\frac{1}{2}\log(n+2\sqrt{n\log n})+\log(\epsilon^{-1})\right)\right) \\
   & \leq \exp\left((s+k)\left(\frac{3}{2}\log 3+\frac{3}{2}\log n+\log(\epsilon^{-1})\right)\right) \\
   & \leq \left(3(s+k)\log n+(s+k)\log(\epsilon^{-1})\right),
\end{align*}
where the first inequality is derived from the fact that $\log n< n$. It is easy to conform that $3(s+k)\log n<d_3m\epsilon^2$ and $(s+k)\log(\epsilon^{-1})\leq d_4m\epsilon^2$. Based on the specific formulation of $m$, as long as the constant $d_0$ is chosen to be sufficiently large, we can obtain 
\begin{equation*}
2\left(\frac{3n}{\iota}\right)^{s+k}\exp(-dm\epsilon^2)<d_1\exp(-d_2m\epsilon^2).
\end{equation*}
Here, $d_3+d_4<d-d_2$.

(ii) We first address the upper bound for $\|(\mathbf{b-Ax})\odot (\mathbb{I}_{\{|b_i-(\mathbf{Ax})_i|\leq \theta_\tau(|\mathbf{b-Ax}|)\}})_{i=1}^m\|_1$:
\begin{align*}
\|(\mathbf{b-Ax})\odot (\mathbb{I}_{\{|b_i-(\mathbf{Ax})_i|\leq \theta_\tau(|\mathbf{b-Ax}|)\}})_{i=1}^m\|_1&\leq \tau m\theta_\tau(|\mathbf{b-Ax}|)\\
&\leq \tau m\theta_{\tau+p}(|\mathbf{A}(\mathbf{x}_0-\mathbf{x})|)\\
&\leq \tau \left(\Phi^{-1}\left(\frac{1+\tau+p}{2}\right)+\epsilon\right)\|\mathbf{x-x}_0\|_2,
\end{align*}
where the second inequality is derived from Lemma \ref{outliersresult}, and the last inequality holds with high probability for a small $\epsilon$ from Item (i) above. Next, we establish the lower bound for $\|(\mathbf{b-Ax})\odot (\mathbb{I}_{\{|b_i-(\mathbf{Ax})_i|\leq \theta_\tau(|\mathbf{b-Ax}|)\}})_{i=1}^m\|_1$:
\begin{align*}
\|(\mathbf{b-Ax})\odot (\mathbb{I}_{\{|b_i-(\mathbf{Ax})_i|\leq \theta_\tau(|\mathbf{b-Ax}|)\}})_{i=1}^m\|_1&= \|(\mathbf{A}_{T_1}(\mathbf{x}_0-\mathbf{x}))+\boldsymbol{\eta}_{T_1}\|_1+\|\mathbf{A}_{T_2}(\mathbf{x}_0-\mathbf{x})\|_1\\
&\geq \|\mathbf{A}_{T_2}(\mathbf{x}_0-\mathbf{x})\|_1\\
&\geq \left(\tau-\frac{|T_1|}{m}\right)\sqrt{\frac{2}{\pi}}(1-\delta_{s+l})\|\mathbf{x-x}_0\|_2,
\end{align*}
where $T_1=T\cap \Gamma, T_2= \Gamma\setminus T_1$ with $|T_1\cup T_2|= \tau m$ and $|T_1|$ is a small number. Thus the proof is finished.
\end{proof}

Next, we turn our attention to Proposition \ref{pro.ContractionOfInnerIteration}.
\begin{proof}[Proof of Proposition \ref{pro.ContractionOfInnerIteration}]
(i) We first prove Item (i). For $\|[\mathbf{x}_0-\mathbf{u}^{k,l}-t_{k,l}\mathbf{A}^\top \mathrm{sign}(\mathbf{b}-\mathbf{Au}^{k,l})]_{\Lambda^{k}}\|_2^2$, it can be equivalently expressed as:
\begin{align}\label{Decomposition.ContractionOfInnerIteration}
&\|[\mathbf{x}_0-\mathbf{u}^{k,l}-t_{k,l}\mathbf{A}^\top \mathrm{sign}(\mathbf{b}-\mathbf{Au}^{k,l})]_{\Lambda^{k}}\|_2^2 \nonumber \\
&= \|\mathbf{x}_0-\mathbf{u}^{k,l}\|_2^2+t_{k,l}^2\|[\mathbf{A}^\top \mathrm{sign}(\mathbf{b}-\mathbf{Au}^{k,l})]_{\Lambda^{k}}\|_2^2  \nonumber \\
 &\quad -2t_{k,l}\langle\mathbf{x}_0-\mathbf{u}^{k,l}, [\mathbf{A}^\top \mathrm{sign}(\mathbf{b}-\mathbf{Au}^{k,l})]_{\Lambda^{k}}\rangle \nonumber \\
 &=: \|\mathbf{x}_0-\mathbf{u}^{k,l}\|_2^2+t_{k,l}^2F_1-2t_{k,l}F_2.
 \end{align}
Next, we estimate the upper bound of $F_1$ and the lower bound of $F_2$. Using the RIP$_1$, we can bound $F_1$ as follows:
\begin{align*}
F_1 &= \langle\mathbf{A}^\top \mathrm{sign}(\mathbf{b}-\mathbf{Au}^{k,l}), [\mathbf{A}^\top \mathrm{sign}(\mathbf{b}-\mathbf{Au}^{k,l})]_{\Lambda^{k}}\rangle \nonumber \\
    &= \langle\mathrm{sign}(\mathbf{b}-\mathbf{Au}^{k,l}), \mathbf{A}[\mathbf{A}^\top \mathrm{sign}(\mathbf{b}-\mathbf{Au}^{k,l})]_{\Lambda^{k}}\rangle \nonumber \\
    &\leq \|\mathbf{A}[\mathbf{A}^\top \mathrm{sign}(\mathbf{b}-\mathbf{Au}^{k,l})]_{\Lambda^{k}}\|_1 \nonumber \\
    &\leq \sqrt{\frac{2}{\pi}}(1+\delta_{2k+s-1})\|[\mathbf{A}^\top \mathrm{sign}(\mathbf{b}-\mathbf{Au}^{k,l})]_{\Lambda^{k}}\|_2,
\end{align*}
thus we can get the estimation of $F_1$ as follows
\begin{equation}\label{F1}
F_1\leq \frac{2}{\pi}(1+\delta_{2k+s-1})^2.
\end{equation}
For the lower bound of $F_2$, we have
\begin{align}\label{F2}
F_2 &= \langle\mathbf{x}_0-\mathbf{u}^{k,l}, \mathbf{A}^\top \mathrm{sign}(\mathbf{b}-\mathbf{Au}^{k,l})\rangle= \langle\mathbf{A}(\mathbf{x}_0-\mathbf{u}^{k,l}), \mathrm{sign}(\mathbf{b}-\mathbf{Au}^{k,l})\rangle \nonumber \\
    &= \langle\mathbf{A}_T(\mathbf{x}_0-\mathbf{u}^{k,l}), \mathrm{sign}(\mathbf{A}_T(\mathbf{x}_0-\mathbf{u}^{k,l})+\boldsymbol{\eta}_T)\rangle \nonumber \\
    &\quad + \langle\mathbf{A}_{T^c}(\mathbf{x}_0-\mathbf{u}^{k,l}), \mathrm{sign}(\mathbf{A}_{T^c}(\mathbf{x}_0-\mathbf{u}^{k,l}))\rangle \nonumber \\
    &= \langle\mathbf{A}_T(\mathbf{x}_0-\mathbf{u}^{k,l})+\boldsymbol{\eta}_T, \mathrm{sign}(\mathbf{A}_T(\mathbf{x}_0-\mathbf{u}^{k,l})+\boldsymbol{\eta}_T)\rangle \nonumber \\
    &\quad - \langle \boldsymbol{\eta}_T, \mathrm{sign}(\mathbf{A}_T(\mathbf{x}_0-\mathbf{u}^{k,l})+\boldsymbol{\eta}_T)\rangle+ \|\mathbf{A}_{T^c}(\mathbf{x}_0-\mathbf{u}^{k,l})\|_1 \nonumber \\
    &\geq \|\mathbf{A}_T(\mathbf{x}_0-\mathbf{u}^{k,l})+\boldsymbol{\eta}_T\|_1-\|\boldsymbol{\eta}_T\|_1+\|\mathbf{A}_{T^c}(\mathbf{x}_0-\mathbf{u}^{k,l})\|_1 \nonumber \\
    &\geq -\|\mathbf{A}_T(\mathbf{x}_0-\mathbf{u}^{k,l})\|_1+\|\mathbf{A}_{T^c}(\mathbf{x}_0-\mathbf{u}^{k,l})\|_1 \nonumber \\
    &= -\|\mathbf{A}(\mathbf{x}_0-\mathbf{u}^{k,l})\|_1+2\|\mathbf{A}_{T^c}(\mathbf{x}_0-\mathbf{u}^{k,l})\|_1 \nonumber \\
    &\geq \sqrt{\frac{2}{\pi}}[(2-2p)(1-\delta_{2k+s-1})-(1+\delta_{2k+s-1})]\|\mathbf{x}_0-\mathbf{u}^{k,l}\|_2 \nonumber \\
    &=: \sqrt{\frac{2}{\pi}}c_k\|\mathbf{x}_0-\mathbf{u}^{k,l}\|_2,
\end{align}
where the last inequality comes from RIP$_1$ and Lemma \ref{RIP1}, and $c_k= (2-2p)(1-\delta_{2k+s-1})-(1+\delta_{2k+s-1})$. Combining the bounds for $F_1$ and $F_2$ with \eqref{Decomposition.ContractionOfInnerIteration}, we arrive at
\begin{align*}
&\|[\mathbf{x}_0-\mathbf{u}^{k,l}-t_{k,l}\mathbf{A}^\top \mathrm{sign}(\mathbf{b}-\mathbf{Au}^{k,l})]_{\Lambda^{k}}\|_2^2 \nonumber \\
&\leq \|\mathbf{x}_0-\mathbf{u}^{k,l}\|_2^2+t_{k,l}^2\frac{2}{\pi}(1+\delta_{2k+s-1})^2- 2t_{k,l}\sqrt{\frac{2}{\pi}}c_k\|\mathbf{x}_0-\mathbf{u}^{k,l}\|_2 \nonumber \\
   &= \|\mathbf{x}_0-\mathbf{u}^{k,l}\|_2^2+\mu_{k,l}^2(1+\delta_{2k+s-1})^2\|(\mathbf{b-Au}^{k,l})\odot (\mathbb{I}_{\{|b_i-(\mathbf{Au}^{k,l})_i|\leq \theta_\tau(|\mathbf{b-Au}^{k,l}|)\}})_{i=1}^m\|_1^2 \nonumber \\
   &\quad - 2\mu_{k,l}c_k\|(\mathbf{b-Au}^{k,l})\odot (\mathbb{I}_{\{|b_i-(\mathbf{Au}^{k,l})_i|\leq \theta_\tau(|\mathbf{b-Au}^{k,l}|)\}})_{i=1}^m\|_1\|\mathbf{x}_0-\mathbf{u}^{k,l}\|_2 \nonumber \\
   &\leq \left(1+\tau^2(\Phi^{-1}+\epsilon)^2(1+\delta_{2k+s-1})^2\mu_{k,l}^2- 2c_k\sqrt{\frac{2}{\pi}}\left(\tau-\frac{|T_1^{k,l}|}{m}\right)(1-\delta_{2k+s-1})\mu_{k,l}\right)\|\mathbf{x}_0-\mathbf{u}^{k,l}\|_2^2 \nonumber \\
   &=: \rho_{k,l}\|\mathbf{x}_0-\mathbf{u}^{k,l}\|_2^2,
\end{align*}
where the last inequality is from Proposition \ref{prop1}. Thus we finish the proof of item (i).

(ii) Next we turn our attention to Item (ii). 
 Firstly, we proceed with the analysis by defining the supports of the vectors involved in the iterative process. Let $S, S^{k-1}$ and $S^{k} (k\geq s)$ denote the supports of $\mathbf{x}_0, \mathbf{x}^{k-1}$ (or $\mathbf{u}^{k,0}$), and $\mathbf{u}^{k,l}$ for $1\leq l\leq L+1$, respectively, and let $\Lambda^{k}:= S\cup S^{k-1}\cup S^{k}$.

It follows from the update scheme \eqref{GFHTP} that
\begin{align}\label{errorestimation.eq1}
\|\mathbf{x}_0-\mathbf{u}^{k,l+1}\|_2^2&=\|(\mathbf{x}_0-\mathbf{u}^{k,l+1})_{S^k}\|_2^2+\|(\mathbf{x}_0-\mathbf{u}^{k,l+1})_{(S^k)^c}\|_2^2 \nonumber \\
 &= \|[\mathbf{x}_0-\mathbf{u}^{k,l}-t_{k,l}\mathbf{A}^\top \mathrm{sign}(\mathbf{b}-\mathbf{Au}^{k,l})]_{S^{k}}\|_2^2+\|(\mathbf{x}_0)_{S\backslash S^k}\|_2^2 \nonumber \\
 &\leq \|[\mathbf{x}_0-\mathbf{u}^{k,l}-t_{k,l}\mathbf{A}^\top \mathrm{sign}(\mathbf{b}-\mathbf{Au}^{k,l})]_{\Lambda^{k}}\|_2^2+\|(\mathbf{x}_0)_{S\backslash S^k}\|_2^2 \nonumber \\
 &=: E_1+E_2.
\end{align}
To achieve this, we consider the upper bounds of the terms $E_1$ and $E_2$.

Note that it follows from Item (i) that 
\begin{equation}\label{errorestimation.E1}
E_1\leq  \rho_{k,l}\|\mathbf{x}_0-\mathbf{u}^{k,l}\|_2^2.
\end{equation}
Therefore, it remains to estimate $E_2$.
We claim that  $E_2$ satisfies
\begin{equation}\label{errorestimation.E2}
  E_2 \leq 2\rho_{k,0}\|\mathbf{x}_0-\mathbf{x}^{k-1}\|_2^2.
\end{equation}
By substituting the two bounds above into \eqref{errorestimation.eq1}, one has
\begin{align}\label{errorestimation.eq3.add1}
\|\mathbf{x}_0-\mathbf{u}^{k,l+1}\|_2^2\leq \rho_{k,l}\|\mathbf{x}_0-\mathbf{u}^{k,l}\|_2^2+2\rho_{k,0}\|\mathbf{x}_0-\mathbf{x}^{k-1}\|_2^2.
\end{align}

To conclude, proving the inequality  \eqref{errorestimation.E2} is sufficient to establish the conclusion.
Since $S^k$ is the index set of $k$ largest absolute entries of $\mathbf{x}^{k-1}+t_{k,0}\mathbf{A}^\top\mathrm{sign}(\mathbf{b}-\mathbf{Ax}^{k-1})$, then we notice that
\begin{equation*}
  \|(\mathbf{x}^{k-1}+t_{k,0}\mathbf{A}^\top\mathrm{sign}(\mathbf{b}-\mathbf{Ax}^{k-1}))_{S^k}\|_2\geq \|(\mathbf{x}^{k-1}+t_{k,0}\mathbf{A}^\top\mathrm{sign}(\mathbf{b}-\mathbf{Ax}^{k-1}))_{S}\|_2.
\end{equation*}
By eliminating the contribution on $S\cap S^k$, we obtain
\begin{equation}\label{errorestimation.eq2}
  \|(\mathbf{x}^{k-1}+t_{k,0}\mathbf{A}^\top\mathrm{sign}(\mathbf{b}-\mathbf{Ax}^{k-1}))_{S^k\backslash S}\|_2\geq \|(\mathbf{x}^{k-1}+t_{k,0}\mathbf{A}^\top\mathrm{sign}(\mathbf{b}-\mathbf{Ax}^{k-1}))_{S\backslash S^k}\|_2.
\end{equation}
The left-hand side satisfies
\begin{equation*}
  \|(\mathbf{x}^{k-1}+t_{k,0}\mathbf{A}^\top\mathrm{sign}(\mathbf{b}-\mathbf{Ax}^{k-1}))_{S^k\backslash S}\|_2= \|(\mathbf{x}^{k-1}-\mathbf{x}_0+t_{k,0}\mathbf{A}^\top\mathrm{sign}(\mathbf{b}-\mathbf{Ax}^{k-1}))_{S^k\backslash S}\|_2,
\end{equation*}
while the right-hand side satisfies
\begin{equation*}
  \|(\mathbf{x}^{k-1}+t_{k,0}\mathbf{A}^\top\mathrm{sign}(\mathbf{b}-\mathbf{Ax}^{k-1}))_{S\backslash S^k}\|_2\geq \|(\mathbf{x}_0)_{S\backslash S^k}\|_2-\|(\mathbf{x}^{k-1}-\mathbf{x}_0+t_{k,0}\mathbf{A}^\top\mathrm{sign}(\mathbf{b}-\mathbf{Ax}^{k-1}))_{S\backslash S^k}\|_2.
\end{equation*}
Consequently, substituting the two estimations above into the inequality \eqref{errorestimation.eq2},  we get
\begin{align*}
  E_2^{\frac{1}{2}}= \|(\mathbf{x}_0)_{S\backslash S^k}\|_2 &\leq \|(\mathbf{x}^{k-1}-\mathbf{x}_0+t_{k,0}\mathbf{A}^\top\mathrm{sign}(\mathbf{b}-\mathbf{Ax}^{k-1}))_{S\backslash S^k}\|_2 \\
  & \ \ \ +\|(\mathbf{x}^{k-1}-\mathbf{x}_0+t_{k,0}\mathbf{A}^\top\mathrm{sign}(\mathbf{b}-\mathbf{Ax}^{k-1}))_{S^k\backslash S}\|_2  \\
  & \leq \sqrt{2}\|(\mathbf{x}^{k-1}-\mathbf{x}_0+t_{k,0}\mathbf{A}^\top\mathrm{sign}(\mathbf{b}-\mathbf{Ax}^{k-1}))_{S\bigtriangleup S^k}\|_2 \\
  & \leq \sqrt{2}\|(\mathbf{x}^{k-1}-\mathbf{x}_0+t_{k,0}\mathbf{A}^\top\mathrm{sign}(\mathbf{b}-\mathbf{Ax}^{k-1}))_{\Lambda_k}\|_2 \\
  & \leq \sqrt{2\rho_{k,0}}\|\mathbf{x}_0-\mathbf{x}^{k-1}\|_2,
\end{align*}
where the last inequality is from $E_{1}=\|[\mathbf{x}_0-\mathbf{u}^{k,l}-t_{k,l}\mathbf{A}^\top \mathrm{sign}(\mathbf{b}-\mathbf{Au}^{k,l})]_{\Lambda^{k}}\|_2^2$ and Proposition \ref{pro.ContractionOfInnerIteration} with $l= 0$. Thus, we get the estimation for $E_2$. Thus we finish the conclusion. 

\end{proof}

\subsection{Proof of Theorem \ref{thm3.1}}\label{s2.2}
\hskip\parindent

In this subsection, we show the proof of Theorem \ref{thm3.1} in details.
\begin{proof}[Proof of Theorem \ref{thm3.1}]
Notice that it follows from Proposition \ref{pro.ContractionOfInnerIteration} that 
\begin{align}\label{errorestimation.eq3.add1}
\|\mathbf{x}_0-\mathbf{u}^{k,l+1}\|_2^2\leq \rho_{k,l}\|\mathbf{x}_0-\mathbf{u}^{k,l}\|_2^2+2\rho_{k,0}\|\mathbf{x}_0-\mathbf{u}^{k,0}\|_2^2.
\end{align}
Then by conducting immediate induction on $l$ with $\mathbf{u}^{k,0}=: \mathbf{x}^{k-1}$ and $\mathbf{u}^{k,L+1}=: \mathbf{x}^{k}$, we have
\begin{align}\label{errorestimation.eq3}
\|\mathbf{x}_0-\mathbf{x}^{k}\|_2^2&= \|\mathbf{x}_0-\mathbf{u}^{k,L+1}\|_2^2\nonumber\\
&\leq \rho_{k,L}\|\mathbf{x}_0-\mathbf{u}^{k,L}\|_2^2+2\rho_{k,0}\|\mathbf{x}_0-\mathbf{x}^{k-1}\|_2^2\nonumber\\
&\leq\rho_{k,L}\left(\rho_{k,L-1}\|\mathbf{x}_0-\mathbf{u}^{k,L-1}\|_2^2+2\rho_{k,0}\|\mathbf{x}_0-\mathbf{x}^{k-1}\|_2^2\right)+2\rho_{k,0}\|\mathbf{x}_0-\mathbf{x}^{k-1}\|_2^2\nonumber\\
&= \left(\prod_{l=L-1}^L\rho_{k,l}\right)\|\mathbf{x}_0-\mathbf{u}^{k,L-1}\|_2^2+2\rho_{k,0}(1+\rho_{k,L})\|\mathbf{x}_0-\mathbf{x}^{k-1}\|_2^2\nonumber\\
&\leq\ldots\ldots\nonumber\\
&\leq \left(\prod_{l=0}^L\rho_{k,l}\right)\|\mathbf{x}_0-\mathbf{u}^{k,0}\|_2^2+2\rho_{k,0}\left(1+\sum_{i=1}^{L}\prod_{l=i}^L\rho_{k,l}\right)\|\mathbf{x}_0-\mathbf{x}^{k-1}\|_2^2.
\end{align}
In summary, due to $\rho_{k}= \max_{l}\rho_{k,l}$, we obtain
\begin{align*}
\|\mathbf{x}_0-\mathbf{x}^{k}\|_2^2 
&\leq \left(\prod_{l=0}^L\rho_{k}\right)\|\mathbf{x}_0-\mathbf{x}^{k-1}\|_2^2+2\rho_{k}\left(1+\sum_{i=1}^{L}\prod_{l=i}^L\rho_{k}\right)\|\mathbf{x}_0-\mathbf{x}^{k-1}\|_2^2\nonumber\\
&=\left(\rho_{k}^{L+1}+2\rho_{k}(\rho_{k}^{L}+\rho_{k}^{L-1}+\ldots+\rho_{k}+1)\right)\|\mathbf{x}_0-\mathbf{x}^{k-1}\|_2^2\nonumber\\
&=\left(\rho_{k}^{L+1}+2\rho_{k}\frac{1-\rho_{k}^{L+1}}{1-\rho_{k}}\right)\|\mathbf{x}_0-\mathbf{x}^{k-1}\|_2^2\nonumber\\
&= \left(\frac{\rho_{k}^{L+1}(1-3\rho_{k})+2\rho_{k}}{1-\rho_{k}}\right)\|\mathbf{x}_0-\mathbf{x}^{k-1}\|_2^2.
\end{align*}
Here, the condition \eqref{GFHTP_condition} implies that $\frac{\rho_{k}^{L+1}(1-3\rho_{k})+2\rho_{k}}{1-\rho_{k}}<1$. 
\end{proof}

\section{Proof for Structured Sparse Signals}\label{secappendixb}
\subsection{Proof of Proposition \ref{prop.IndificationSupportSet} }
\begin{proof}
Firstly, we claim that 
\begin{align}\label{StructureSignal.eq3}
 \|\mathbf{x}^{k-1}-\mathbf{x}_0\|_2^2 
  & \leq 2(\beta_{k}+\tau^2(\Phi^{-1}+\epsilon)^2(1+\delta_{s})^2\mu_{k,0}^2)\|\mathbf{x}_0-\mathbf{x}^{k-1}\|_2^2+\|(\mathbf{x}_0)_{(S^{k-1})^c}\|_2^2 \nonumber\\
  & =: 2\beta'_{k}\|\mathbf{x}^{k-1}-\mathbf{x}_0\|_2^2+\|(\mathbf{x}_0)_{(S^{k-1})^c}\|_2^2,
\end{align}
where $\beta_k= 1+\tau^2(\Phi^{-1}+\epsilon)^2(1+\delta_{s})^2\mu_{k,0}^2- 2c\sqrt{\frac{2}{\pi}}\left(\tau-\frac{|T_1^{k,0}|}{m}\right)(1-\delta_{s})\mu_{k,0}$. And the upper bound of $\xi_k$ satisfies
\begin{align}\label{Estimation.xi}
  \xi_k = & t_{k,0}\max_{\ell\in S^c}|\mathbf{A}^\top\mathrm{sign}(\mathbf{b-Ax}^{k-1}))_{\ell}| \nonumber\\
  \leq & \mu_{k,0}\sqrt{\frac{\pi}{2}}\tau(\Phi^{-1}+\epsilon)\|\mathbf{x}^{k-1}-\mathbf{x}_0\|_2\max_{\ell\in S^c}|\mathbf{A}^\top\mathrm{sign}(\mathbf{b-Ax}^{k-1}))_{\ell}|,
\end{align}
and  the lower bound of $\zeta_k$ is 
\begin{align}\label{Estimation.zeta}
  \zeta_k
  &\geq \frac{1}{\sqrt{s-k+1}}\left(\frac{\sqrt{1-2\beta'_{k}}}{\lambda}-\sqrt{\beta_{k}}\right)\|\mathbf{x}^{k-1}-\mathbf{x}_0\|_2.
\end{align}

By the combination of estimations \eqref{Estimation.xi} and \eqref{Estimation.zeta}, we have
\begin{align*}
  & \mathbb{P}(\xi_k\geq \zeta_k) \\
  & \leq \mathbb{P}\left(\mu_{k,0}\sqrt{\frac{\pi}{2}}\tau(\Phi^{-1}+\epsilon)\max_{\ell\in S^c}|(\mathbf{A}^\top\mathrm{sign}(\mathbf{b-Ax}^{k-1}))_{\ell}|\geq \frac{1}{\sqrt{s-k+1}}\left(\frac{\sqrt{1-2\beta'_{k}}}{\lambda}-\sqrt{\beta_{k}}\right)\right)\\
  & = \mathbb{P}\left(\max_{\ell\in S^c}|\langle\tilde{\mathbf{a}}_{\ell},\mathrm{sign}(\mathbf{b-Ax}^{k-1})\rangle|\geq \frac{\sqrt{\frac{2}{\pi}}}{\sqrt{s-k+1}\mu_{k,0}\tau(\Phi^{-1}+\epsilon)}\left(\frac{\sqrt{1-2\beta'_{k}}}{\lambda}-\sqrt{\beta_{k}}\right)\right)\\
  &\leq \mathbb{P}\left(\max_{\ell\in S^c}|\langle\tilde{\mathbf{a}}_{\ell},\mathrm{sign}(\mathbf{b-Ax}^{k-1})\rangle|\geq \frac{\sqrt{\frac{2}{\pi}}}{\sqrt{s}\mu_{k,0}\tau(\Phi^{-1}+\epsilon)}\left(\frac{\sqrt{1-2\beta'_{k}}}{\lambda}-\sqrt{\beta_{k}}\right)\right)\\
  & =: \mathbb{P}\left(\max_{\ell\in S^c}|\langle\tilde{\mathbf{a}}_{\ell},\mathrm{sign}(\mathbf{b-Ax}^{k-1})\rangle|\geq\frac{\gamma_{k}}{\sqrt{s}}\right),
\end{align*}
where $\mathbf{A}=[\tilde{\mathbf{a}}_1, \cdots, \tilde{\mathbf{a}}_n]$.
Note that for the sum of these random variables $S_m=X_1+\ldots+X_m$ with independent and bounded random variables 
 $X_j\in[a_j,b_j]$ for $j=1,\cdots,m$, the Hoeffding's inequality \cite{Hoeffding1963} states that for all $t>0$, $\mathbb{P}(|S_m-\mathbb{E}(S_m)|\geq t)\leq 2\exp(-2t^2/\sum_{j=1}^{m}(b_j-a_j)^2)$. Thus, from $\mathbb{E}\langle\tilde{\mathbf{a}}_{\ell},\mathrm{sign}(\mathbf{b-Ax}^{k-1})\rangle=0$, we can conclude that
\begin{align*}
  \mathbb{P}(\xi_k\geq \zeta_k) & \leq \sum_{\ell\in S^c}\mathbb{P}\left(|\langle\tilde{\mathbf{a}}_{\ell},\mathrm{sign}(\mathbf{b-Ax}^{k-1})\rangle|\geq
  \frac{\gamma_{k}}{\sqrt{s}}\right) \\
  &\leq 2(n-s)\exp\left(-\frac{\gamma_{k}^2}{2s(\max_{\ell}\|\tilde{\mathbf{a}}_{\ell}\|_2^2)}\right) \\
  &\leq 2(n-s)\exp\left(-\frac{\gamma_{k}^2}{2s\left(\frac{1}{m}\sqrt{m+2\sqrt{m\log m}}\right)^2}\right)\\
  &\leq 2(n-s)\exp\left(-\frac{\gamma_{k}^2m}{6s}\right),
\end{align*}
where the third inequality is derived from Lemma \ref{lem1.3}.

To conclude, it is sufficient to prove the three estimations \eqref{StructureSignal.eq3}, \eqref{Estimation.xi} and \eqref{Estimation.zeta}.
\begin{itemize}
\item[(i)] Proof of the Inequality \eqref{StructureSignal.eq3}. 
Using the triangle inequality and the RIP$_1$ condition, we obtain
\begin{align}\label{StructureSignal.eq2}
  \|\mathbf{x}^{k-1}-\mathbf{x}_0\|_2^2 & = \|(\mathbf{x}^{k-1}-\mathbf{x}_0)_{S^{k-1}}\|_2^2+\|(\mathbf{x}^{k-1}-\mathbf{x}_0)_{(S^{k-1})^c}\|_2^2 \nonumber\\
  & \leq 2\|(\mathbf{x}^{k-1}-\mathbf{x}_0+t_{k,0}\mathbf{A}^\top\mathrm{sign}(\mathbf{b-Ax}^{k-1}))_{S^{k-1}}\|_2^2 \nonumber\\
  & \ \ + 2t_{k,0}^2\|(\mathbf{A}^\top\mathrm{sign}(\mathbf{b-Ax}^{k-1}))_{S^{k-1}}\|_2^2+\|(\mathbf{x}_0)_{(S^{k-1})^c}\|_2^2 \nonumber\\
  & \leq 2\|(\mathbf{x}^{k-1}-\mathbf{x}_0+t_{k,0}\mathbf{A}^\top\mathrm{sign}(\mathbf{b-Ax}^{k-1}))_{S}\|_2^2 \nonumber\\
  & \ \ + 2t_{k,0}^2\|(\mathbf{A}^\top\mathrm{sign}(\mathbf{b-Ax}^{k-1}))_{S}\|_2^2+\|(\mathbf{x}_0)_{(S^{k-1})^c}\|_2^2,
\end{align}
where the second inequality comes from $S^{k-1}\subseteq S$. Notice that by the estimation \eqref{F1} and  Proposition \ref{prop1}, one has
\begin{align}\label{StructureSignal.eq5}
&t_{k,0}^2\|(\mathbf{A}^\top\mathrm{sign}(\mathbf{b-Ax}^{k-1}))_{S}\|_2^2
=: t_{k,0}^2G_1\nonumber\\
&\leq \frac{2}{\pi}(1+\delta_{s})^2 \left(\mu_{k,0}\sqrt{\frac{\pi}{2}}\|(\mathbf{b-Ax}^{k-1})\odot (\mathbb{I}_{\{|b_i-(\mathbf{Ax}^{k-1})_i|\leq \theta_\tau(|\mathbf{b-Ax}^{k-1}|)\}})_{i=1}^m\|_1\right)^2
 \nonumber\\
&\leq \tau^2(\Phi^{-1}+\epsilon)^2(1+\delta_{s})^2\mu_{k,0}^2 \|\mathbf{x}_0-\mathbf{x}^{k-1}\|_2^2.
\end{align}

Given that $\zeta_{k-1}>\xi_{k-1}$ implies $S^{k-1}\subseteq S$, we can further get
\begin{align}\label{Decomposition}
 & \|(\mathbf{x}^{k-1}-\mathbf{x}_0+t_{k,0}\mathbf{A}^\top\mathrm{sign}(\mathbf{b-Ax}^{k-1}))_{S}\|_2^2 \nonumber\\
  = & \|\mathbf{x}^{k-1}-\mathbf{x}_0\|_2^2+t_{k,0}^2\|(\mathbf{A}^\top\mathrm{sign}(\mathbf{b-Ax}^{k-1}))_{S}\|_2^2-2t_{k,0}\langle\mathbf{x}_0-\mathbf{x}^{k-1}, (\mathbf{A}^\top\mathrm{sign}(\mathbf{b-Ax}^{k-1}))_{S}\rangle \nonumber\\
  =: & \|\mathbf{x}^{k-1}-\mathbf{x}_0\|_2^2+t_{k,0}^2G_1-2t_{k,0}G_2.
\end{align}
Consequently, leveraging a proof strategy analogous to that employed in Proposition \ref{pro.ContractionOfInnerIteration}, we can straightly derive the following desired result
\begin{align}\label{StructureSignal.eq1}
& \|(\mathbf{x}^{k-1}-\mathbf{x}_0+t_{k,0}\mathbf{A}^\top\mathrm{sign}(\mathbf{b-Ax}^{k-1}))_{S}|\|_2^2 \nonumber\\
\leq & \left(1+\tau^2(\Phi^{-1}+\epsilon)^2(1+\delta_{s})^2\mu_{k,0}^2- 2c\sqrt{\frac{2}{\pi}}\left(\tau-\frac{|T_1^{k,0}|}{m}\right)(1-\delta_{s})\mu_{k,0}\right)\|\mathbf{x}_0-\mathbf{x}^{k-1}\|_2^2 \nonumber\\
=: & \beta_{k}\|\mathbf{x}^{k-1}-\mathbf{x}_0\|_2^2,
\end{align}
where $c= (2-2p)(1-\delta_s)-(1+\delta_s)$. Thus by substituting \eqref{StructureSignal.eq5} and \eqref{StructureSignal.eq1} into \eqref{StructureSignal.eq2}, we get the estimation \eqref{StructureSignal.eq3}.

\item[(ii)] Proof of the Inequality \eqref{Estimation.xi}. It is clear that the estimation  \eqref{Estimation.xi} comes from Proposition \ref{prop1}.  

\item[(iii)] Proof of the Inequality \eqref{Estimation.zeta}. To prove \eqref{Estimation.zeta},  we first introduce an index set $T^{s-k+1}$, which corresponds to the $s-(k-1)$ smallest values of $|(\mathbf{x}^{k-1}+t_{k,0}\mathbf{A}^\top\mathrm{sign}(\mathbf{b-Ax}^{k-1}))_j|$ for $j$ in $S$. This allows us to derive a lower bound for $\zeta_k$: 
\begin{align*}
  \zeta_k\geq & \frac{1}{\sqrt{s-k+1}}\|(\mathbf{x}^{k-1}+t_{k,0}\mathbf{A}^\top\mathrm{sign}(\mathbf{b-Ax}^{k-1}))_{T^{s-k+1}}\|_2 \\
  \geq & \frac{1}{\sqrt{s-k+1}}\left(\|(\mathbf{x}_0)_{T^{s-k+1}}\|_2-\|(\mathbf{x}^{k-1}-\mathbf{x}_0+t_{k,0}\mathbf{A}^\top\mathrm{sign}(\mathbf{b-Ax}^{k-1}))_{T^{s-k+1}}\|_2\right).
\end{align*}
Since $\|(\mathbf{x}^{k-1}-\mathbf{x}_0+t_{k,0}\mathbf{A}^\top\mathrm{sign}(\mathbf{b-Ax}^{k-1}))_{T^{s-k+1}}\|_2\leq \|(\mathbf{x}^{k-1}-\mathbf{x}_0+t_{k,0}\mathbf{A}^\top\mathrm{sign}(\mathbf{b-Ax}^{k-1}))_{S}\|_2$ $(T^{s-k+1}\subseteq S)$ and the estimation \eqref{StructureSignal.eq1}, we derive that 
\begin{align*}
  \zeta_k &\geq \frac{1}{\sqrt{s-k+1}}\left(\|(\mathbf{x}_0)_{T^{s-k+1}}\|_2-\sqrt{\beta_{k}}\|\mathbf{x}^{k-1}-\mathbf{x}_0\|_2\right).
\end{align*}

Notice that \eqref{StructureSignal.eq3} gives a lower bound of $\|(\mathbf{x}_0)_{(S^{k-1})^c}\|_2$, which depends on $\|\mathbf{x}^{k-1}-\mathbf{x}_0\|_2$.
Given that $\beta'_{k}<1/2$, we can establish the relationship of $\|\mathbf{x}^{k-1}-\mathbf{x}_0\|_2$ and $\|(\mathbf{x}_0)_{T^{s-k+1}}\|_2$ as follows
\begin{align*}
  \|\mathbf{x}^{k-1}-\mathbf{x}_0\|_2 & \leq \frac{1}{\sqrt{1-2\beta'_{k}}}\|(\mathbf{x}_0)_{(S^{k-1})^c}\|_2 \leq\frac{\sqrt{s-k+1}}{\sqrt{1-2\beta'_{k}}}x_1^* \\
   & \leq \frac{\sqrt{s-k+1}}{\sqrt{1-2\beta'_{k}}}\lambda x_s^* \leq \frac{\lambda}{\sqrt{1-2\beta'_{k}}} \|(\mathbf{x}_0)_{T^{s-k+1}}\|_2
\end{align*}
with the final inequality stemming from the fact that $\sqrt{s-k+1}x_s^*\leq \|(\mathbf{x}_0)_{T^{s-k+1}}\|_2$ and $T^{s-k+1}\subset S$. This implies that 
\begin{equation}
\|(\mathbf{x}_0)_{T^{s-k+1}}\|_2\geq \frac{\sqrt{1-2\beta'_{k}}}{\lambda}\|\mathbf{x}^{k-1}-\mathbf{x}_0\|_2.
\end{equation}
This allows us to deduce the estimation \eqref{Estimation.zeta}. Thus, we complete the proof.
\end{itemize}

\end{proof}

\subsection{ Proof of Theorem \ref{propGFHTP}}
\hskip\parindent

To be end this section, we give the proof of Theorem \ref{propGFHTP}.
\begin{proof}[Proof of Theorem \ref{propGFHTP}]
We define the set $S$ as the support of $\mathbf{x}_0$.
Recall that $\zeta_k$ represents the $k$-th largest value of the elements in the subset $S$ of $|(\mathbf{x}^{k-1}+t_{k,0}\mathbf{A}^\top\mathrm{sign}(\mathbf{b-Ax}^{k-1}))_j|$, while $\xi_k$ denotes the largest value of the elements in the complementary subset $S^c$. Our objective is to establish that with high probability, $S^k\subseteq S$ for each $k\in [[s]]$. This is implied by the condition $\zeta_k>\xi_k$ for all $k\in [[s]]$. The probability of the event not holding is bounded by the following expression:
\begin{align*}
  P:= & \mathbb{P}(\exists k\in [[s]]: \xi_k\geq \zeta_k \ and \ (\zeta_{k-1}>\xi_{k-1}, \cdots, \zeta_1>\xi_1)) \nonumber\\
  \leq & \mathbb{P}\left(\left|\frac{\sqrt{\frac{\pi}{2}}\|\mathbf{Ax}\|_1}{\|\mathbf{x}\|_2}-1\right|>\delta_s \ for \ some \ s \ sparse \ \mathbf{x} \right)\nonumber \\
   &+ \sum_{k=1}^{s}\mathbb{P}\left(\xi_k\geq \zeta_k, (\zeta_{k-1}>\xi_{k-1}, \cdots, \zeta_1>\xi_1), \left(\left|\frac{\sqrt{\frac{\pi}{2}}\|\mathbf{Ax}\|_1}{\|\mathbf{x}\|_2}-1\right|\leq\delta_s \ for \ all \ s \ sparse \ \mathbf{x}\right)\right) \nonumber \\
   =:& P_1+P_2, \nonumber
\end{align*}
where the inequality comes from the probability formulation $\mathbb{P}(A)=\mathbb{P}(AB)+\mathbb{P}(A\bar{B})\leq \mathbb{P}(AB)+\mathbb{P}(\bar{B})$.
Referencing Lemma \ref{RIP}, we find that $P_1$ is constrained by $2\exp(-\frac{\delta_s^2}{16}m)$. 

Now we shift our attention to $P_2$ and employ the notation $\mathbb{P}(E)$ to denote the conditional probability of event $E$, given the intersection of two sets of conditions. The first set consists of the conditions
$(\zeta_{k-1}>\xi_{k-1}, \cdots, \zeta_1>\xi_1)$, and the second set involves the event that the RIP$_1$ holds for all $s$-sparse vectors $\mathbf{x}$. We assume that these events are valid for the purpose of this analysis. 

By Proposition \ref{prop.IndificationSupportSet},  we get the estimation for $P_2$ as follows
$$
P_2\leq 2s(n-s)\exp\left(-\frac{\gamma_{k-1}^2m}{6s}\right).
$$

Combining these results, the overall failure probability $P$ is bounded by
\begin{align*}
  P & \leq 2\exp\left(-\frac{\delta_s^2}{16}m\right)+2s(n-s)\exp\left(-\frac{\gamma_{k-1}^2m}{6s}\right).
\end{align*}
In particular, we observe that $\zeta_k>\xi_k$, which implies $S^s= S$, with a failure probability not exceeding $P$. Furthermore, since $|\frac{\sqrt{\frac{\pi}{2}}\|\mathbf{Ax}\|_1}{\|\mathbf{x}\|_2}-1|\leq\delta_s$ with a failure probability of $2\exp(-\frac{\delta_s^2}{16}m)$, we also find that
\begin{align*}
  \|\mathbf{x}_0-\mathbf{x}^s\|_2^2& = \|(\mathbf{x}^{s}-\mathbf{x}_0)_{S^s}\|_2^2+\|(\mathbf{x}^{s}-\mathbf{x}_0)_{(S^s)^c}\|_2^2 \nonumber\\
  &\leq 2\|(\mathbf{x}^{s}-\mathbf{x}_0+t_{s+1,0}\mathbf{A}^\top\mathrm{sign}(\mathbf{b-Ax}^{s}))_{S^{s}}\|_2^2 \\
  & \ \ + 2t_{s+1,0}^2\|\mathbf{A}^\top\mathrm{sign}(\mathbf{b-Ax}^{s}))_{S^{s}}\|_2^2+ \|(\mathbf{x}_0)_{S^c}\|_2^2\\
  &\leq 2\beta'_{s+1} \|\mathbf{x}_0-\mathbf{x}^s\|_2^2+ \|(\mathbf{x}_0)_{S^c}\|_2^2\\
  &<\|\mathbf{x}_0-\mathbf{x}^s\|_2^2,
\end{align*}
where the second inequality is derived from \eqref{StructureSignal.eq3} when $k=s+1$, and the last inequality is from $(\mathbf{x}_0)_{S^c}=\mathbf{0}$ and $\beta'_{s+1}<1/2$.
Thus, we obtain $\mathbf{x}^s= \mathbf{x}_0$. The resulting failure probability is constrained by
\begin{align*}
 & 4\exp\left(-\frac{\delta_s^2}{16}m\right)+2s(n-s)\exp\left(-\frac{\gamma_{k-1}^2m}{6s}\right)\\
 &\leq n\exp\left(-\frac{\delta_s^2}{16}m\right)+n^2\exp\left(-\frac{\gamma_{k-1}^2m}{6s}\right)\\
 &\leq n^2\exp\left(-c'\frac{m}{s}\right)\leq n^{-c''}.
\end{align*}
This bound holds with an appropriate selection of the constant $c'_1$ in the condition $m\geq c'_1s\log n$. Thus we finish the conclusion. 
\end{proof}

\bibliographystyle{plain}
\bibliography{GFHTP_reference}

\end{document}